\documentclass[10pt]{article}
\usepackage{dcolumn}
\usepackage{bm}
\usepackage{hyperref}
\usepackage{verbatim}       

\usepackage{color}
\usepackage{graphicx}

\usepackage{amsthm}
\usepackage{amssymb}

\usepackage{amstext}
\usepackage{psfrag}
\usepackage{amsmath}
\usepackage{amsfonts}
\usepackage{units}
\usepackage{mathrsfs}

\newcommand{\p}{\partial}

\usepackage{trimclip}

\makeatletter
\newcommand{\twoheaddownarrow}{\mathrel{\mathpalette\twoheaddownarrow@\relax}}
\newcommand{\twoheaddownarrow@}[2]{%
  \begingroup
  \sbox\z@{$\m@th#1\downarrow$}%
  \vphantom{\copy\z@}
  \ooalign{\copy\z@\cr\hidewidth\clipdownarrow@{#1}\hidewidth\cr}%
  \endgroup
}
\newcommand{\clipdownarrow@}[1]{%
  \raisebox{0.25\ht\z@}{\clipbox{-3pt 0pt -3pt {0.5\height}}{\copy\z@}}%
}
\newcommand{\twoheaduparrow}{\mathrel{\mathpalette\twoheaduparrow@\relax}}
\newcommand{\twoheaduparrow@}[2]{%
  \begingroup
  \sbox\z@{$\m@th#1\uparrow$}%
  \vphantom{\copy\z@}
  \ooalign{\copy\z@\cr\hidewidth\clipuparrow@{#1}\hidewidth\cr}%
  \endgroup
}
\newcommand{\clipuparrow@}[1]{%
  \raisebox{0.1\ht\z@}{\clipbox{-3pt {0.65\height} -3pt 0pt}{\copy\z@}}%
}
\makeatother
\newtheorem{theorem}{Theorem}[section]
\newtheorem{corollary}[theorem]{Corollary}
\newtheorem{lemma}[theorem]{Lemma}
\newtheorem{proposition}[theorem]{Proposition}
\theoremstyle{definition}
\newtheorem{definition}[theorem]{Definition}
\theoremstyle{remark}
\newtheorem{remark}[theorem]{Remark}

\newtheorem*{claim*}{Claim}  

\begin{document}

\title{Spacetimes via Continuous posets: \\ Old foundations and new developments}


\author{E. Minguzzi\footnote{Dipartimento di Matematica, Universit\`a degli Studi di Pisa,  Via
B. Pontecorvo 5,  I-56127 Pisa, Italy. E-mail:
ettore.minguzzi@unipi.it, ORCID:0000-0002-8293-3802}}

\date{}
\maketitle%

\begin{abstract}
We introduce the reader to the continuous posets approach to spacetime geometry, a topic pioneered by Martin and Panangaden.  We provide all relevant proofs from standard domain-theory sources to ease the transition for readers from Lorentzian geometry and general relativity. We then present new results for the spacetime interpretation. Under the identification of the way-below relation with the chronological relation \(I\), we determine the unique choice for \(\le\) and causality condition to obtain specific poset properties. Continuous posets require \(\le\,=D_p\) and correspond to "past-distinction and future-reflectivity"; bicontinuous posets are causally continuous spacetimes with \(\le\,=D\); globally hyperbolic posets are precisely globally hyperbolic spacetimes with \(\le\,=J\) proving a necessity result beyond Martin and Panangaden's sufficiency.
The theory, being phrased entirely in terms of a poset, is of low-regularity. We introduce a notion of topological Kronheimer–Penrose causal space that is sufficiently general to encompass the Lorentzian length spaces in the literature, and give weak conditions making it a (bi)continuous poset.
Once a spacetime is a continuous poset, domain-theoretic constructions apply directly, e.g., completion schemes yield spacetime boundaries.   We recall a few; the most natural for preserving continuity, Lawson's round-ideal completion, is proved to equal the  future GKP completion. The directed completion recently studied by Gigli et al.\ was also proved to be equivalent to the future GKP completion; however  with additional SC condition and forward approximation which we are able to remove.
Finally, physical considerations on the recently established key role of past-reflectivity in black hole evaporation lead us to suggest that the spacetime is, at the fundamental level, a co-continuous poset.
\end{abstract}


\newpage
\setcounter{tocdepth}{3}

\tableofcontents

\section{Introduction}
Twenty years ago Martin and Panangaden interpreted globally hyperbolic spacetimes $(M,g)$ in the framework of domain theory \cite{martin06} by showing that they are bicontinuous posets $(M,\le)$. The present paper has three parts and purposes.
\begin{enumerate}
\item Firstly, it is an introduction to continuous poset theory aimed at bridging the gap for reader in Lorentzian geometry interested in understanding the continuous poset interpretation. In fact, while the contribution by Martin and Panangaden was highly regarded, it received perhaps less attention and follow up works than deserved (see however \cite{keimel09b,ebrahimi14,ebrahimi15,finster21,sharifzadeh19,mazibuko23}). In part this could be attributed to the dry and synthetic presentation of the necessary elements of domain theory. The construction of the way-below relation $\ll$ may seem, at first encounter, somewhat abstract. Of course its study will definitely pay off but its beauty can remain hidden on first approach. In order to ease the passage,   we follow \cite{martin06} but where they just listed the needed definitions and properties from domain theory, we provide complete proofs. Hopefully, reading this work will help the reader in more easily building understanding and intuition of the  mathematics. The main source for continuous poset theory will be the book by Gierz et al. \cite{gierz03} (particularly, Sec.\ O-1, O-2, O-5. I-1). We recall that with respect to these sources our subset symbol is reflexive, $S\subset S$ (we do not use $\subset$).

\item Secondly, it is aimed at developing the theory applied to regular spacetimes $(M,g)$ starting from where Martin and Panangaden left it. 

They formulated their whole theory  under the assumption that the poset of interest on spacetime is $(M,J)$ where $J$ is the causal relation, for they restricted the whole of their analysis to this case. The experienced relativists knows that, in fact, there are some other important relations which contain $J$, and so, it would be natural to consider them as candidates for $\le$. Indeed, posterior authors realized that the theory could be applied to different posets, for instance, given the importance of the $K$ relation for spacetime (the smallest closed and transitive relation containing $J$, c.f.\ \cite{sorkin96,minguzzi18b})  Neda Ebrahimi \cite{ebrahimi14,ebrahimi15} applied the theory to the poset $(M,K)$, $\le \, =K$. However, having to change the identification of $\le$, it seems more natural to search for the causal relation to identify with $\le$ in order to obtain as way-below relation the chronological order: $\ll\,=I$. This is one of the  problems we consider in this work. It seems extremely natural as the way-below relation is used to assign the poset with a topology, the interval topology, similarly to what one does in Lorentzian geometry with the Alexandrov interval topology. We completely solve this problem.

As it turns out, the answer identifies $\le$ with the $D$-relations $D_p$, $D_f$ and $D=D_p\cap D_f$ previously introduced and studied by the author \cite{minguzzi07b,minguzzi07b,heveling21,minguzzi22}\cite[Sec.\ 4.1]{minguzzi18b}. That is, continuous posets demand the choice
\[
\le\,=D_p=\{(x,y): I^-(x)\subset I^-(y)\}
\]
and are associated with the peculiar combination of causality conditions ``past-distinction and future-reflectivity''. Since the pair $(M,D_p)$ is a poset $D_p$ must be antisymmetric, a condition equivalent to past-reflectivity.

Dually, co-continuous posets demand the choice
\[
\le\,=D_f=\{(x,y): I^+(x)\supset I^+(y)\}
\]
 and are associated with the peculiar combination of causality conditions ``future-distinction and past-reflectivity''.

Bicontinuous posets are causally continuous spacetimes and $\le\,=D$.

 Finally, the globally hyperbolic posets are precisely the globally hyperbolic spacetimes with $\le =J$, that is, we prove a necessity result beyond the sufficiency result by Martin and Panangaden.

If one is interested in the continuity of the poset $(M,J)$ or $(M,K)$ the causality condition $J=D_p$ (resp.\ $K=D_p$) must be applied which often simplifies some other imposed conditions. Similarly, for the property of co-continuity or bicontinuity.
\item Thirdly, it aims to  apply the continuous poset theory to spacetimes of low regularity. Understanding spacetimes under low regularity is fundamental for the transition to quantum theories of spacetime and gravity.   The continuous poset theory is a very natural, perhaps optimal, framework of low regularity as the manifold structure is completely removed and the topology and way-below order $\ll$ are deduced from $\le$. If we could frame the causal structure of low regularity spacetimes within this theory we could readily import many results from domain theory to it. We shall see one such application with the notion of completion and spacetime boundary.

    While the theory of continuous posets is well defined, here the uncertainty is more on the spacetime side. It is less clear what a low regularity spacetime should be, and a few proposal have been advanced. In order to include all of them in the analysis we start with a pretty general causal structure which we call {\em topological Kronheimer and Penrose causal space} (topological KP-space for short). It is shared by the Lorentzian length spaces of Kunzinger-S\"amann \cite{kunzinger18,grant18,burtscher25,beran25} and the Lorentzian length spaces of Bykov-Minguzzi-Suhr  \cite{minguzzi22,minguzzi24b,minguzzi26}. This is not entirely surprising as these models differ mostly for the role and property of the function $d$ than for the causality they induce.

    These topological KP-spaces are purposely left with minimal properties, even topologically. We then proceed adding properties to them in order to recover the correspondences between causality properties and poset continuity properties already found in the regular theory. It will be necessary to impose a topological lower approximation $x\in \overline{I^-(x)}$ to obtain the continuity property. This is very nice since the latter reads $x=\sup \{y: y \ll x\}$, that is, there is correspondence between topological approximation on the spacetime side and order theoretical approximation on the poset side. It is also very satisfying that we are able to obtain this result without imposing also upper approximation $x\in \overline{I^+(x)}$.

    With these strong results at hand we are able to prove that for both types of Lorentzian length spaces mentioned, lower topological approximation past-distinction and future reflectivity imply the continuity poset property of $(X,D_p)$ and the identification $\ll\,=I$. Two sided approximation and causal continuity implies the bicontinuity property.

    Once these properties are settled we can import results from domain theory. A natural application is that of completions. There are a few schemes and fortunately there is one that preserves continuity: the Lawson round-ideal completion. With little effort, given the previously developed correspondence, we show that the round-ideal completion coincides with Geroch, Kronheimer and Penrose future completion.
\end{enumerate}

\begin{remark}
Approximately twelve days ago, while our work was being finalized, a paper by Gigli et al. \cite{gigli26} appeared.   In that paper, the authors show that, under past-distinction, two-sided topological approximation, and an (SC) condition, their directed completion \cite{markowsky76,zhao10,gigli25}, when applied to Lorentzian pre-length spaces, coincides with the Geroch–Kronheimer–Penrose future completion.

Their work passes though some propositions on the existence of suprema of directed set that are also to be found in the present work, Prop.\ \ref{mmtb} or part of Prop.\ \ref{vpoj}, (they use the  notation $\preceq_-$ for $D_p$) the reason is that basically, they need for their completion some of the elements we need for establishing continuity. Although the motivations were different we passed from similar concepts. Nevertheless, closer inspection shows that our Prop.\ \ref{vpoj} is stronger than comparing results.
Indeed, they miss the important Eq.\ (\ref{nnnn}) and obtain their main  result by imposing forward approximation (which we do not need) and a  global (SC) property:
\[
I^-(D)=I^-(\sup D)
\]
for every directed set $D$ with supremum. They show that (SC) is implied by global hyperbolicity or forward  completeness, two properties that are quite strong.

 This (SC) property is what we prove in our Prop.\ \ref{vpoj} under minimal  assumptions  included in the definition of Lorentzian length space (including local causal closure and causal path connectedness, see Thm. \ref{kcct} for details; we do not need future reflectivity). This finishes in establishing that  Lorentzian length spaces that are approximated from below can be completed with the directed completion and that  such completion coincides with the GKP future completion. In fact, one can see that imposing (SC) makes the directed completion equivalent to the ideal completion.
As noted above we had already proved, under future reflectivity, that the round-ideal completion coincides with  the GKP future completion. In summary, under future reflectivity we have four future completions, all of them coinciding for a large class of spaces.
This robustness of the GKP future causal boundary might have been  hinted at by work by Harris \cite{harris98}.
\end{remark}

\medskip
The paper is organized as follows. Sec.\ \ref{cpto} is a self-contained introduction to continuous poset theory, closely following Gierz et al.\ \cite{gierz03} but with complete proofs: posets, ideals and the Scott topology; the way-below relation $\ll$ and its basic calculus (Thm.\ \ref{kctt}); continuous and bicontinuous posets and the interpolation property (Thm.\ \ref{oovg}, together with the finite-set strengthening of Thm.\ \ref{tncw}, which might  be new); the Lawson and interval topologies; and, finally, globally hyperbolic posets, for which we show that every upper-bounded directed set converges to its supremum (Thm.\ \ref{oftr}), strengthening the corresponding result of Martin and Panangaden \cite{martin06}.

Sec.\ \ref{qcft} specializes this machinery to a regular spacetime $(M,g)$. We show that $\le\,=D_p$ is the necessary and sufficient choice for $(M,\le)$ to be a continuous poset with way-below relation $I$, precisely under past-distinction and future-reflectivity (Thm.\ \ref{kssw}); the corresponding statement for $\le\,=J$ requires in addition the closure of the causal pasts. We then identify causally continuous spacetimes with bicontinuous posets $(M,D)$ (Thm.\ \ref{bxxd}), and, as a corollary, causally simple spacetimes with bicontinuous posets $(M,J)$.
The section closes by identifying globally hyperbolic spacetimes with globally hyperbolic posets $(M,J)$, adding a converse to the corresponding result of \cite{martin06}.

Sec.\ \ref{rjgh} lifts the entire correspondence to low regularity. We introduce the notion of a topological KP-space, general enough to encompass both the Lorentzian length spaces of Kunzinger and S\"amann \cite{kunzinger18} and the Lorentzian metric/length spaces of Bykov, Suhr and the author \cite{minguzzi22,minguzzi24b,minguzzi26}, and show that a topological lower-approximation condition, together with a mild local reflectivity hypothesis, always forces the chronological relation to sit inside the way-below relation, with equality under future-reflectivity (Thm.\ \ref{btop}); continuity and bicontinuity of the resulting poset follow (Thm.\ \ref{ktsw}, Thm.\ \ref{cpkl}). Secs.\ \ref{mwcf} and \ref{ksll} specialize these results, respectively, to BMS-Lorentzian metric/length spaces and to KS-Lorentzian pre-length spaces. The last subsection identifies, in this general setting, indecomposable past sets, chronological pasts of countable timelike chains or of directed sets, Scott-open ideals, and round-ideals as one and the same family (Thm.\ \ref{ipeqv}), the poset-theoretic counterpart of the classical theorem of Geroch, Kronheimer and Penrose.

Sec.\ \ref{cmpl}, finally, turns to completions. After recalling the (plain) ideal completion and the directed completion recently introduced by Gigli \cite{gigli25} and applied to spacetimes in \cite{gigli26}, we show that their supremum-compatibility condition (SC) follows, without any assumption of future-reflectivity or global completeness, from the same local hypotheses used to establish continuity in Sec.\ \ref{rjgh} (Prop.\ \ref{scgigli}), giving a route to their coincidence result independent of theirs (Cor.\ \ref{kcvt}). We then turn to the round-ideal completion of Lawson and Keimel--Lawson \cite{lawson97,keimel09}, the continuity-preserving refinement of the ideal completion, and mention that it also embeds the way-below relation itself, not merely the order (Prop.\ \ref{roundidealprop}). The main result of the section identifies the round-ideal completion of $(M,D_p)$ with the future causal boundary of Geroch, Kronheimer and Penrose \cite{geroch72}, the added ideal points being exactly the terminal indecomposable past sets (Thm.\ \ref{ipmatch}). We close with a discussion of the open problem of recovering the full, two-sided GKP causal boundary from a suitable ``bicompletion'' of bicontinuous posets.
\medskip

Beyond these mathematical developments, there is one key physical suggestion. A study by Lesourd \cite{lesourd19} , building on earlier work by Kodama \cite{kodama79} and Wald \cite{wald84b}, has shown that the causal structure of an evaporating black hole cannot be causally continuous (see also \cite{minguzzi20}). In particular, black hole evaporation undermines global hyperbolicity and, with it, predictability. However, we proved that past reflectivity is sufficient to infer, from the existence of trapped surfaces, the occurrence of future gravitational collapse singularities \cite{minguzzi20}. This salvages the physical content of Penrose's original singularity theorem—which required global hyperbolicity—and thereby establishes compatibility between quantum field theory, which predicts black hole evaporation, and general relativity, under the relatively weak and time-asymmetric condition of past reflectivity.

We had considered whether the emergence of past reflectivity might indicate that a weaker, time-symmetric assumption could yield the same conclusions. Yet, when past reflectivity and future-distinction are paired within the framework of co-continuous (i.e., dual continuous) posets, a more intriguing possibility arises:

\begin{quote}
\emph{At the fundamental level, spacetime is a co-continuous poset.}
\end{quote}

Indeed, the structure cannot be that of a bicontinuous poset, since that property implies causal continuity, which is violated in physically realistic gravitational collapse scenarios. Moreover, it can be shown that, in the presence of black hole evaporation, past reflectivity is responsible for the formation of future singularities, while simultaneously causing future reflectivity to be violated (see \cite[Thm.\ 1]{minguzzi20}). Consequently, the only poset framework of remaining interest is that of co-continuous posets. In short, the time-asymmetric condition of past reflectivity may be the key to uncovering the fundamental time-asymmetric deep structure of spacetime, which would then be expressed by co-continuous posets.

This interpretation aligns with hints from cosmology. Under the condition of past reflectivity, the round-ideal boundary that can be attached to the spacetime is the GKP past boundary, rather than the future one. Such a structure suggests that the universe could have had a regular beginning, in stark contrast to its far future. That picture fits perfectly with the expectation, strongly advocated by Roger Penrose, that the entropy at the origin of the universe must have been extremely low, corresponding to a gravitationally ordered configuration, in order to account for the second law of thermodynamics.

If this is indeed the case, it is unfortunate that standard mathematical poset theory develops along lines dual to what physics would require—focusing on the way-below relation rather than the way-above relation. However, since passing from a result to its dual is generally straightforward, this minor inconvenience need not be overstated.

\begin{remark}
In this work $\le$  will denote an abstract preorder (transitive and reflexive), while $\ll$ will denote the way-below relation associated with it. The causality relations for a spacetime $(M,g)$ will always be denoted with letters, e.g.\ $J, I, K, D$.
\end{remark}

\section{Continuous poset theory}  \label{cpto}

\begin{definition}
A preorder $\le$ is a reflexive and transitive relation. A partial order (or order) is an antisymmetric preorder ``$x\le y$ and $y\le x \Rightarrow x=y$''. A preordered set  $(X,\le)$ is a set endowed with a preorder.   A {\em poset} $(X,\le)$ is a set endowed with a partial order.
\end{definition}

We write indifferently $x\le y$ or $y\ge x$.

The reverse (opposite) order is defined by $x\le^{\textrm{op}} y$ if $y\le x$. Given an order-theoretic notion, its dual refers to the notion obtained replacing $\le$ with $\le^{\textrm{op}}$.

\begin{definition}
Let $(X,\le)$ be a preordered space and let $S\subset X$ be a subset. An upper bound  of $S$ is an element $u\in X$ such that $s\le u$ for every $s\in S$.  A supremum of $S$ is an upper bound $v$ such that for every upper bound $u$, $v\le u$. If it exists and is unique it is denoted $\sup S$ or $\bigvee S$ (also called {\em least upper bound}).  The supremum of $S=\{x,y\}$ is also denoted $x\vee y$ and called {\em join}.
\end{definition}

In a poset the supremum of a set is unique when it exists (otherwise $\sup S$ must be read as the family of suprema). Similar notations and definitions hold for the dual concepts. The element $x \wedge y$ is called {\em meet}. On a poset the infimum of $X$ when it exists, i.e.\ the element which is a lower bound for the subset $X$, is called {\em least, smallest or minimum element} and denoted $\bot$ or 0.

\begin{remark} \label{cnkk}
Note that if $S_1\subset S_2$  every supremum $v_2$ for $S_2$ is an upper bound for $S_2$ and so for $S_1$, so if $v_1$ is a supremum for $S_1$, we have $v_1\le v_2$.
\end{remark}

\begin{definition}
Let $(X,\le)$ be a preordered set and let $S\subset X$ be a subset.
We say that $S$ is {\em directed} or {\em upward-directed} if it is non-empty and for every $x,y\in S$ there is $z\in S$ such that $x,y\le z$.

We say that $S$ is {\em filtered} or {\em downward-directed} if it is non-empty and  for every $x,y\in S$ there is $z\in S$ such that $z\le x,y$.
\end{definition}

For any $x\in X$ we write
\[
i(x):=\uparrow x:=\{y: x\le y\}, \qquad d(x):=\downarrow x:=\{y: y\le x\}.
\]
and for a subset $S\subset X$
\[
i(S):=\uparrow S:=\{y:  \ \exists x\in S, \ x\le y\}, \qquad d(S):=\downarrow S:=\{y: \ \exists x\in S,  y\le x \}.
\]

Observe that $\downarrow^{\textrm{op}} x=\{y: y\le^{\textrm{op}} x\}= \{y: x\le y\}=\uparrow x$, and similarly  $\uparrow^{\textrm{op}} x=\{y: x\le^{\textrm{op}}  y\}=\{y: y\le x\}= \downarrow x$.

A point $x\in X$ is {\em maximal} if $i(x)=\{x\}$ while it is {\em maximum} if $d(x)=X$.

\begin{definition}
A subset $S$ is called {\em upper} or {\em increasing} if $i(S)\subset S$. A subset $S$ is called {\em lower} or {\em decreasing} if $d(S)\subset S$.
\end{definition}

The complement of an upper set is lower and conversely.

\begin{definition}
A subset $S$ is called {\em ideal}  if it is a directed lower set, and a {\em filter} if it is a filtered upper set.
\end{definition}

\begin{definition}
A poset for which every subset has a supremum and an infimum is a {\em complete lattice}.

A poset in which every directed subset has a supremum is a {\em directed complete poset} or {\em dcpo}.

A poset in which every bounded  subset has a supremum is {\em bounded complete}.

A poset in which every bounded directed subset has a supremum is {\em bounded directed complete} or {\em conditionally directed complete} or {\em cdcpo}.
\end{definition}
Note that dcpos are bounded directed complete. The weaker  properties  will be more important to us.

\begin{remark}
In a dcpo every element $x$ has a maximal element $m$ above it, $x\le m$. Indeed, let $C$ be a maximal chain containing $x$, which exists by Zorn's lemma, then $m:=\sup C$ being an upper bound for every element of $C$ must belong to $C$ (otherwise the chain would not be maximal) and if $z\in X$ is such that  $\sup C\le z$, again by maximality of $C$,  it is in $C$, thus $z\le \sup C$ and hence $z=\sup C$, that is, $i(m)=\{m\}$.
\end{remark}

\subsection{Scott topology}

The Scott topology, defined next, encodes the order $\le$ as a
topological structure; it is introduced here because it is needed for
the proof of Proposition~\ref{cvxz}, which in turn prepares the ground
for the characterisation of the way-below relation and continuity.

\begin{definition} \label{kooi}
Let $(X,\le)$ be a poset.
A subset $U$ of a poset $(X,\le)$ is {\em Scott-open} if
\begin{itemize}
\item[(i)] $U$ is an upper set,
\item[(ii)] Inaccessibility by directed suprema: for every directed subset $S$  with a supremum, $\sup S \in U \Rightarrow S\cap U\ne \emptyset$.
\end{itemize}
The collection of all Scott open sets $\mathscr{S}$ on $P$ is called the {\em Scott topology}.
\end{definition}

\begin{proof}[Proof that it is a topology]
The empty set satisfies (i)--(ii) vacuously.
The full space $X$ satisfies them because both conclusions hold
(directed sets are non-empty).

Finite intersections and arbitrary unions of upper sets are upper, so it is sufficient to consider property (ii) under finite intersections and arbitrary unions.


\smallskip\noindent\textit{Finite intersections.}
Let $U=\bigcap_i U_i$ where each $U_i$ satisfies (i)--(ii), and let
$S$ be directed with $\sup S\in U$.
Then $\sup S\in U_i$ for every $i$, so there exists $s_i\in S\cap U_i$.
Since $S$ is directed, there is $s\in S$ with $s_i\le s$ for every $i$;
as each $U_i$ is upper, $s\in U_i$ for every $i$, hence $s\in U\cap S$.


\smallskip\noindent\textit{Arbitrary unions.}
Let $U=\bigcup_\alpha U_\alpha$ where each $U_\alpha$ satisfies
(i)--(ii), and let $S$ be directed with $\sup S\in U$.
Then $\sup S\in U_\alpha$ for some $\alpha$, so there exists
$s\in U_\alpha\cap S\subset U\cap S$.
\end{proof}

The {\em dual Scott topology} is the Scott topology of $\le^{\textrm{op}}$. Its open sets are more easily identified as the lower sets which are inaccessible by directed infima.

The next result establishes that the information in $\le$ is the same as that in the Scott topology (one direction is clear since $\mathscr{S}$  is derived from $\le$).

\begin{proposition} \label{cvxz}
On a poset $(X,\le)$ we have for every $y\in X$, $\textrm{Cl}_{\mathscr{S}}(y)=\,\downarrow y$; in particular $\downarrow y$ is Scott-closed.
\end{proposition}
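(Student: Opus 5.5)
The plan is to prove two inclusions. First I show that $\downarrow y$ is Scott-closed, which gives $\textrm{Cl}_{\mathscr{S}}(y)\subset\,\downarrow y$. Then I show that every Scott-closed set containing $y$ contains $\downarrow y$, which gives the reverse inclusion.

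\emph{Step 1: $\downarrow y$ is Scott-closed.} I verify that its complement $U:=X\setminus \downarrow y=\{x: x\not\le y\}$ satisfies (i)--(ii) of Definition \ref{kooi}.

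For (i), $\downarrow y$ is a lower set by transitivity, so its complement is upper.

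For (ii), let $S$ be directed with supremum $\sup S\in U$, and suppose for contradiction that $S\cap U=\emptyset$. Then $s\le y$ for every $s\in S$, so $y$ is an upper bound of $S$. By the definition of supremum, $\sup S\le y$, that is, $\sup S\in\,\downarrow y$, contradicting $\sup S\in U$. Hence $S\cap U\neq\emptyset$, so $U$ is Scott-open.

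Since $y\in\,\downarrow y$ by reflexivity, this yields $\textrm{Cl}_{\mathscr{S}}(y)\subset\,\downarrow y$.

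\emph{Step 2: reverse inclusion.} Let $C$ be any Scott-closed set with $y\in C$. Its complement is Scott-open, hence an upper set, so $C$ is a lower set. Therefore $\downarrow y\subset C$. Intersecting over all such $C$ gives $\downarrow y\subset \textrm{Cl}_{\mathscr{S}}(y)$.

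The only point that needs care is (ii) in Step 1, where the supremum property is used exactly once. Everything else is immediate from the definitions, so I expect no genuine obstacle. Antisymmetry is not even needed, beyond the uniqueness of $\sup S$ implicit in the notation.
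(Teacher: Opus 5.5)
Your proof is correct and follows essentially the same route as the paper: you show $X\setminus\downarrow y$ is Scott-open by using the supremum property once, and you get the reverse inclusion because Scott-closed sets are lower. The paper argues the latter dually, noting that every Scott-open neighbourhood of $x\le y$ is an upper set and hence contains $y$.
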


\begin{proof}
First, we show $\textrm{Cl}_{\mathscr{S}}(y) \subset {\downarrow} y$.
It suffices to show that $X \setminus {\downarrow} y$ is Scott-open. To start with, it is the complement of a lower set hence upper.

Now let $D$ be directed with $\sup D \not\le y$, i.e.\ $\sup D\in X \setminus {\downarrow} y$. Since $y$ is not an upper bound of $D$, there is $d
\in D$, $d\not\leq y$
proving inaccessibility by directed suprema of $X \setminus {\downarrow} y$.
Hence $X \setminus {\downarrow} y$ is Scott-open, so ${\downarrow} y$ is Scott-closed. Since $\{y\} \subset {\downarrow} y$, we get $\textrm{Cl}_{\mathscr{S}}(y) \subset {\downarrow} y$.

Conversely, we prove ${\downarrow} y \subset \textrm{Cl}_{\mathscr{S}}(y)$.
Take \(x \le y\). Let \(U\) be any Scott-open set with \(x \in U\).
Because \(U\) is an upper set and \(x \le y\), we have \(y \in U\).
Thus every Scott-open neighbourhood of \(x\) contains \(y\), so \(x \in \textrm{Cl}_{\mathscr{S}}(y)\).
Therefore, ${\downarrow} y \subset\textrm{Cl}_{\mathscr{S}}(y)$.
\end{proof}

For the notion of specialization order see \cite[Def.\ O-5.2.]{gierz03}.

\begin{corollary}
On a poset $(X,\le)$ we have that $\le$ coincides with the specialization order of the Scott topology: $x\le y$ iff  $x\in \textrm{Cl}_{\mathscr{S}}(y)$. The Scott topology is $T_0$.
\end{corollary}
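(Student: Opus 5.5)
The corollary follows almost immediately from Proposition \ref{cvxz}, so the plan is to unpack the definitions and invoke it. Recall that the specialization order of a topology is $x\sqsubseteq y$ iff $x\in \textrm{Cl}(\{y\})$; equivalently, every open set containing $x$ also contains $y$.

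\textbf{Step 1: the order equals the specialization order.} Proposition \ref{cvxz} gives $\textrm{Cl}_{\mathscr{S}}(y)=\,\downarrow y$ for every $y\in X$. Hence $x\in \textrm{Cl}_{\mathscr{S}}(y)$ iff $x\in\,\downarrow y$ iff $x\le y$. This is exactly the claimed identification of $\le$ with the specialization order of $\mathscr{S}$. No further work is needed beyond quoting the proposition.

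\textbf{Step 2: the $T_0$ property.} Take $x\neq y$. I will argue by contradiction: suppose every Scott-open set contains either both points or neither. Then $x\in \textrm{Cl}_{\mathscr{S}}(y)$, since every open neighbourhood of $x$ contains $y$. Symmetrically, $y\in \textrm{Cl}_{\mathscr{S}}(x)$.

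By Step 1 this gives $x\le y$ and $y\le x$. Antisymmetry of the partial order then forces $x=y$, which is a contradiction. So some Scott-open set separates $x$ and $y$. Concretely, whichever of $x\not\le y$ or $y\not\le x$ holds, the set $X\setminus\downarrow y$ (respectively $X\setminus \downarrow x$) is Scott-open, as shown in the proof of Proposition \ref{cvxz}, and it contains exactly one of the two points.

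\textbf{Main obstacle.} There is essentially none. The only point needing care is that antisymmetry is where the poset assumption enters. On a mere preorder the same argument shows that $\le$ is still the specialization order, but the Scott topology can fail to be $T_0$.
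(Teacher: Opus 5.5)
Your proposal is correct and follows the paper's proof essentially verbatim: you read the equivalence $x\le y \iff x\in \textrm{Cl}_{\mathscr{S}}(y)$ off Proposition \ref{cvxz}, and get $T_0$ from $x\le y\le x$ plus antisymmetry. Your explicit separating set $X\setminus\downarrow y$ and the preorder remark are correct additions the paper does not spell out.
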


\begin{proof}
The equivalence is clear because   $x\in \textrm{Cl}_{\mathscr{S}}(y)$ reads $x\in \downarrow y$, that is, $x \le y$. Suppose that $x$ and $y$ have the same Scott open neighborhoods,  then $x\in \textrm{Cl}_{\mathscr{S}}(y)$ and $y\in \textrm{Cl}_{\mathscr{S}}(x)$ which gives $x\le y \le x$ and hence $x=y$ by the poset assumption.
 \end{proof}

A different characterization of the Scott topology is \cite{lawson91}

\begin{proposition}
A set is Scott-closed iff it is a lower set which is closed with respect to taking suprema of directed subsets.
\end{proposition}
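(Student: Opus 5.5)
The plan is simply to translate the two defining conditions of Scott-openness (Definition~\ref{kooi}) through complementation. Let $C\subset X$ and put $U:=X\setminus C$. By Definition~\ref{kooi}, $C$ is Scott-closed iff $U$ is (i) an upper set and (ii) inaccessible by directed suprema. I will show that (i) for $U$ is equivalent to $C$ being a lower set, and that, given (i), condition (ii) for $U$ is equivalent to the statement: for every directed $S\subset C$ possessing a supremum, $\sup S\in C$.

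For (i) I will use the fact, already noted after the definition of upper and lower sets, that the complement of an upper set is lower and conversely.

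For (ii), I will first rewrite inaccessibility of $U$ in contrapositive form. For a directed $S$ with a supremum, $S\cap U=\emptyset$ means $S\subset C$, and $\sup S\notin U$ means $\sup S\in C$. So (ii) for $U$ reads: every directed $S\subset C$ having a supremum satisfies $\sup S\in C$. This is precisely closure of $C$ under directed suprema, and the two conditions are logically identical with no further work. I will make explicit that ``closed with respect to taking suprema of directed subsets'' is understood for those directed subsets of $C$ whose supremum exists in $X$, since $X$ need not be a dcpo.

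No step should be a real obstacle. The only point needing care is this interpretive one about existence of suprema; with it fixed, both directions follow at once. A useful sanity check is consistency with Proposition~\ref{cvxz}: $\downarrow y$ is lower, and if $S\subset\downarrow y$ is directed then $y$ is an upper bound of $S$, so $\sup S\le y$, in agreement with the Scott-closedness of $\downarrow y$ established there.
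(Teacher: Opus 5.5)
Your proposal is correct and follows the paper's proof: pass to the complement $U=X\setminus C$, use that complements of upper sets are lower, and observe that inaccessibility of $U$ by directed suprema is the contrapositive of closure of $C$ under existing directed suprema. The paper leaves implicit your remark that only directed subsets whose supremum exists are meant; stating it explicitly is a good clarification.
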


It is a characterization because, if closed sets are defined via this property, we are actually defining a family of open sets (Scott's) which do constitute a topology by the proof after Def.\ \ref{kooi}.

\begin{proof}
Let $A$ be Scott-closed (hence lower) and let $U=X\backslash A$. Let $D\subset A$ be a directed subset. Suppose $\sup D\notin A$, then $\sup D\in U$ and since $U$ is Scott-open, there is $d\in D\cap U$, which contradiction $D\subset A$. The contradiction proves that $\sup D\in A$, so $A$ is closed  with respect to taking suprema of directed subsets.

Conversely, if $A$ is lower and closed with respect to taking suprema of directed subsets, let $U=X\backslash A$ (so upper) and let $D$ be a directed subset such that $\sup D \in U$. It cannot be $D\subset A$ otherwise $\sup D\in A$, thus there is some $d\in U\cap D$.
\end{proof}

\subsection{The way-below relation}

\begin{definition}[Way-below relation \cite{gierz03}(Def.\ I-1.1)]
On a preordered set $(X,\le)$ we say that $x$ is {\em way-below} $y$, in symbols, $x\ll y$, if for every directed subset $D$ with a supremum, $y\le \sup D$ implies that there is $d\in D$ such that $x\le d$.
\end{definition}

The way-below relation has several interpretations in computer science (finite approximation) and topology (relative compactness) as discussed in \cite{gierz03}.

\begin{definition}[Way-above relation]
On a preordered set $(X,\le)$ we say that $y$ is {\em way-above} $x$, in symbols $x \,\,\overline{\ll}\,\, y$, if for every filtered subset $F$ with an infimum, $\inf F\le x$ implies that there is $f\in F$ such that $f\le y$.
\end{definition}
In the literature $\overline{\ll}$ is often denoted $\ll_d$. We employ the overline notation as a reminder that this is the `above' version.

Let $\varphi$ be the map which sends the partial order into the way-below relation, then $(\varphi(\le^{\textrm{op}}))^{\textrm{op}}$ is the way-above relation. We shall develop most of the theory for the way-below relation, the version for the way-above relation being obtained by duality.

The following result is important \cite[Prop.\ I-1.2]{gierz03}, for it is remarkable that researchers in computer science came up with a structure very close to Kronheimer and Penrose's causal spaces \cite{kronheimer67} (properties (i) and (ii)) at around the same time.
\begin{theorem} \label{kctt}
In a poset $(X,\le)$ the following statements hold for all $u, x,
y, z,$ $v \in X$:
\begin{itemize}
\item[(i)] $x\ll y$ implies $x\le y$,
\item[(ii)]  $u\le x\ll y\le v$ implies $u\ll v$,
\item[(iii)] $x \ll z$ and $y \ll z$ imply $x \vee y \ll z$ whenever $x \vee y$ exists,
\item[(iv)] $0\ll x$ whenever $X$ has a smallest element $0$.
\end{itemize}
\end{theorem}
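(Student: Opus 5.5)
All four items follow by unwinding the definition of $\ll$: for each we fix an arbitrary directed set $D$ with a supremum, assume the relevant element lies below $\sup D$, and produce some $d\in D$ with the required inequality. No earlier result beyond the definitions and transitivity of $\le$ is needed.

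For (i) we apply the definition with the directed set $D=\{y\}$. Its supremum is $y$, and $y\le \sup D$ holds trivially. Since $x\ll y$, there is $d\in D$ with $x\le d$, and the only possible choice is $d=y$, so $x\le y$.

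For (ii) we take $D$ directed with $v\le\sup D$. By transitivity $y\le v\le \sup D$, so $x\ll y$ provides $d\in D$ with $x\le d$. Then $u\le x\le d$, which is the condition for $u\ll v$.

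For (iii) we take $D$ directed with $z\le \sup D$. From $x\ll z$ we obtain $d_1\in D$ with $x\le d_1$, and from $y\ll z$ we obtain $d_2\in D$ with $y\le d_2$. Directedness gives $d\in D$ with $d_1,d_2\le d$, so $d$ is an upper bound of $\{x,y\}$. By the defining property of the join, $x\vee y\le d$, which proves $x\vee y\ll z$. This is the only step that uses directedness, and it is also the only place where a hypothesis must genuinely be used: we need that $x\vee y$ is the \emph{least} upper bound, not merely some upper bound.

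For (iv) we take $D$ directed with $x\le \sup D$. Directed sets are nonempty by definition, so we may pick any $d\in D$, and then $0\le d$. Hence $0\ll x$. The one subtlety is precisely this reliance on nonemptiness, which is built into the definition of a directed set. No step is a real obstacle.
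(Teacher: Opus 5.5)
Your proposal is correct and matches the paper's proof essentially line by line. The steps are the same: the singleton directed set $\{y\}$ for (i), transitivity through $y\le v\le\sup D$ for (ii), directedness combined with the least-upper-bound property of $x\vee y$ for (iii), and nonemptiness of directed sets for (iv).
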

Identifying the graph of the relation with the relation itself, we can write (i) as $\ll \subset \le$ and (ii) as $\ll \circ \le \cup \le \circ \ll \subset \ll$, where $\circ$ denotes composition of relations.
Conditions (i) and (ii) immediately imply that $\ll$ is transitive, while (i) alone implies that $\ll$ is antisymmetric. However, $\ll$ is not necessarily irreflexive; that is, there may exist $x\in X$ such that $x\ll x$. Points with this property are said to be {\em compact} (this terminology has nothing to do with a topology on $X$; rather, it comes from a topological model of a poset that inspired the definition of $\ll$ \cite{gierz03}).
\begin{proof}
(i). Take $D=\{y\}$ then $y$ is an upper bound of $D$ and if $u$ is an upper bound of $D$, $y\le u$ thus $y=\sup D$. Now, $x\ll y$ implies that there is $d\in D$ such that $x\le d$, that is $x\le y$.

(ii). Since $x\le y$ we have $u \le v$.
Let $D$ be a directed set such that $\sup D$ exists  and satisfies $v\le \sup D $. Then, as $y\le v$, $y\le \sup D$, and since $x\ll y$, there is $d\in D$ such that $x\le d$ so $u\le d$. In conclusion, for every   directed set such that $\sup D$ exists  and satisfies $v\le \sup D$ there is $d\in D$ such that $u\le d$, that is, $u\ll v$.

(iii). Let $D$  be a directed set such that $\sup D$ exists  and satisfies $z\le \sup D$. Then there are $d_1,d_2\in D$ such that $x\le d_1$ and $y\le d_2$. Since $D$ is directed, there is $d\in D$, $x, y \le d$, i.e.\ $d$ is an upper bound for $\{x,y\}$ so $x\vee y \le d$, hence $x\vee y \ll z$.

(iv). Suppose that a smallest element 0 exists, this means $0\le u$ for every $u$. Let $D$   be a directed set such that $\sup D$ exists  and satisfies $x\le \sup D$, then there is $d\in D$ (in fact any element would do), such that $0\le d$, thus $0\ll x$.
\end{proof}

In analogy with $\le$ we write
\[
\twoheaduparrow x:=\{y: x\ll y\}, \qquad \twoheaddownarrow x:=\{y: y\ll x\}.
\]
One must be cautious because both sets are characterized via directed subsets, not filtered ones, at least
without an additional condition (bicontinuity) to be considered below.

For the way-above relation we write
\[
\bar{\twoheaduparrow}\, x:=\{y: x\,\, \overline{\ll} \,\, y\}, \qquad \bar \twoheaddownarrow \, x:=\{y: y\,\, \overline{\ll}\,\, x\}.
\]

\begin{proposition} \label{cnqz}
In a poset $(X,\le)$, we have $x\le y \Rightarrow \ \twoheaddownarrow x \subset\, \twoheaddownarrow y$.
\end{proposition}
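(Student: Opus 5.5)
This is a direct consequence of Theorem~\ref{kctt}(ii), so the plan is a one-line chase through the definitions with no real obstacle. Fix $x\le y$ and take an arbitrary $z\in\,\twoheaddownarrow x$, so that $z\ll x$ by definition of $\twoheaddownarrow$. The goal is to show $z\ll y$, i.e.\ $z\in\,\twoheaddownarrow y$.

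Apply Theorem~\ref{kctt}(ii) in the form $u\le x'\ll y'\le v \Rightarrow u\ll v$. Choose $u=z$, $x'=z$ and $y'=x$, using reflexivity $z\le z$, and take $v=y$, using the hypothesis $x\le y$. This gives $z\ll y$, so $z\in\,\twoheaddownarrow y$. Since $z$ was arbitrary, $\twoheaddownarrow x\subset\,\twoheaddownarrow y$.

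If one prefers not to invoke the theorem, the definition of $\ll$ can be checked directly. Let $D$ be directed with a supremum and $y\le\sup D$. Then $x\le y\le \sup D$ by transitivity. Because $z\ll x$, there is some $d\in D$ with $z\le d$, which is exactly $z\ll y$. The only point requiring care is that the definition quantifies over directed sets whose supremum lies above the target element; the inequality $x\le y$ enlarges the set of relevant $D$ in the correct direction.
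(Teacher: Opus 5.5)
Your proof is correct and takes essentially the paper's approach: your direct check from the definition (from $y\le\sup D$ get $x\le\sup D$, then $z\ll x$ gives $d\in D$ with $z\le d$) is exactly the paper's argument. The shortcut via Theorem~\ref{kctt}(ii) with $z\le z\ll x\le y$ is equally valid, since that theorem is proved just before using the same computation.
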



\begin{proof}
 Let $z\ll x$ and $D$ be a directed set such that $\sup D$ exists and $y\le \sup D$, then, by $x\le y$, $x\le \sup D$ so, by $z\ll x$,  there is $d\in D$ such that $z\le d$, thus $z \ll y$.
\end{proof}

The following result is \cite[Prop.\ I-1.5(i)]{gierz03}

\begin{proposition} \label{mmtg}
In a poset $(X,\le)$ the condition $x \ll y$ is equivalent to: $x \in I$ for every ideal $I$ such that $y \le \sup I$.
\end{proposition}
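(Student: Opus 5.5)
The plan is to prove both implications directly from the definition of $\ll$. The only tool needed is the passage between a directed set and the ideal it generates, $D\mapsto\, \downarrow D$. I would first record that for a directed set $D$, the set $\downarrow D$ is an ideal. It is non-empty since $D$ is, and it is lower by construction. It is directed because if $a\le d_1$ and $b\le d_2$ with $d_1,d_2\in D$, there is $d\in D$ above both, so $d\in\, \downarrow D$ bounds $a$ and $b$.

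Next I would check that $D$ and $\downarrow D$ have the same upper bounds. Indeed $D\subset\, \downarrow D$, and any upper bound of $D$ is above every element below some $d\in D$. Hence $\sup D$ exists iff $\sup \downarrow D$ exists, and then the two coincide.

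\emph{($\Rightarrow$)} Assume $x\ll y$ and let $I$ be an ideal with $y\le \sup I$, the supremum being assumed to exist. An ideal is directed, so by the definition of $\ll$ there is $d\in I$ with $x\le d$. Since $I$ is a lower set, $x\in I$.

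\emph{($\Leftarrow$)} Assume the ideal condition and let $D$ be directed with $\sup D$ existing and $y\le \sup D$. Set $I=\,\downarrow D$; it is an ideal by the first step, and $\sup I=\sup D\ge y$ by the second. The hypothesis gives $x\in I$, that is, $x\le d$ for some $d\in D$. This is exactly $x\ll y$.

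There is no real obstacle here. The only point requiring care is the equality of the suprema of $D$ and $\downarrow D$, including the fact that one exists iff the other does. This is what lets the condition ``$y\le\sup I$'' be read with the implicit requirement that $\sup I$ exists, matching the definition of $\ll$.
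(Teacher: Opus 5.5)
Your proof is correct and follows the paper's argument essentially verbatim. The forward direction uses that an ideal is a directed lower set, and the converse passes from a directed $D$ to the ideal $\downarrow D$, which has the same upper bounds and hence the same supremum; in the forward direction you rightly write $x\le d$, where the paper's text has the slip $x\ll d$.
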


\begin{proof}
Suppose $x\ll y$ and let $I$ be an ideal (hence a directed set) such that $y \le \sup I$. By $x\ll y$, we have $x\ll d$ for some $d\in I$, so $x\in I$ as $I$ is lower.

Suppose $x \in I$ for every ideal $I$ such that $y \le \sup I$. Let  $D$ be a directed subset with $y \le \sup D$. Then $I = \,\downarrow D$ is an ideal, and $y \le \sup D = \sup I$ (the families of upper bounds for $D$ and $I$ coincide).  Then $x \in I$ by  the assumption, i.e.,
there is a $d \in D$ such that $x \le d$. Hence $x \ll y$.
 \end{proof}

The following result is \cite[Prop.\ I-1.5(ii)]{gierz03}

\begin{proposition} \label{wwgf}
In a poset $(X,\le)$ for $x\in X$, suppose there is a directed set $D \subset \, \twoheaddownarrow x$ such that $\sup D=x$. Then the same is true for $D=\,\,\twoheaddownarrow x$: the set  $\twoheaddownarrow x$ is directed (hence non-empty) and $\sup\twoheaddownarrow x=x$.
\end{proposition}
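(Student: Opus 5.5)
We have to show two things: that $\twoheaddownarrow x$ is directed, and that $\sup \twoheaddownarrow x = x$. Fix a directed set $D\subset \twoheaddownarrow x$ with $\sup D = x$.

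\emph{Supremum.} This part is routine.
\begin{itemize}
\item[(a)] By Thm.~\ref{kctt}(i), every element of $\twoheaddownarrow x$ is $\le x$, so $x$ is an upper bound of $\twoheaddownarrow x$.
\item[(b)] If $u$ is any upper bound of $\twoheaddownarrow x$, then $u$ bounds $D\subset \twoheaddownarrow x$. Hence $x=\sup D\le u$ (cf.\ Remark~\ref{cnkk}).
\end{itemize}
So $x$ is the least upper bound, $\sup\twoheaddownarrow x = x$. Non-emptiness is immediate as well, since $D\ne\emptyset$ by definition of directed and $D\subset\twoheaddownarrow x$.

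\emph{Directedness.} This is the step carrying the content. Take $a,b\in \twoheaddownarrow x$; we need $c\in\twoheaddownarrow x$ with $a,b\le c$.
\begin{itemize}
\item[(1)] Apply the definition of $a\ll x$ to the directed set $D$, whose supremum exists and satisfies $x\le \sup D$. This gives $d_1\in D$ with $a\le d_1$.
\item[(2)] Likewise, $b\ll x$ gives $d_2\in D$ with $b\le d_2$.
\item[(3)] Since $D$ is directed, there is $d\in D$ with $d_1,d_2\le d$. Then $a,b\le d$.
\item[(4)] Finally, $d\in D\subset \twoheaddownarrow x$, so $d$ is the required upper bound inside $\twoheaddownarrow x$.
\end{itemize}

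The only point needing care is to test the way-below condition against the given set $D$ itself, rather than against an arbitrary directed set. With that choice, $D$ is cofinal in $\twoheaddownarrow x$, and everything else follows directly from Thm.~\ref{kctt}(i) and the definitions. I expect no real obstacle.
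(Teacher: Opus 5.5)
Your proposal is correct and follows essentially the same argument as the paper. You test $a\ll x$ and $b\ll x$ against $D$ itself and use directedness of $D\subset \twoheaddownarrow x$; for the supremum, you note that $x$ bounds $\twoheaddownarrow x$ and that any upper bound of $\twoheaddownarrow x$ also bounds $D$, so it lies above $\sup D=x$.
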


\begin{proof}
If $y,z\in \, \twoheaddownarrow x$, namely $y\ll x$, $z\ll x$, then there are $d_y,d_z\in D$ such that $y\le d_y$, $z\le d_z$, and since $D$ is directed, there is $d\in D$  such that $y,z\le d\ll x$ (recall $D\subset\,  \twoheaddownarrow x$). Thus $\twoheaddownarrow x$ is directed.

Clearly, $x$ is an upper bound for any element of $\twoheaddownarrow x$. Let $v$ be an upper bound for $\twoheaddownarrow x$ then it is an upper bound for $D$, so $\sup D\le v$, that is  $x\le v$, so $x$ is the supremum for $\twoheaddownarrow x$.
\end{proof}


\subsection{Continuous posets and future reflectivity}

The following notion should be distinguished from that of basis of a poset topology (e.g.\ Scott's or Lawson's)

\begin{definition}
A basis $B$ for a poset $(X,\le)$ is a subset $B\subset X$ such that for each $x \in X$, the subset $B\,\cap \twoheaddownarrow x$ is directed and $\sup (B\,\cap \twoheaddownarrow x)=x$.
\end{definition}

\begin{definition}
A poset $(X,\le)$ is {\em continuous} if it satisfies the {\em axiom of approximation from below}:
for every $x\in X$, $\twoheaddownarrow x$ is directed and $\sup \twoheaddownarrow x=x$.
\end{definition}
In other words, a poset $(X,\le)$ is continuous iff $X$ is a  basis.

\begin{proposition}
A poset is continuous iff it admits a basis.
\end{proposition}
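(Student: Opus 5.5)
One direction is immediate. If $(X,\le)$ is continuous, then for every $x$ the set $X\cap \twoheaddownarrow x=\twoheaddownarrow x$ is directed and has supremum $x$. So $B=X$ is a basis, as already observed after the definition of continuity.

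For the converse, I will assume $B$ is a basis and fix $x\in X$. By the definition of basis, the set $D:=B\cap \twoheaddownarrow x$ is directed and satisfies $\sup D=x$. Moreover $D\subset \twoheaddownarrow x$ by construction. These are exactly the hypotheses of Proposition~\ref{wwgf}. That proposition then gives that $\twoheaddownarrow x$ is directed and that $\sup \twoheaddownarrow x=x$. Since $x$ was arbitrary, the axiom of approximation from below holds, and $(X,\le)$ is continuous.

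The only point needing care is checking that Proposition~\ref{wwgf} applies. It asks for a directed set contained in $\twoheaddownarrow x$ whose supremum exists and equals $x$. The basis definition provides exactly this set, namely $B\cap\twoheaddownarrow x$, so nothing beyond that proposition is required. If one preferred to avoid citing it, I would repeat its argument inline. Directedness of $\twoheaddownarrow x$ follows because any two $y,z\ll x$ are dominated, by way-belowness applied to $D$ with $x\le\sup D$, by elements of $D$, which have a common upper bound in $D\subset\twoheaddownarrow x$. The supremum claim follows because any upper bound of $\twoheaddownarrow x$ bounds $D$ and hence lies above $x$, while $x$ itself is an upper bound by Theorem~\ref{kctt}(i).
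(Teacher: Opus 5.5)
Your proof is correct and is the same argument the paper gives. The paper takes $B=X$ as a basis in the continuous case, and for the converse applies Proposition~\ref{wwgf} to the directed set $B\cap\twoheaddownarrow x$ with supremum $x$ to conclude that $\twoheaddownarrow x$ is directed and $\sup\twoheaddownarrow x=x$.
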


\begin{proof}
If the poset is continuous $B=X$ is a basis. Conversely, if $B$ is a basis then for each $x\in X$, $D_x:=B\,\cap \twoheaddownarrow x$ is directed and such that $\sup D_x=x$. By Prop.\ \ref{wwgf} $\twoheaddownarrow x$ is directed and $\sup \twoheaddownarrow x=x$.
\end{proof}

A poset is {\em $\omega$-continuous} if it admits a countable basis $B$.

\begin{definition}
A poset $(X,\le)$ is {\em dual continuous} or   {\em co-continuous}  if  $(X,\le^{\textrm{op}})$ is continuous. Equivalently,
it satisfies the {\em axiom of approximation from above}:
for every $x\in X$, $\twoheaduparrow x$ is filtered, has infimum and $\inf \twoheaduparrow x=x$.
\end{definition}

\begin{definition}
A poset is {\em weakly bicontinuous} if it is both continuous and co-continuous.
\end{definition}
Note that \cite{keimel09b,ebrahimi14,ebrahimi15} would call this spaces just {\em bicontinuous}. Following Martin and Panangaden we keep the term {\em bicontinuous} for a stronger notion (which additionally demands $\ll\,=\, \overline{\ll}$) which is more central in relativity theory.

A continuous dcpo is a {\em domain}.

In a continuous poset $\twoheaddownarrow x$ is an ideal for every $x\in X$, called {\em primitive}.

We have a converse of Prop.\ \ref{cnqz}.
\begin{proposition} \label{cnzz}
In a continuous poset $(X,\le)$, we have $x\le y \Leftrightarrow \ \twoheaddownarrow x \subset \,\twoheaddownarrow y$.
\end{proposition}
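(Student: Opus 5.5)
The forward implication is exactly Prop.~\ref{cnqz}, which holds in any poset, so I will simply cite it. The work is in the converse, and there I will use continuity through the identity $x=\sup \twoheaddownarrow x$.

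For the converse, suppose $\twoheaddownarrow x \subset \twoheaddownarrow y$. The steps are as follows.
\begin{itemize}
\item[(a)] Since the poset is continuous, $\twoheaddownarrow x$ is directed and $\sup \twoheaddownarrow x = x$.
\item[(b)] By Thm.~\ref{kctt}(i), every element of $\twoheaddownarrow y$ lies below $y$. Hence $y$ is an upper bound of $\twoheaddownarrow y$, and therefore of the smaller set $\twoheaddownarrow x$.
\item[(c)] Since $x$ is the least upper bound of $\twoheaddownarrow x$, it follows that $x \le y$.
\end{itemize}
Alternatively, in step (c) one can invoke Remark~\ref{cnkk} with $S_1=\twoheaddownarrow x \subset S_2=\twoheaddownarrow y$. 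Both sets have suprema by continuity, namely $x$ and $y$, so the remark gives $x\le y$ directly.

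No step is a genuine obstacle. The only point needing care is that the converse really uses the hypothesis $\sup\twoheaddownarrow x = x$. Without continuity the inclusion could hold trivially, for instance when $\twoheaddownarrow x=\emptyset$, and it would carry no information about $x$.
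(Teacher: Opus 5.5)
Your proposal is correct and follows the paper's proof: the forward direction is Prop.~\ref{cnqz}, and the paper obtains the converse exactly as in your alternative, via Remark~\ref{cnkk}, giving $x=\sup\twoheaddownarrow x\le \sup\twoheaddownarrow y=y$. Your main route (a)--(c) is a slight economy, since it needs only $\sup\twoheaddownarrow x=x$ and the fact that $y$ bounds $\twoheaddownarrow y$ (Thm.~\ref{kctt}(i)), not the existence of $\sup\twoheaddownarrow y$.
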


\begin{proof}
We need only to prove  the $\Leftarrow$ direction. By Remark \ref{cnkk} as $\twoheaddownarrow x \subset \, \twoheaddownarrow y$  and the suprema exist, $x=\sup  \twoheaddownarrow x \le \sup \twoheaddownarrow y=y$.
\end{proof}

\begin{remark}
The way-below relation is defined from $\le$. The previous result establishes that under continuity, the relation $\le$ can be recovered from $\ll$ and so they contain the same information.
\end{remark}

The following simple result  will be important for us.

\begin{proposition} \label{jsse}
Every continuous poset $(X,\le)$ is {\em future reflecting} in the sense that: for every $x,y\in X$,  $\twoheaddownarrow x \subset \,\twoheaddownarrow y$ $\Rightarrow\,  \twoheaduparrow x \supset\, \twoheaduparrow y$.
\end{proposition}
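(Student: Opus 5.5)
The plan is to reduce the statement to the order $\le$ by combining Prop.~\ref{cnzz} with the interpolation-free calculus of Thm.~\ref{kctt}. Assume $\twoheaddownarrow x \subset \,\twoheaddownarrow y$. Since $(X,\le)$ is continuous, Prop.~\ref{cnzz} (the direction proved there via $x=\sup\twoheaddownarrow x\le \sup\twoheaddownarrow y=y$, using Remark~\ref{cnkk}) immediately gives $x\le y$.

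Next, take any $z\in\,\twoheaduparrow y$, i.e.\ $y\ll z$. We then have the chain $x\le y\ll z$. Theorem~\ref{kctt}(ii), applied with $u=x$ and $v=z$, yields $x\ll z$, i.e.\ $z\in\,\twoheaduparrow x$. Hence $\twoheaduparrow y\subset\,\twoheaduparrow x$, which is the claimed inclusion $\twoheaduparrow x \supset\,\twoheaduparrow y$.

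There is essentially no obstacle. The only substantive input is that continuity allows one to recover $\le$ from $\ll$, which is exactly Prop.~\ref{cnzz}. Without continuity, the hypothesis on the lower way-below sets gives no information on the order. The remaining step is the stability of $\ll$ under composition with $\le$, which holds in any poset.

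I would close with a remark on the terminology. In the spacetime reading $\ll\,=I$, the statement becomes $I^-(x)\subset I^-(y)\Rightarrow I^+(x)\supset I^+(y)$, which is precisely future reflectivity.
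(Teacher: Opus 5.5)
Your proof is correct and follows essentially the same route as the paper. You use Prop.~\ref{cnzz} to extract $x\le y$ from $\twoheaddownarrow x\subset\,\twoheaddownarrow y$, and then the composition rule $x\le y\ll z\Rightarrow x\ll z$ (Thm.~\ref{kctt}(ii)) to conclude $\twoheaduparrow y\subset\,\twoheaduparrow x$.
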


\begin{proof}
From Prop.\ \ref{cnzz}  $\twoheaddownarrow x \subset \, \twoheaddownarrow y$ implies $x\le y$. Let $z\in  \,\twoheaduparrow y$, namely $y\ll z$. From $x\le y \ll z$ we get $x\ll z$, that is, $z\in\,  \twoheaduparrow x$, hence, by the arbitrariness of $z$, $\twoheaduparrow x \supset\, \twoheaduparrow y$.
\end{proof}

The following non-trivial result \cite[Thm.\ I-1.9]{gierz03}  provides an  equivalent strengthened form of the way-below relation

\begin{theorem} \label{tncv}
Let $(X,\le)$ be a continuous poset. If $x \ll y$ and $y \le \sup D$ for a directed set $D$, then there is $d\in D$ such that $x\ll d$. Additionally, if $x\ne y$, $d$ can be found so that $x\ne d$.
\end{theorem}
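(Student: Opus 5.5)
The plan is to replace $D$ by a larger directed set built from the way-below approximants of its elements, and then apply the definition of $x\ll y$ to that set. Concretely, set
\[
E:=\bigcup_{d\in D}\twoheaddownarrow d .
\]
Then $x\ll y$ should produce some $e\in E$ with $x\le e$. Since $e\ll d$ for some $d\in D$, Theorem~\ref{kctt}(ii) gives $x\ll d$. The proof therefore reduces to two facts: $E$ is directed, and $E$ has a supremum with $y\le\sup E$.

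\emph{Directedness of $E$.} The set $E$ is nonempty, because $D\neq\emptyset$ and each $\twoheaddownarrow d$ is directed (hence nonempty) by continuity. Now take $e_1\ll d_1$ and $e_2\ll d_2$ with $d_1,d_2\in D$. Choose $d\in D$ with $d_1,d_2\le d$. By Proposition~\ref{cnqz}, both $e_1$ and $e_2$ lie in $\twoheaddownarrow d$. This set is directed, so it contains a common upper bound $e_3$ of $e_1,e_2$, and $e_3\in E$.

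\emph{Supremum of $E$.} I will show that $E$ and $D$ have the same upper bounds, so that $\sup E=\sup D$ exists and $y\le \sup E$. Every $d\in D$ is an upper bound of $\twoheaddownarrow d$. Hence an upper bound of $D$ is an upper bound of $E$. Conversely, if $u$ is an upper bound of $E$, then $u$ bounds each $\twoheaddownarrow d$. Since $d=\sup\twoheaddownarrow d$ by continuity, we get $d\le u$, so $u$ bounds $D$. This gives the first claim. The only place continuity is used is here and in the directedness step, where we need each $\twoheaddownarrow d$ to be directed with supremum $d$.

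\emph{The case $x\neq y$.} This is the only slightly delicate step. Suppose $d\in D$ satisfies $x\ll d$. For every $d'\in D$ with $d\le d'$, Theorem~\ref{kctt}(ii) gives $x\ll d'$. So it suffices to find such a $d'$ with $d'\neq x$. Assume instead that every $d'\in D$ with $d'\ge d$ equals $x$; in particular $d=x$.

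\emph{Deriving the contradiction.} For any $d''\in D$, directedness gives $d'\in D$ above both $d$ and $d''$, and $d'=x$ by the assumption. So $x$ is the maximum of $D$, and $\sup D=x$. Then $y\le x$, while $x\le y$ by Theorem~\ref{kctt}(i). Antisymmetry gives $x=y$, a contradiction.
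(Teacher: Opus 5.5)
Your proof is correct and follows essentially the paper's route. Like the paper, you pass to $\bigcup_{d\in D}\twoheaddownarrow d$, show it has the same upper bounds as $D$, and for $x\neq y$ use antisymmetry to rule out $x$ bounding $D$ (your contradiction argument is just the contrapositive of the paper's choice of $d_2\not\le x$). The only cosmetic difference is that the paper checks the union is an ideal and invokes Prop.~\ref{mmtg} to get $x\ll d_1$ directly, whereas you apply the definition of $\ll$ to the directed set and then Theorem~\ref{kctt}(ii), which spares you the lowerness check.
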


Note that the definition of way-below relation  ensures $x\le d$.
\begin{proof}
Let $I = \bigcup \{\twoheaddownarrow d: d \in D\}$. Observe that $I$ is lower and it is directed because if $i_1,i_2\in I$ there are $d_1,d_2\in D$ such that $i_1\ll d_1$, $i_2\ll d_2$. As $D$ is directed there is $d\in D$ such that $d_1,d_2\le d$, so $i_1,i_2\ll d$, i.e.\ $i_1,i_2\in \twoheaddownarrow d$. But, by continuity, $\twoheaddownarrow d$ is directed thus there is $i\in  \twoheaddownarrow d\subset I$ such that $i_1,i_2\le i$. In conclusion, $I$ is an ideal.

Note that every upper bound for $D$ is also an upper bound  for $I$. Conversely, every upper bound $v$ for $I$ is an upper bound for $\twoheaddownarrow d$, for any $d\in D$, so $d=\sup  \twoheaddownarrow d \le v$, hence every upper bound for $I$ is an upper bound for $D$, so $\sup D=\sup I$, in particular $\sup I$ exists.
Now, $x\ll y$ and $y \le \sup D=\sup I$ implies $x \in I$ by Prop.\ \ref{mmtg}, which means that $x \ll d_1$ for some $d_1 \in D$. This concludes the proof of the first statement.

If $x\ne y$ we continue the argument as follows.
Since $\le$ is antisymmetric, $y\nleq x$, now $x$ cannot be an upper bound for $D$, otherwise $y\le \sup D\le x$, a contradiction. Thus there is $d_2\in D$ such that $d_2 \nleq x$. Since $D$ is directed there is $d\in D$ such that $d_1,d_2\le d$, so $x\ll d_1\le d$ which implies $x\ll d$. Moreover,  $d_2\le d$ and  $d_2 \nleq x$ imply that $x\ne d$.
\end{proof}

The following is an even stronger form
\begin{theorem} \label{tncw}
Let $(X,\le)$ be a continuous poset. Let $B$ be a finite subset. If $B \ll y$ (meaning $b_i\ll y$ for every $b_i\in B$) and $y \le \sup D$ for a directed set $D$, then there is $d\in D$ such that $B\ll d$. Additionally, if $y \notin B$, $d$ can be found so that $d\notin B$.
\end{theorem}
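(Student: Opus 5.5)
The proof reduces to Theorem \ref{tncv} applied to each element of $B$ separately, then merges the resulting elements of $D$ by directedness. Write $B=\{b_1,\dots,b_n\}$. If $B$ is empty, any $d\in D$ works for the first claim, since $D$ is nonempty. The second claim then needs $d\notin\emptyset$, which is automatic.

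\emph{First claim.} For each $i$, Theorem \ref{tncv} gives $d_i\in D$ with $b_i\ll d_i$. Since $D$ is directed and $B$ is finite, induction on $n$ gives $d\in D$ with $d_i\le d$ for all $i$. Then Theorem \ref{kctt}(ii) yields $b_i\ll d_i\le d$, hence $b_i\ll d$ for every $i$, i.e.\ $B\ll d$.

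\emph{Second claim, under $y\notin B$.} Here $y\ne b_i$ for every $i$. The proof of Theorem \ref{tncv} shows that $b_i$ is not an upper bound of $D$: otherwise $y\le\sup D\le b_i$, and with $b_i\le y$ (from $b_i\ll y$ and Theorem \ref{kctt}(i)) antisymmetry forces $y=b_i$. So there is $e_i\in D$ with $e_i\nleq b_i$. Using directedness again, choose $d\in D$ above all of $d_1,\dots,d_n,e_1,\dots,e_n$.
\begin{itemize}
\item As before, $B\ll d$.
\item If $d=b_j$ for some $j$, then $e_j\le d=b_j$, which contradicts $e_j\nleq b_j$. Hence $d\notin B$.
\end{itemize}

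The only step that needs care is this last one. The naive route, applying Theorem \ref{tncv} with $x\ne y$ to each $b_i$ to get $d_i\ne b_i$, does not work: a common upper bound $d$ of the $d_i$ could still coincide with some other $b_j$. The fix is to keep the witnesses $e_i\nleq b_i$ and include them among the elements dominated by $d$, since the property $e\nleq b$ is inherited upward. The rest is routine bookkeeping with Theorem \ref{kctt}(i)--(ii).
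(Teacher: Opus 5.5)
Your proof is correct and is essentially the paper's argument. The paper obtains each $d_i$ by rerunning the ideal construction $\bigcup\{\twoheaddownarrow d: d\in D\}$ with Prop.~\ref{mmtg} instead of citing Thm.~\ref{tncv}, and then, as you do, takes $d\in D$ above all the $d_i$ and all the witnesses $\tilde d_i\nleq b_i$.

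One correction to your closing remark: the ``naive route'' also works. If $b_j\ll d_j$, $d_j\ne b_j$ and $d_j\le d=b_j$, then $b_j\le d_j\le b_j$ forces $d_j=b_j$ by antisymmetry. So the strict relation $b_j<d_j$ passes to any common upper bound $d\ge d_j$, and $d\notin B$.
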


\begin{proof}
Let $I = \bigcup \{\twoheaddownarrow d: d \in D\}$. The proof that it is an ideal, that $\sup I$ exists and $\sup D=\sup I$ is as in Thm.\ \ref{tncv}.

Now, $b_i\ll y$ and $y \le \sup D=\sup I$ imply $b_i \in I$ by Prop.\ \ref{mmtg}, which means that $b_i \ll d_i$ for some $d_i \in D$. As $D$ is directed, there is $d\in D$, such that $d_i\le d$ for every $i$, so $b_i\ll d$ for every $i$. This concludes the proof of the first statement.

If $b_i \ne y$ for every $i$, we continue the argument as follows.
Since $\le$ is antisymmetric, $y \nleq b_i$ for every $i$. So $b_i$ cannot be an upper bound for $D$, otherwise $y\le \sup D\le b_i$, a contradiction. Thus there is $\tilde d_i\in D$ such that $\tilde d_i \nleq b_i$. Since $D$ is directed there is $d\in D$ such that for every $i$, $d_i,\tilde d_i\le d$, so $b_i\ll d_i\le  d$ which implies $B\ll d$. Moreover,  $\tilde d_i\le d$ and  $\tilde d_i \nleq b_i$ imply that $b_i\ne d$ for every $i$.
\end{proof}

An important  consequence of Thm.\ \ref{tncv} is \cite[Prop.\ I-1.9, I-1.19.]{gierz03},

\begin{theorem} \label{oovg}
Let $(X,\le)$ be a continuous poset. Then $\ll$ is interpolating: for every pair $x,z\in X$ such that $x\ll z$ there is $y\in X$ such that $x\ll y \ll z$. Moreover, (strong interpolation) if $x\ne z$ then $y$ can be found so that  $y\ne x$.
\end{theorem}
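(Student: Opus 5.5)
The plan is to apply Thm.\ \ref{tncv} to a well-chosen directed set, namely the approximating ideal of $z$ itself. By continuity, $\twoheaddownarrow z$ is directed and $\sup \twoheaddownarrow z = z$. I would then take
\[
D:=\{\,w\in X:\ \exists y,\ w\ll y\ll z\,\}=\bigcup\{\twoheaddownarrow y:\ y\ll z\}.
\]
The key steps are as follows.

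First, I show that $D$ is directed. It is nonempty because $\twoheaddownarrow z$ is nonempty and each $\twoheaddownarrow y$ is nonempty. For $w_1\ll y_1\ll z$ and $w_2\ll y_2\ll z$, directedness of $\twoheaddownarrow z$ gives $y\ll z$ with $y_1,y_2\le y$. Theorem \ref{kctt}(ii) then gives $w_1,w_2\ll y$, and directedness of $\twoheaddownarrow y$ yields $w\ll y$ with $w_1,w_2\le w$. This $w$ lies in $D$.

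Second, I show that $\sup D=z$. The point $z$ is an upper bound of $D$ by Theorem \ref{kctt}(i). If $v$ is any upper bound of $D$, then for each $y\ll z$ it bounds $\twoheaddownarrow y$, so $y=\sup\twoheaddownarrow y\le v$. Hence $v$ bounds $\twoheaddownarrow z$, and therefore $z\le v$.

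Third, from $x\ll z\le \sup D$, Theorem \ref{tncv} yields $d\in D$ with $x\ll d$. By definition of $D$ there is $y$ with $d\ll y\ll z$. Theorem \ref{kctt}(ii) then gives $x\ll y\ll z$. (Alternatively, one can apply Theorem \ref{tncv} directly with $D=\twoheaddownarrow z$: it gives $d\ll z$ with $x\ll d$, which is already the interpolant. This is simpler, and I would use it.)

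For strong interpolation, assume $x\ne z$. Apply the second part of Theorem \ref{tncv} to $D=\twoheaddownarrow z$, which is directed with supremum $z$, using $x\ll z\le\sup D$ and $x\ne z$. This produces $d\in\twoheaddownarrow z$ with $x\ll d$ and $d\ne x$. Setting $y:=d$ gives $x\ll y\ll z$ with $y\ne x$.

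The only point needing care is confirming that the hypotheses of Theorem \ref{tncv} hold exactly: $D$ must be directed with an existing supremum, and $y\le\sup D$ is required only as an inequality. Continuity supplies all of this. So I expect no real obstacle, since the substantive work was already done in Theorem \ref{tncv}.
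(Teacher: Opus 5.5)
Your proof is correct. In the version you say you would use, you apply both parts of Theorem \ref{tncv} directly to $D=\twoheaddownarrow z$, which by continuity is directed with supremum $z$, and this is exactly the paper's argument; your preliminary construction with $\bigcup\{\twoheaddownarrow y:\ y\ll z\}$ is also valid but unnecessary.
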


\begin{proof}
Let $D= \twoheaddownarrow z$, then $x\ll z\le z= \sup D$ thus by  Theorem \ref{tncv} there is $d\in D$ such that $x\ll d$. Moreover, if $x\ne z$ the element $d$ can be found so that $d\ne x$. Set $y:=d$.
\end{proof}

An important  consequence of Thm.\ \ref{tncw} is

\begin{theorem} \label{nxre}
Let $(X,\le)$ be a continuous poset and let $B$ be a finite subset.
If $B\ll z$ then there is $y\in X$ such that $B\ll y \ll z$. Moreover, (strong interpolation) if $z\notin B$ then $y$ can be found so that  $y\notin B$.
\end{theorem}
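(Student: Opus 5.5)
The plan mirrors the proof of Thm.\ \ref{oovg}, using Thm.\ \ref{tncw} in place of Thm.\ \ref{tncv}. Since $(X,\le)$ is continuous, the set $D:=\twoheaddownarrow z$ is directed and satisfies $\sup D=z$.

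First, I check the hypotheses of Thm.\ \ref{tncw} with $y$ replaced by $z$. We have $B\ll z$ by assumption, and $z\le z=\sup D$. The theorem then gives some $d\in D$ with $B\ll d$. Because $d\in D=\twoheaddownarrow z$, we also have $d\ll z$. Setting $y:=d$ therefore yields $B\ll y\ll z$, which is the first claim.

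For the strong form, suppose $z\notin B$. The additional clause of Thm.\ \ref{tncw}, whose hypothesis is exactly that the element dominated by $\sup D$ (here $z$) does not lie in $B$, lets us choose $d\in D$ with $B\ll d$ and $d\notin B$. Taking $y:=d$ again gives $B\ll y\ll z$ with $y\notin B$.

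If $B=\emptyset$ is allowed, the condition $B\ll z$ is vacuous and any $y\in\twoheaddownarrow z$ works. Such a $y$ exists because $\twoheaddownarrow z$ is directed, hence non-empty. Thm.\ \ref{tncw} also covers this case directly, with $d$ arbitrary. I do not expect any real obstacle: all the substantive work, namely the construction of the ideal $\bigcup_{d\in D}\twoheaddownarrow d$ and the finite-directedness argument, is already contained in Thm.\ \ref{tncw}. The only point to verify is that the choice $D=\twoheaddownarrow z$ meets the theorem's hypotheses, and this is exactly what the continuity axiom supplies.
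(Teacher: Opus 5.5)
Your proposal is correct and follows the paper's proof essentially verbatim: take $D=\twoheaddownarrow z$, apply Thm.~\ref{tncw} with $z$ in the role of $y$ (including its strong clause when $z\notin B$), and set $y:=d$. Your extra remark on the case $B=\emptyset$ is also correct, since continuity guarantees that $\twoheaddownarrow z$ is non-empty.
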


\begin{proof}
Let $D= \twoheaddownarrow z$, then $B\ll z\le z= \sup D$ thus by  Theorem \ref{tncw} there is $d\in D$ such that $B\ll d$. Moreover, if $z\notin B$ the element $d$ can be found so that $d\notin B$. Set $y:=d$.
\end{proof}

\begin{remark}
Observe that we do not claim that in a continuous poset $\ll$ is irreflexive.
If $0\in X$ exists, by Thm.\ \ref{kctt}(iv), $0\ll x$ for every $x\in X$, in particular $0\ll 0$.
\end{remark}

We are ready to prove a convenient characterization of the Scott topology.

\begin{theorem} \label{nncf}
In a continuous poset each set of the form $\twoheaduparrow x$ is Scott-open. The sets of the form $\twoheaduparrow x$ provide a basis for the Scott topology. In particular, for every Scott-open set $U$ we have $U=\cup_{u\in U}  \twoheaduparrow u$.
\end{theorem}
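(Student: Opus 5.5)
The plan is to check the two defining properties of Scott-openness for $\twoheaduparrow x$ directly, and then deduce the basis statement from the approximation axiom.

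\emph{Step 1: $\twoheaduparrow x$ is Scott-open.} First, $\twoheaduparrow x$ is an upper set. Indeed, if $x\ll y$ and $y\le v$, then Thm.~\ref{kctt}(ii) gives $x\ll v$.

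For inaccessibility by directed suprema, take a directed set $D$ with a supremum and $\sup D\in\,\twoheaduparrow x$, that is, $x\ll \sup D$. Applying Thm.~\ref{tncv} with $y=\sup D$ (so that $y\le\sup D$ trivially) gives $d\in D$ with $x\ll d$, i.e.\ $d\in D\,\cap \twoheaduparrow x$. Note that the bare definition of $\ll$ would only give $x\le d$. The strengthened form, which relies on continuity, is therefore the essential ingredient, and this is the one step I expect to carry real content. Since it is already available, it should not be an obstacle.

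\emph{Step 2: the sets $\twoheaduparrow u$ form a basis.} Let $U$ be Scott-open. For $u\in U$ we have $\twoheaduparrow u\subset\,\uparrow u\subset U$, using Thm.~\ref{kctt}(i) and the fact that $U$ is upper. Hence $\bigcup_{u\in U}\twoheaduparrow u\subset U$.

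Conversely, let $y\in U$. By continuity, $\twoheaddownarrow y$ is directed and $\sup \twoheaddownarrow y=y\in U$. By inaccessibility of $U$ there is $u\in U\,\cap\twoheaddownarrow y$, so that $y\in\,\twoheaduparrow u$ with $u\in U$. This proves $U=\bigcup_{u\in U}\twoheaduparrow u$.

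Together with Step 1, every Scott-open set is a union of Scott-open sets of the form $\twoheaduparrow u$. This is exactly the statement that these sets form a basis of the Scott topology. No closure under intersections needs to be checked separately, because every open set is already expressed as a union of members of the family.
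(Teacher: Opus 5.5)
Your proof is correct and follows essentially the same route as the paper. You show that $\twoheaduparrow x$ is upper and use Thm.~\ref{tncv} for inaccessibility by directed suprema; then you use $y=\sup\twoheaddownarrow y$ together with the Scott-openness of $U$ to write $U=\bigcup_{u\in U}\twoheaduparrow u$, and you make explicit the easy inclusion $\twoheaduparrow u\subset\,\uparrow u\subset U$ that the paper only notes in passing.
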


Analogously, the dual Scott topology is generated by sets of the form $\bar \twoheaddownarrow\, v$.

\begin{proof}
Each set $\twoheaduparrow x$ is upper and if $D$ is a directed set with a supremum $v\in \,\twoheaduparrow x$ this means $x\ll v$, and so by Theorem \ref{tncv} there is $d\in D$ such that $x \ll d$ which means $D\, \cap\twoheaduparrow x  \ne \emptyset$, that is  $\twoheaduparrow x$ is Scott open.

Let $U$ belong to the Scott topology, and let $u\in U$ then as $u=\sup \twoheaddownarrow u$, by the definition of Scott open set, there is an element $u_2\in U\cap \twoheaddownarrow u$, that is, $u_2\ll u$, so $u\in  \, \twoheaduparrow u_2$. Note that as $U$ is upper, $\twoheaduparrow u_2\subset U$. So we proved $U=\cup_{u\in U}  \twoheaduparrow u$, thus every Scott open set is indeed a union of sets of the form $\twoheaduparrow x$.
\end{proof}

\begin{definition}
In a poset $(X,\le)$ the {\em biScott topology} or {\em interval topology}\footnote{Actually in domain theory the {\em interval topology} is a different concept. It is the join of the lower topology, whose subbasis of closed sets is $\{\uparrow a, \ a\in X\}$, and the upper topology, whose subbasis of closed sets is $\{\downarrow b, \ b\in X\}$. The term {\em interval topology} is more suggestive of some equivalences that we shall explore with the Alexandrov topology in relativity.} $\mathscr{I}$ is the join of the Scott and dual Scott topologies.
\end{definition}
It is a general fact that the join of two topologies has as a basis the intersections of the basis elements of the two joined topologies, thus  a consequence of Thm.\ \ref{nncf} is

\begin{proposition}
In a weakly bicontinuous poset a basis for the interval topology is provided by the sets of the form
\[
(a,b):=\{x\in X: \ a\ll x \,\, \overline{\ll}\, \,b\}.
\]
\end{proposition}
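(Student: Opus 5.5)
The plan is to combine Thm.\ \ref{nncf} with its dual and then use the general description of a basis for a join of topologies. Since the poset is weakly bicontinuous, it is continuous, so Thm.\ \ref{nncf} applies directly: the sets $\twoheaduparrow a$, $a\in X$, are Scott-open and form a basis of the Scott topology $\mathscr{S}$.

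For the dual Scott topology I will apply Thm.\ \ref{nncf} to the continuous poset $(X,\le^{\textrm{op}})$, which is continuous by co-continuity. Its way-below relation is $\varphi(\le^{\textrm{op}})$, and by the remark after the definition of the way-above relation this equals $(\overline{\ll})^{\textrm{op}}$. So, for this relation, the set $\twoheaduparrow b$ is $\{x: b\,\varphi(\le^{\textrm{op}})\,x\}=\{x: x\,\overline{\ll}\, b\}=\bar\twoheaddownarrow\, b$. Hence the sets $\bar\twoheaddownarrow\, b$, $b\in X$, are dual-Scott-open and form a basis of the dual Scott topology. This is the statement already noted right after Thm.\ \ref{nncf}; the only care needed is in unwinding the ``op'' conventions correctly, which I expect to be the one point requiring attention.

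Next I recall the general fact stated before the proposition: if $\mathcal B_1$ and $\mathcal B_2$ are bases of topologies $\tau_1$ and $\tau_2$, then $\{B_1\cap B_2\}$ is a basis of $\tau_1\vee\tau_2$. Indeed, these intersections are open in the join and cover $X$, because each family covers $X$. They are also closed under intersection up to refinement: given $x\in (B_1\cap B_2)\cap(B_1'\cap B_2')$, pick $B_1''\subset B_1\cap B_1'$ and $B_2''\subset B_2\cap B_2'$ containing $x$. Finally, every subbasic open set of the join, namely a set open in $\tau_1$ or in $\tau_2$, together with every finite intersection of such sets, is a union of these intersections, since $X\in\tau_i$ allows one factor to be absorbed.

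Applying this with $\mathcal B_1=\{\twoheaduparrow a\}$ and $\mathcal B_2=\{\bar\twoheaddownarrow\, b\}$ gives the basis
\[
\twoheaduparrow a\cap \bar\twoheaddownarrow\, b=\{x: a\ll x\}\cap\{x: x\,\overline{\ll}\, b\}=(a,b).
\]
This is the claimed basis of the interval topology $\mathscr{I}$.
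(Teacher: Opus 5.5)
Your proposal is correct and follows the paper's own route: Thm.~\ref{nncf} gives the Scott basis $\twoheaduparrow a$, its dual gives the basis $\bar\twoheaddownarrow\, b$ of the dual Scott topology, and intersections of basis elements form a basis of the join. You also spell out the two steps the paper leaves implicit, namely $\varphi(\le^{\textrm{op}})=(\overline{\ll})^{\textrm{op}}$ and the verification of the join-basis fact.
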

The very definition of biScott topology clarifies that it is self-dual. For a weakly bicontinuous poset this can also be see as follows. The basis open set reads
\[
a \ \varphi(\le) \ x \ (\varphi(\le^{\textrm{op}}))^{\textrm{op}} \ b ,
\]
or equivalently
\[
b  \ \varphi(\le^{\textrm{op}}) \ x \  \varphi(\le)^{\textrm{op}}  \ a \quad \Leftrightarrow \quad  b  \ \varphi(\le^{\textrm{op}}) \ x \ \varphi((\le^{\textrm{op}})^{\textrm{op}})^{\textrm{op}} \ a ,
\]
thus renaming $a'=b$, $b'=a$, $\le'=\le^{\textrm{op}}$ we get precisely the same basis open sets for $\le^{\textrm{op}}$ in place of $\le$.
\begin{remark} \label{mvvr}
Note that in a weakly bicontinuous poset for each $x\in X$, $\twoheaddownarrow x$ and $\bar \twoheaduparrow\, x$ are non-empty (since they are directed and filtered subsets respectively), thus each point $x$ belongs to an interval $(a,b)$ for some $a,b\in X$.
\end{remark}

\begin{proposition} \label{mmcl}
Let  $(X,\le)$ be a weakly bicontinuous poset. Let $D$ be a directed subset whose supremum exists. Regarding $D$ as a net,  $(x_d)_{d \in D}$ (where $x_d = d$) it converges to $\sup D$ in the interval topology.
\end{proposition}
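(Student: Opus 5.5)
The plan is to verify net convergence directly on basic open sets. By the preceding proposition, the sets $(a,b)=\{x: a\ll x\,\,\overline{\ll}\,\,b\}$ form a basis of the interval topology $\mathscr{I}$. So it suffices to show the following: for every basic open set $(a,b)$ containing $s:=\sup D$, there is $d_0\in D$ such that $d\in(a,b)$ for all $d\in D$ with $d\ge d_0$. The net $(x_d)_{d\in D}$ is indexed by $D$ with its own order, which is a directed preorder since $D$ is directed. Hence "eventually" means exactly this. Fix such $a,b$, so that $a\ll s$ and $s\,\,\overline{\ll}\,\,b$.

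\emph{Lower condition.} Since $X$ is continuous and $a\ll s\le\sup D$, Theorem \ref{tncv} gives some $d_0\in D$ with $a\ll d_0$. For every $d\in D$ with $d\ge d_0$, Theorem \ref{kctt}(ii) applied to $a\ll d_0\le d$ yields $a\ll d$. Thus the condition $a\ll x_d$ holds eventually.

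\emph{Upper condition.} Here I will use the dual of Theorem \ref{kctt}(ii) for the way-above relation: $u\le x\,\,\overline{\ll}\,\,y\le v$ implies $u\,\,\overline{\ll}\,\,v$. This can be obtained by duality, since $\overline{\ll}=(\varphi(\le^{\textrm{op}}))^{\textrm{op}}$. It can also be checked directly: if $F$ is filtered with $\inf F\le u$, then $\inf F\le x$, so some $f\in F$ satisfies $f\le y\le v$. Every $d\in D$ satisfies $d\le\sup D=s$, and $s\,\,\overline{\ll}\,\,b$, so this monotonicity gives $d\,\,\overline{\ll}\,\,b$ for \emph{all} $d\in D$. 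No co-continuity is actually needed for this half.

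Combining the two conditions, $x_d\in(a,b)$ for all $d\ge d_0$, which proves convergence to $\sup D$. The only genuinely nontrivial ingredient is the strengthened way-below property of Theorem \ref{tncv}, which relies on the continuity half of weak bicontinuity. The co-continuity half is used only to ensure that the intervals $(a,b)$ form a basis of $\mathscr{I}$ (cf.\ Remark \ref{mvvr}). The remaining steps are routine, and the only point needing care is getting the direction of the dual monotonicity right.
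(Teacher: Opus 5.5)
Your proof is correct and follows essentially the same route as the paper. On a basic set $(a,b)\ni\sup D$, monotonicity of $\overline{\ll}$ gives $d\,\overline{\ll}\,b$ for every $d\in D$, and Theorem~\ref{tncv} supplies $d_0\in D$ with $a\ll d_0$, hence $a\ll d$ for all $d\ge d_0$; your direct check of the dual monotonicity and your remark that co-continuity enters only through the basis of the interval topology are both accurate.
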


\begin{proof}
Let $v=\sup D$, and let $(a,b)\ni v$. This means $a\ll v \, \,\overline{\ll} \, \, b$. Since $v$ is an upper bound, for every $d\in D$, $d\le v$, so $d \, \, \overline{\ll} \,\, b$. By Thm.\ \ref{tncv} and $a\ll v$ there is $d_0\in D$ such that $a\ll d_0$, thus for every $d$ such that $d_0\le d$, $a\ll d \, \, \overline{\ll} \,\, b$, which proves convergence in the interval topology.
\end{proof}

%
%
%
%


\subsection{Lawson topology and weak bicontinuity}

\begin{definition}
On a continuous poset $(X,\le)$ the {\em Lawson topology} $\mathscr{L}$ is  that whose basis is given by sets of the form $\twoheaduparrow x\backslash \uparrow B$ where $B$ is finite.\footnote{In the language of domain theory, it is the join of the Scott and lower topologies.}
\end{definition}

Thus the Lawson topology is finer than the Scott topology (take $B=\emptyset$). Let $B=\{x\}$, since $X$ is Scott open $X=\cup_{z\in X} \twoheaduparrow z$, thus $X\backslash \uparrow B=\cup_{z\in X} (\twoheaduparrow z\backslash  \uparrow B)$ which is Lawson-open. So $\uparrow x$ is Lawson-closed and by Prop.\ \ref{cvxz} $\downarrow x$ is Lawson-closed. So $(X, \mathscr{L}, \le)$ is a semiclosed ordered space \cite{nachbin65}. We shall see in a moment that a stronger result holds.

\begin{proof}[Proof that it is a topology with the mentioned basis]
We need to prove that if a point belongs to a finite intersection of sets of this form then the point belongs to a set of this form included in the finite intersection.

If $u \in \, \twoheaduparrow x_i \backslash \uparrow B_i$ for every $i=1,\ldots, k$, then $u\in \cap_i  \twoheaduparrow x_i$, which is the finite intersection of Scott-open sets hence Scott-open, and by the characterization of the Scott topology (Thm.\ \ref{nncf}), there is $y\in \cap_i  \twoheaduparrow x_i$ such that $u\in \, \twoheaduparrow y\subset \cap_i  \twoheaduparrow x_i$. But $u\notin \,\uparrow B_i$ for every $i$, thus $u\notin \,\cup_i \uparrow B_i=\,\uparrow  B$ with $B=\cup_i B_i$, thus $u\in\, \twoheaduparrow y\backslash \uparrow B\subset \cap_i  (\twoheaduparrow x_i \backslash \uparrow B)=  \cap_i  (\twoheaduparrow x_i \backslash \uparrow B_i)$.
\end{proof}

The following is an improvement over \cite[Thm.\ III-1.10.]{gierz03}

\begin{proposition} Let $(X,\le)$ be a continuous poset.
The graph of $\le$ is closed in the product Lawson topology on $X\times X$, that is,  $(X,\mathscr{L},\le)$ is a closed ordered space. In particular, the Lawson topology is Hausdorff.
\end{proposition}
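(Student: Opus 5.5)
We prove that the complement of the graph of $\le$ is open in $\mathscr{L}\times\mathscr{L}$. Fix $(x,y)$ with $x\nleq y$; we look for a basic Lawson neighbourhood $U\ni x$ and $V\ni y$ such that no $u\in U$, $v\in V$ satisfy $u\le v$. Hausdorffness then follows from the standard fact that an ordered space with closed graph and antisymmetric $\le$ is Hausdorff. Concretely, the diagonal $\{(x,x)\}$ is the intersection of the graph of $\le$ with its transpose. Both are closed, so the diagonal is closed, and this is equivalent to Hausdorffness.

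The key step uses continuity. Since $x=\sup \twoheaddownarrow x$ and $x\nleq y$, the point $y$ is not an upper bound of $\twoheaddownarrow x$. Hence there is $z\ll x$ with $z\nleq y$.

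By the interpolation property (Thm.\ \ref{oovg}) we then choose $w$ with $z\ll w\ll x$. We take
\[
U:=\twoheaduparrow w \quad(\ni x), \qquad V:=X\backslash \uparrow z .
\]
The set $U$ is Scott-open by Thm.\ \ref{nncf}, hence Lawson-open. The set $V$ contains $y$ because $z\nleq y$. It is Lawson-open by the remark after the definition of $\mathscr{L}$, since $\uparrow z$ is Lawson-closed; explicitly, $V=\bigcup_{t}(\twoheaduparrow t\backslash\uparrow\{z\})$.

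Now suppose $u\in U$, $v\in V$ and $u\le v$. Then $z\ll w\ll u\le v$, so $z\le v$ by Thm.\ \ref{kctt}(i),(ii). This contradicts $v\notin \uparrow z$. Hence $U\times V$ misses the graph of $\le$.

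In fact interpolation is not strictly needed: $U=\twoheaduparrow z$ already works, since $z\ll u\le v$ gives $z\le v$. The only delicate point is producing $z$, and this is exactly where the axiom of approximation from below is used.
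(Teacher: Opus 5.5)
Your proof is correct and is essentially the paper's argument: approximation from below gives $z\ll x$ with $z\nleq y$, the product neighbourhood $\twoheaduparrow z\times (X\backslash\uparrow z)$ misses the graph of $\le$, and Hausdorffness follows because the diagonal $\le\cap\le^{T}$ is closed. The interpolation step is superfluous, as you note yourself; the paper uses $\twoheaduparrow z$ directly.
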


\begin{proof}
Let $x \nleq y$. There is an element $u \ll x$ with $u \nleq y$ (otherwise, $y$ would be an upper bound for $\twoheaddownarrow x$ and  by continuity $x=\sup \twoheaddownarrow x\le y$, a contradiction). Then $\twoheaduparrow u$ is a Scott (hence, Lawson)-open neighborhood of $x$  and $X\backslash \uparrow u$ is a  Lawson-open neighborhood of $y$. We have shown that $(x,y)\in \, \twoheaduparrow u\times  X\backslash \uparrow u$. Let us prove that this product Lawson-open set does not intersect the graph of $\le$. Suppose $(a,b)\in \,\twoheaduparrow u\times  X\backslash \uparrow u$ then $u\ll a$ and $u\nleq b$. So, we cannot have $a\le b$ otherwise $u\ll a\le b \Rightarrow u\le b$, which contradicts $u\nleq b$. We proved that the graph of $X\times X\backslash \le$ is open in the product Lawson topology. Now, since $\le$ is a partial order $\le\cap \le^T=\Delta$, the diagonal of $X\times X$, which thus is closed in the product Lawson topology (a condition equivalent to Hausdorffness).
\end{proof}


\begin{proposition} \label{pldf}
In a weakly bicontinuous poset  the Lawson topology is included in the interval topology, thus $\mathscr{S} \subset \mathscr{L}\subset \mathscr{I}$.
\end{proposition}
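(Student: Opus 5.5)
We need to show that every basic Lawson-open set $\twoheaduparrow x\backslash \uparrow B$, with $B$ finite, is open in the interval topology $\mathscr{I}$. The inclusion $\mathscr{S}\subset\mathscr{L}$ was already noted (take $B=\emptyset$). Since $\mathscr{I}$ is by definition the join of the Scott and dual Scott topologies, $\twoheaduparrow x$ is already $\mathscr{I}$-open by Thm.\ \ref{nncf}.

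Because $\twoheaduparrow x\backslash \uparrow B=\twoheaduparrow x\cap\bigcap_{b\in B}(X\backslash \uparrow b)$ and $B$ is finite, it suffices to prove that $X\backslash \uparrow b$ is dual-Scott-open for each $b\in X$. For this I use the dual of Prop.\ \ref{cvxz}. Applied to $\le^{\textrm{op}}$, that proposition says $\downarrow^{\textrm{op}} b=\,\uparrow b$ is closed in the Scott topology of $\le^{\textrm{op}}$, which is the dual Scott topology. Hence $X\backslash\uparrow b$ is dual-Scott-open, and so $\mathscr{I}$-open.

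For a direct check I can also verify this by hand. The set $X\backslash\uparrow b$ is lower. If $F$ is filtered with $\inf F\notin\uparrow b$, then $b$ is not a lower bound of $F$, since otherwise $b\le \inf F$. So some $f\in F$ satisfies $b\nleq f$, which is exactly inaccessibility by filtered infima.

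It follows that each basic Lawson-open set is a finite intersection of $\mathscr{I}$-open sets, hence $\mathscr{I}$-open, and so $\mathscr{L}\subset\mathscr{I}$. Weak bicontinuity is needed only because the Lawson topology was defined on continuous posets; co-continuity is not actually used in the argument. There is no real obstacle here. The only point needing care is the bookkeeping of duality, namely that $\downarrow$ for $\le^{\textrm{op}}$ is $\uparrow$ for $\le$.
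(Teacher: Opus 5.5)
Your proof is correct and follows the paper's argument. Each basic Lawson-open set $\twoheaduparrow x\backslash \uparrow B$ is the intersection of $\twoheaduparrow x$, which is Scott-open by Thm.~\ref{nncf}, with finitely many complements of sets $\uparrow b$, each dual-Scott-closed by Prop.~\ref{cvxz} applied to $\le^{\textrm{op}}$. Your side remark is also right: only continuity is used, to define $\mathscr{L}$ and to make $\twoheaduparrow x$ Scott-open, and co-continuity plays no role.
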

The Lawson topology for $\le^{op}$ is also included in the interval topology and so the interval topology coincides with the join of the two Lawson topologies and can be called biLawson topology.


\begin{proof}
We need to prove that $\twoheaduparrow x\backslash \uparrow B$, where $B$ is finite, is open in the interval topology. We already know that $\twoheaduparrow x$ is open in the interval topology, thus we have only to prove that $\uparrow x$ is closed. But this is closed for the Scott topology of $\le^{op}$ by Prop.\ \ref{cvxz}, i.e.\ the dual Scott topology, thus it is also closed for the interval topology.
\end{proof}

\begin{corollary} \label{cnpp}
Let $(X,\le)$ be a weakly bicontinuous poset.
The graph of $\le$ is closed in the product interval topology on $X\times X$, that is,  $(X, \mathscr{I},\le)$ is a closed ordered space. In particular, the interval topology is Hausdorff.
\end{corollary}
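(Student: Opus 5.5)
The corollary follows by combining the preceding Proposition (closedness of the graph of $\le$ in the product Lawson topology for a continuous poset) with Prop.\ \ref{pldf}. A weakly bicontinuous poset is in particular continuous. Hence the previous Proposition applies and the complement $X\times X\backslash\le$ is open in $\mathscr{L}\times\mathscr{L}$. By Prop.\ \ref{pldf}, $\mathscr{L}\subset\mathscr{I}$, so every basic product open set $U\times V$ with $U,V\in\mathscr{L}$ is also open in $\mathscr{I}\times\mathscr{I}$. Consequently $\mathscr{L}\times\mathscr{L}\subset\mathscr{I}\times\mathscr{I}$, and the complement of the graph of $\le$ is open in the product interval topology, i.e.\ the graph is closed there.

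To keep the argument self-contained, I would also spell out the separating neighborhoods explicitly. Given $x\nleq y$, continuity provides $u\ll x$ with $u\nleq y$. Then $\twoheaduparrow u$ is Scott-open, hence $\mathscr{I}$-open. The set $X\backslash\uparrow u$ is dual-Scott-open by Prop.\ \ref{cvxz} applied to $\le^{\textrm{op}}$, hence also $\mathscr{I}$-open. The product $\twoheaduparrow u\times(X\backslash\uparrow u)$ contains $(x,y)$ and misses the graph, because $u\ll a\le b$ forces $u\le b$.

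Hausdorffness follows in the standard way. Since $\le$ is antisymmetric, the diagonal equals $\le\cap\le^{T}$, where $\le^{T}$ is the transposed graph. The transpose of a closed set is closed under the coordinate swap, which is a homeomorphism of $X\times X$. So the diagonal is closed in $\mathscr{I}\times\mathscr{I}$, which is equivalent to $(X,\mathscr{I})$ being Hausdorff.

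There is no real obstacle here. The only point that needs care is that the inclusion of topologies $\mathscr{L}\subset\mathscr{I}$ passes to the product topologies, and this is immediate from the description of product bases.
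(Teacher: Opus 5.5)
Your proposal is correct and is essentially the paper's proof: the paper simply notes that $\mathscr{I}$ is finer than $\mathscr{L}$ (Prop.~\ref{pldf}), so closedness of the graph of $\le$ in $\mathscr{L}\times\mathscr{L}$ passes to $\mathscr{I}\times\mathscr{I}$. Your explicit neighborhoods $\twoheaduparrow u\times(X\backslash\uparrow u)$ just re-run the paper's proof of the preceding Lawson proposition inside $\mathscr{I}$, and they show in passing that only continuity, not co-continuity, is needed.
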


\begin{proof}
Immediate because the interval topology is finer than the Lawson topology.
\end{proof}

For $a,b\in X$ the sets  $\uparrow a$,   $\downarrow b$ and
\begin{equation}
[a,b]:=\{x: a\le x \le b \}= \, \uparrow a \, \cap \downarrow b
\end{equation}
are closed in the Lawson topology. By Prop.\ \ref{pldf} in a weakly bicontinuous poset they are also closed in the interval topology.


\begin{proposition}
A weakly bicontinuous poset is locally compact in the interval topology iff every point $x\in X$ belongs to some interval $(a,b)$ such that $[a,b]$ is compact in the interval topology.
\end{proposition}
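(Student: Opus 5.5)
We prove the two implications separately. Only the direction ``$\Rightarrow$'' carries real content; the other is immediate.

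\emph{Direction $\Leftarrow$.} Assume every $x$ lies in some $(a,b)$ with $[a,b]$ compact in $\mathscr{I}$. The interval $(a,b)$ is a basic open set of the interval topology, and it is contained in $[a,b]$ because $\ll\,\subset\,\le$ and $\overline{\ll}\subset\,\le$ by Thm.\ \ref{kctt}(i) and its dual. Hence every point has a compact neighborhood. By Cor.\ \ref{cnpp} the interval topology is Hausdorff, so a compact neighborhood at each point gives local compactness in any of the standard senses. In the strong sense (a neighborhood base of compact sets) we argue as follows. Let $U$ be an open neighborhood of $x$ and put $K=[a,b]$. Since $K$ is compact Hausdorff, it is regular. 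Choose an open $V\ni x$ whose closure in $K$ lies in $U\cap(a,b)$. This closure is a closed subset of a compact set, hence compact, and it is a neighborhood of $x$ because $V\cap(a,b)$ is open in $X$.

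\emph{Direction $\Rightarrow$.} Let $x\in X$ and let $C$ be a compact neighborhood of $x$. Then $C$ contains an open set containing $x$, so it contains a basic set $(a',b')\ni x$. We want to shrink this to an interval whose closed counterpart sits inside $C$. From $a'\ll x$ and interpolation (Thm.\ \ref{oovg}) we get $a$ with $a'\ll a\ll x$. Dually, since the way-above relation of a co-continuous poset interpolates, we get $b$ with $x\,\overline{\ll}\, b\,\overline{\ll}\, b'$.

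Next we check $[a,b]\subset(a',b')$. If $a\le y\le b$, then $a'\ll a\le y$ gives $a'\ll y$ by Thm.\ \ref{kctt}(ii). The dual property gives $y\,\overline{\ll}\, b'$ from $y\le b\,\overline{\ll}\, b'$. So $y\in(a',b')\subset C$. Now $[a,b]$ is closed in the interval topology, as noted after Cor.\ \ref{cnpp} via Prop.\ \ref{pldf}. Being a closed subset of the compact set $C$, it is compact. Moreover $x\in(a,b)\subset[a,b]$, which completes this direction.

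The only point requiring care is the dual interpolation for $\overline{\ll}$. It follows by applying Thm.\ \ref{oovg} to the continuous poset $(X,\le^{\textrm{op}})$ and translating back through $\overline{\ll}=(\varphi(\le^{\textrm{op}}))^{\textrm{op}}$. The same translation gives the dual of Thm.\ \ref{kctt}(ii), namely $u\,\overline{\ll}\, v\le w\Rightarrow u\,\overline{\ll}\, w$ and $u\le v\,\overline{\ll}\, w\Rightarrow u\,\overline{\ll}\, w$.
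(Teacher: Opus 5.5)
Your proof is correct and follows the same route as the paper's. For $\Rightarrow$ you shrink a basic interval inside a (relatively) compact neighbourhood by two-sided interpolation, so that the interval-closed set $[a,b]$ sits inside it. For $\Leftarrow$ you use $(a,b)\subset[a,b]$ together with Hausdorffness; the paper delegates that last step to Willard's theorem, while you spell it out, and your inclusion $[a,b]\subset(a',b')$ is the correct form of the step the paper misprints as $[a,b]\subset(a,b)$.
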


Recall that under Hausdorffness the existence of one compact neighborhood is equivalent to the existence of a basis of compact neighborhoods \cite[Thm.\ 18.2]{willard70}.

\begin{proof}
If it is locally compact then $x\ni O$ with $O$ relatively compact. Then there is $(c,d)\ni x $ such that $(c,d)\subset O$ hence relatively compact. Interpolating, $x\in (a,b)\subset (c,d)$ thus $[a,b]\subset (a,b)\subset O$, hence $[a,b]$ is compact. The other direction is clear.
\end{proof}

\begin{definition}
A weakly bicontinuous poset is said to be {\em interval-compact} if the intervals $[a,b]$ are compact in the interval topology.
\end{definition}

The following is \cite[Thm.\ 6.1]{martin06}, but  the assumption of bicontinuity is here improved to weak bicontinuity, see also \cite{keimel09b}.

\begin{theorem}[Interval-compactness implies Conditional bicompleteness]\label{oftr} $\empty$\\
  Let $(X,\le)$ be  weakly bicontinuous and interval-compact.  Then every directed subset of $X$ that is bounded above has a supremum, and every filtered subset bounded below has an infimum.
\end{theorem}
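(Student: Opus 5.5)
By duality (passing to $\le^{\textrm{op}}$, which swaps $\ll$ with $\overline{\ll}$ and keeps weak bicontinuity, the interval topology and the intervals $[a,b]$ unchanged), it suffices to prove the statement for directed sets. So let $D$ be directed and bounded above by some $u$. I will produce the supremum as a limit point of $D$ in a compact interval and then use closedness of the order.

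\textbf{Step 1: place a tail of $D$ in a compact interval.} Fix $d_0\in D$. Since $\twoheaddownarrow d_0$ is directed, hence non-empty, I pick $a\ll d_0$; in particular $a\le d_0$. Every $d\ge d_0$ in $D$ satisfies $a\le d\le u$. The set $D_0:=\{d\in D: d\ge d_0\}$ is cofinal in $D$, so it is directed, and it has the same upper bounds as $D$. Thus $D_0\subset [a,u]$, and $[a,u]$ is compact by interval-compactness.

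\textbf{Step 2: extract a limit.} I regard $D_0$ as a net $(x_d)_{d\in D_0}$ with $x_d=d$. By compactness it has a cluster point $v\in[a,u]$ in the interval topology.

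\textbf{Step 3: $v$ is an upper bound.} For fixed $d\in D_0$, the net is eventually in $\uparrow d$, which is closed in the interval topology by Prop.~\ref{pldf} and Prop.~\ref{cvxz}. Hence the cluster point satisfies $v\in\uparrow d$, i.e.\ $d\le v$. Since $D_0$ is cofinal in $D$, $v$ bounds all of $D$.

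\textbf{Step 4: $v$ is least.} Let $w$ be any upper bound of $D$. Then $D_0\subset\downarrow w$, which is closed (it is Scott-closed, hence interval-closed). A cluster point lies in the closure of the net's range, so $v\le w$. Therefore $v=\sup D_0=\sup D$.

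\textbf{Main obstacle.} The key point is identifying the right closed sets. Everything relies on $\uparrow d$ and $\downarrow w$ being closed in the interval topology, both guaranteed earlier (the $[a,b]$ closedness remark and Cor.~\ref{cnpp}). The only subtlety is Step 1, which needs a lower bound for a tail of $D$, and this is exactly what the non-emptiness in Remark~\ref{mvvr} supplies. I also need cluster points rather than limits, so no Hausdorff uniqueness argument is required, although Cor.~\ref{cnpp} would provide it.
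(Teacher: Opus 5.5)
Your proof is correct and follows essentially the same route as the paper's first proof: restrict to the tail of $D$ above $d_0$, take a cluster point in a compact interval, and use the interval-closedness of $\uparrow d$ and $\downarrow w$ to identify it as the supremum. The only difference is that choosing $a\ll d_0$ is unnecessary, because the tail already lies in $[d_0,u]$, which is compact by interval-compactness; so the ``subtlety'' you flag in Step 1 does not actually arise.
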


We recall that by Prop.\ \ref{mmcl} if $D$ is a directed subset and $\sup D$ exists, then the net $(x_d)_{d \in D}$ (where $x_d = d$) converges to $\sup D$ in the     interval topology.

\begin{proof}
  We treat the directed case; the filtered case is dual.  Let $D
  \subset X$ be directed and bounded above by $u$, and fix $d_0 \in
  D$.  Replacing $D$ by $\{d \in D : d_0 \leq
  d\}$ (which is cofinal and has the same upper bounds) we may assume
  $D \subset [d_0,u]$.
   Since  $[d_0,u]$ is $bi\mathscr{S}$-compact the net  $D$ has a cluster point $m_0$.   For each $d \in D$, the tail $\{d' \in D : d \leq d'\}$
  is a net in the $bi\mathscr{S}$-closed set $\uparrow d$, so $m_0 \in \, \uparrow d$, i.e.\
  $d\leq m_0$, that is, $m_0$ is an upper bound of $D$.  For any upper bound $z$ of $D$, the set $\downarrow{z}$
  is  Scott-closed, hence $bi\mathscr{S}$-closed, and contains every tail of
  $D$, so $m_0 \in \,\downarrow{z}$, i.e.\ $m_0 \leq z$.  Thus $m_0 =
  \sup D$.
\end{proof}

\begin{proof}[Second proof.]
Let $\mathscr{I}$ be the interval topology, then $(X,\mathscr{I}, \le)$ is a closed ordered space by Cor.\ \ref{cnpp}. We proved the statement of the theorem for these more general spaces in the recent paper \cite{minguzzi26c}.
\end{proof}

\begin{remark}
Keimel \cite{keimel09b} states that in an interval-compact weakly bicontinuous poset the Lawson topology coincides with the biScott topology. The proof is incorrect in the very last inclusion step and the statement is actually false. A counterexample can be constructed letting $X=\mathcal{P}_{\textrm{fin}}(\mathbb{N})$ be the collection of finite subsets of $\mathbb{N}$ ordered by inclusion. Then $(X,\le)$ is jointly bicontinuous with $\ll\,=\le\,=\, \overline{\ll}$, the biScott topology is discrete but $\{\varnothing\}$ is not a Lawson-open set for $\varnothing\in X$. Thus the Lawson topology is strictly contained in the biScott topology.
\end{remark}


\begin{definition}
A poset $(X,\le)$ is {\em bicontinuous}\footnote{Notice that in \cite{keimel09b,ebrahimi14,ebrahimi15} these posets are called {\em jointly bicontinuous}.}
if both $(X,\le)$ and $(X,\le^{op})$  are continuous and the way-below relation of the latter is the dual of the way-below relation of the former. 
Equivalently, it is weakly bicontinuous and $\ll\,=\overline{\ll}$.
\end{definition}

\begin{remark} \label{reqt}
Thus, for a bicontinuous poset  the sets
\begin{equation}
(a,b):= \{x: a\ll x \ll b \}= \, \twoheaduparrow a \, \cap \twoheaddownarrow b
\end{equation}
form a basis for the   interval topology.
\end{remark}
Observe that most results have been obtained under a weak bicontinuity assumption. However, in a spacetime interpretation in which the way-below relation is interpreted as chronology the bicontinuity assumption seems more natural unless one wants to work with two chronological relations.

The following definition comes from \cite[Sec.\ 6]{martin06}

\begin{definition}
A {\em globally hyperbolic poset} is a bicontinuous interval-compact poset. 
\end{definition}

By Thm.\ \ref{oftr} globally hyperbolic posets are conditionally complete. By Remark \ref{mvvr} they have a locally compact interval topology.

\section{From causality of $(M,g)$ to continuity of $(M,D_p)$} \label{qcft}

 Throughout this work, all manifolds are assumed Hausdorff and second countable, and hence paracompact. A spacetime $(M, g)$ is a connected, time-oriented Lorentzian manifold with metric $g$ of signature $(-,+,\ldots,+)$, assumed $C^2$ (although $C^{1,1}$ would be sufficient). When we speak of a curve $\gamma$, we may refer either to the map $\gamma: I \to M$ or to its image. We refer to \cite{minguzzi18b} for most causality-theoretic results. The spacetime results presented here generalize to the Lorentz-Finsler case, as the entire causality theory translates to that setting almost word for word.

Given the spacetime $(M,g)$, in order to use the result of the previous section, we need to decide a convenient choice for $\le$. In causality theory
there are usually several interesting reflexive relations, which under some conditions get identified or become closed in the manifold topology, e.g.\ $J,D,K, J_S$, but just one natural open relation
 - the chronological relation $I$. Since  the basis for the Scott topology is given by the sets of the form $\twoheaduparrow x$, it is natural to identify $\twoheaduparrow x$ with $I^+(x)$ and so the way-below relation with the chronological relation:
\begin{equation}
x\ll y \Leftrightarrow (x,y)\in I.
\end{equation}
Observe that with such identification $\ll$ is irreflexive which is a property not necessarily shared by the way-below relation, in general.

Having made this choice we look for the choice of $\le$ that induces it.
Since we want to work with posets that are continuous, Prop.\ \ref{cnzz} imposes
\begin{equation} \label{cbbb}
x\le y \Leftrightarrow  I^-(x)\subset I^-(y)
\end{equation}
This is the relation $D_p$ we introduced in \cite{minguzzi07b}\cite[Sec.\ 4.1]{minguzzi18b}
\[
D_p=\{(x,y):  x \in \overline{J^-(y)}\}=\{(x,y):  I^-(x)\subset I^-(y)\}
\]
The relation $D_p$ is reflexive and transitive and its antisymmetry  is equivalent to past distinction \cite[Thm.\ 3.3]{minguzzi07b}. Moreover, it satisfies  $D_p\circ I\subset I$ and in fact it is the largest relation that satisfies this property \cite[Lemmas 2.7, 2.8]{minguzzi07e}. Observe that under past-distinction $(M,D_p,I)$ is not a causal space in the sense of Kronheimer and Penrose because $I \circ D_p\not\subset I$, in general. Unfortunately, this contradicts the property (ii) of Theorem \ref{kctt}.

Interestingly, recalling the relation
\[
D_f=\{(x,y):  y \in \overline{J^+(x)}\}=\{(x,y):  I^+(x)\supset I^+(y)\}
\]
and $D:=D_p\cap D_f$ we have that $D$ is reflexive, transitive and it is antisymmetric under weak distinction \cite[Sec.\ 4.1]{minguzzi18b}. Moreover,
under weak distinction $(M,D,I)$ is a Kronheimer and Penrose's causal space as $D\circ I \cup I \circ D\subset I$, and in fact $D$ is the largest relation  with this property \cite[Lemmas 2.7, 2.8]{minguzzi07e}.

The most natural solution of the previous difficulty with $D_p$ seems to be demanding the condition $D=D_p$, namely {\em future reflectivity} \cite[Def.\ 4.6]{minguzzi18b}. It admits several equivalent characterizations, the simplest being
\[
I^-(p) \subset I^-(q) \Rightarrow I^+(q) \subset I^+(p).
\]
Another interesting form is $D_f=\bar J$.
In general $D \subset D_p$ and so past distinction implies weak distinction. Under future reflectivity, the two conditions coincide.


Let us investigate whether these heuristic conditions are indeed necessary.
\begin{proposition} \label{oovt}
The condition $I\circ D_p\subset I$ is equivalent to future reflectivity.
\end{proposition}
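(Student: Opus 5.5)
Future reflectivity will be used in the form stated just before the proposition, namely $I^-(p)\subset I^-(q) \Rightarrow I^+(q)\subset I^+(p)$, which is exactly the inclusion $D_p\subset D_f$. The convention for composition needs care. In $D_p\circ I\subset I$ (stated above as valid in general) the intended reading is ``$(x,y)\in D_p$, $(y,z)\in I$ imply $(x,z)\in I$.'' So $I\circ D_p\subset I$ means: $(x,y)\in I$ and $(y,z)\in D_p$ imply $(x,z)\in I$. I will use this reading throughout.

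\emph{Future reflectivity $\Rightarrow$ $I\circ D_p\subset I$.} Let $(x,y)\in I$ and $(y,z)\in D_p$. Future reflectivity gives $(y,z)\in D_f$, that is, $I^+(z)\subset I^+(y)$. I cannot conclude directly, so I first push $x$ forward. Since $I^+(x)$ is open and contains $y$, choose $y'\in I^-(y)\cap I^+(x)$, so $x\ll y'\ll y$. Then $y'\in I^-(y)\subset I^-(z)$ because $(y,z)\in D_p$. Hence $x\ll y'\ll z$, and by transitivity of $I$ we get $(x,z)\in I$.

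This direction in fact does not use future reflectivity at all. The equality $I^-(y)=I^-(z)$ does not matter: $I^-(y)\subset I^-(z)$ alone gives $I^+(x)\ni$ points of $I^-(z)$ open near $y$. So $I\circ D_p\subset I$ holds in every spacetime, and the same argument shows $D_p\circ I\subset I$. This suggests I am using the wrong convention.

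Fix instead the opposite reading: $(x,y)\in D_p$, $(y,z)\in I$ imply $(x,z)\in I$. That is the composition mismatch noted in the text as failing in general, where ``$I\circ D_p\not\subset I$'' was paired with Kronheimer--Penrose property (ii), namely $\le\circ\ll\subset\ll$. In this reading, $u\le x\ll y$ gives $u\ll y$, so $I\circ D_p$ means: $(u,x)\in D_p$ and $(x,y)\in I$ imply $(u,y)\in I$. Hmm, that is the automatically true one. So the problematic one must be the other order: $x\ll y$ and $(y,v)\in D_p$ imply $x\ll v$.

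Recheck it. From $y\in I^+(x)$ and $I^-(y)\subset I^-(v)$, the open set $I^+(x)\cap I^-(y)$ is nonempty and lies in $I^-(v)$, so $x\ll v$. This is again automatically true. The genuinely nontrivial statement is therefore the other one: $(x,y)\in D_p$ and $(y,z)\in I$ should imply $(x,z)\in I$. Here $I^-(x)\subset I^-(y)$ does not control $I^+(x)$, and that is what I prove equivalent to future reflectivity.

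\emph{Future reflectivity $\Rightarrow$ the inclusion.} Let $(x,y)\in D_p$ and $(y,z)\in I$. Reflectivity gives $I^+(y)\subset I^+(x)$, so $z\in I^+(x)$.

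\emph{The inclusion $\Rightarrow$ future reflectivity.} Suppose $I^-(x)\subset I^-(y)$. For each $z\in I^+(y)$ we have $(x,y)\in D_p$ and $(y,z)\in I$, so the hypothesis gives $z\in I^+(x)$. Thus $I^+(y)\subset I^+(x)$, which is future reflectivity.

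The main obstacle is purely notational: determining which composition order the paper means by $I\circ D_p$. Once it is fixed as ``first $D_p$, then $I$'', both directions are the one-line arguments above. The other order is a tautology, by the openness argument using a point of $I^+(x)\cap I^-(y)$.
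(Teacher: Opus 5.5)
Your final argument is correct and is essentially the paper's proof. The paper reads $I\circ D_p$ exactly as you eventually do (first $D_p$, then $I$) and gets the forward direction from $D=D_p$ together with $I\circ D\subset I$. For the converse it takes $z\in I^+(y)$ and concludes $z\in I^+(x)$; its extra step approximating an arbitrary $w\in I^+(y)$ is redundant, as your version shows.

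You should, however, delete the exploratory paragraphs, because they contain false interim claims. You call ``$(u,x)\in D_p$, $(x,y)\in I \Rightarrow (u,y)\in I$'' automatically true, but it is exactly the nontrivial condition. You also say your interpolation argument gives $D_p\circ I\subset I$ under your first reading, where that is the nontrivial composition. Finally, the trivial order needs no openness argument at all, since $x\in I^-(y)\subset I^-(z)$ directly.
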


\begin{proof}
We already mentioned that future reflectivity implies $D=D_p$ and so  $I\circ D_p\subset I$. Suppose $I\circ D_p\subset I$. Let $(x,y)\in D_p$ namely $I^-(x)\subset I^-(y)$, and let $z\in I^+(y)$, then $z\in I^+(x)$. If $w\in I^+(y)$ we can take $z$ so close to $y$ that $z\in I^-(w)$ thus $w\in I^+(x)$. As $w$ is arbitrary $I^+(y)\subset I^+(x)$. We proved that  $I\circ D_p\subset I$ implies, for every pair $x,y$ with  $I^-(x)\subset I^-(y)$, that $I^+(y)\subset I^+(x)$, namely future reflectivity.
\end{proof}

Identifying relations with their graphs we have
\begin{corollary} \label{ncgf}
On $(M,g)$ a continuous poset $(M,\le)$ with $\ll\,=I$ has necessarily $\le =D_p$, moreover, past-distinction and future reflectivity hold.
\end{corollary}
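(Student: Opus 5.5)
The plan is to read off each conclusion from the general poset results of Sec.~\ref{cpto} once $\ll$ is replaced by $I$.

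\emph{Step 1: $\le\,=D_p$.} Since $(M,\le)$ is continuous, Prop.~\ref{cnzz} gives $x\le y$ iff $\twoheaddownarrow x\subset\,\twoheaddownarrow y$. With $\ll\,=I$ we have $\twoheaddownarrow x=I^-(x)$, so $x\le y$ iff $I^-(x)\subset I^-(y)$. This is precisely Eq.~(\ref{cbbb}), i.e.\ $\le\,=D_p$ as graphs.

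\emph{Step 2: past-distinction.} Because $(M,\le)$ is a poset, $\le$ is antisymmetric. By Step 1, $D_p$ is therefore antisymmetric: $I^-(x)=I^-(y)$ implies $x=y$. This is the definition of past-distinction, in agreement with the equivalence recalled from \cite[Thm.\ 3.3]{minguzzi07b}.

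\emph{Step 3: future reflectivity.} Two routes work, and I would give the first and mention the second.
\begin{itemize}
\item Prop.~\ref{jsse} says every continuous poset satisfies $\twoheaddownarrow x\subset\,\twoheaddownarrow y\Rightarrow\,\twoheaduparrow x\supset\,\twoheaduparrow y$. Translated via $\ll\,=I$, this reads $I^-(x)\subset I^-(y)\Rightarrow I^+(y)\subset I^+(x)$, which is exactly the characterization of future reflectivity recalled above.
\item Alternatively, Thm.~\ref{kctt}(ii) gives $\ll\circ\le\,\subset\,\ll$. In relational notation this is $I\circ D_p\subset I$, and Prop.~\ref{oovt} shows that this condition is equivalent to future reflectivity.
\end{itemize}

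There is no real obstacle. The only point needing care is the bookkeeping of composition order and of the directions of inclusions when translating $\twoheaddownarrow,\twoheaduparrow$ into $I^\mp$. The substantive work is already contained in Props.~\ref{cnzz} and \ref{jsse}, and in Prop.~\ref{oovt} if the second route is used.
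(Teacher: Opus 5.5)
Your proposal is correct and matches the paper's own proof: Prop.~\ref{cnzz} with Eq.~(\ref{cbbb}) gives $\le\,=D_p$, antisymmetry of $\le$ gives past-distinction, and Prop.~\ref{jsse} gives future reflectivity. Your alternative route, Thm.~\ref{kctt}(ii) read as $\ll\circ\le\,\subset\,\ll$, i.e.\ $I\circ D_p\subset I$, combined with Prop.~\ref{oovt}, is also valid. It seems to be what the paper's secondary remark intends, since that remark is written as ``$\le\circ\ll\subset\ll$ and Prop.~\ref{jsse}'', apparently a slip in both the composition order and the cited result.
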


\begin{proof}
The necessity of $\le =D_p$ is given by Prop.\ \ref{cnzz} and Eq.\ (\ref{cbbb}). Past-distinction is the consequence of the antisymmetry of $\le$. Future reflectivity follows from $\le \circ \ll \subset \ll$ and Prop.\ \ref{jsse} or directly from Prop.\  \ref{jsse}.
\end{proof}

These are only necessary conditions. The following results are meant to prove that they are also sufficient.


\begin{proposition} \label{mmtr}
On the past-distinguishing spacetime $(M,g)$, consider the poset $(M,D_p)$.
Let $S$ be a subset, then its upper bounds $u$ are characterized by the inclusion $S\subset D_p^{-}(u)$, or equivalently, by the inclusion $I^-(S)\subset I^-(u)$. Thus, the upper bounds of $S$ coincide with those of $I^-(S)$,  $\sup S$ exists iff $\sup I^-(S)$ exists and in this case
 \begin{equation}
\sup S=\sup  I^-(S).
\end{equation}
\end{proposition}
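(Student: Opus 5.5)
The plan is to unwind the definition of upper bound in the poset $(M,D_p)$ and then use standard facts about chronological pasts. By definition, $u$ is an upper bound of $S$ iff $(s,u)\in D_p$ for every $s\in S$, i.e.\ iff $s\in D_p^-(u)$ for all $s\in S$, which is the inclusion $S\subset D_p^-(u)$. Since $(s,u)\in D_p$ means $I^-(s)\subset I^-(u)$, the condition becomes $I^-(s)\subset I^-(u)$ for all $s\in S$. This is equivalent to
\[
I^-(S)=\bigcup_{s\in S} I^-(s)\subset I^-(u).
\]
So the first two characterizations follow directly from the description $D_p=\{(x,y): I^-(x)\subset I^-(y)\}$.

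Next I would show that $I^-(S)$ has the same upper bounds. Applying the characterization just obtained to the set $I^-(S)$, its upper bounds are the $u$ with $I^-(I^-(S))\subset I^-(u)$. The key identity is
\[
I^-(I^-(S))=I^-(S).
\]
The inclusion $\subset$ holds by transitivity of $I$. The inclusion $\supset$ holds because $I$ is open and interpolating on a spacetime: if $x\ll s$, i.e.\ $x\in I^-(s)$, we can pick $y$ on a timelike curve from $x$ to $s$ with $x\in I^-(y)$ and $y\in I^-(s)$. Hence the two families of upper bounds coincide.

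Finally, since the supremum is defined purely in terms of the set of upper bounds (an upper bound lying below all other upper bounds), equality of these families gives that $\sup S$ exists iff $\sup I^-(S)$ exists. In that case the two suprema coincide, with uniqueness coming from antisymmetry of $D_p$, i.e.\ past distinction.

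There is no serious obstacle here. The only point needing care is the identity $I^-(I^-(S))=I^-(S)$, which relies on the openness/interpolation of the chronological relation. One should also note the degenerate case $S=\emptyset$, where both sides are empty and the statements hold trivially.
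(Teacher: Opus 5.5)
Your proof is correct and follows the paper's argument, which directly unwinds $D_p=\{(x,y): I^-(x)\subset I^-(y)\}$; you also usefully make explicit the identity $I^-(I^-(S))=I^-(S)$ (transitivity plus interpolation along timelike curves), which the paper leaves implicit when it passes to the upper bounds of $I^-(S)$. One small notational caution: avoid writing $x\ll s$ for $x\in I^-(s)$, since in the paper $\ll$ denotes the way-below relation of $D_p$, and its coincidence with $I$ is only established later, with the inclusion $\ll\,\subset I$ requiring future reflectivity.
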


The result implies that any supremum is the supremum of a chronologically past open set.
\begin{proof}
If $u$ is an upper bound for $S$, we have for every $s\in S$, $(s,u)\in D_p$ or $s\in D_p^{-}(u)$, thus $S\subset D_p^{-}(u)$. Thus for every $s\in S$, $I^-(s)\subset I^-(u)$ hence $I^-(S)\subset I^-(u)$. Conversely, if $u$ is such that $I^-(S)\subset I^-(u)$ then for every $s\in S$,  $I^-(s)\subset I^-(u)$, which reads  $s\in D_p^{-}(u)$ and so $S\subset D_p^{-}(u)$.

The last statement follows from the fact that the upper bounds for $S$ coincide with those of $I^-(S)$.
\end{proof}

\begin{proposition} \label{jjit}
On the past-distinguishing spacetime $(M,g)$, consider the poset $(M,D_p)$.
Let $S$ be a directed subset, then $I^-(S)$ is directed and there is a countable chain $C$ such that $I^-(S)=I^-(C)$.
\end{proposition}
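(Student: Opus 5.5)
We prove the two claims in turn: first that $I^-(S)$ is directed for $(M,D_p)$, then that it equals $I^-(C)$ for some countable chain $C$, which we build by a diagonal enumeration using second countability. Throughout, $S$ is directed for $D_p$, and $\le$ means $D_p$.

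\emph{Directedness.} $I^-(S)$ is non-empty because $S\neq\emptyset$ and $I^-(s)\neq\emptyset$ for every $s$. Let $p,q\in I^-(S)$, say $p\in I^-(s_1)$ and $q\in I^-(s_2)$ with $s_i\in S$. Choose $s\in S$ with $s_1,s_2\le s$, i.e.\ $I^-(s_i)\subset I^-(s)$. Then $p,q\in I^-(s)$. Pick a point $r\in I^-(s)$ so close to $s$ that $p,q\in I^-(r)$. This is possible because $I^-(s)=\bigcup_{r\ll_I s} I^-(r)$ with the sets $I^-(r)$ increasing as $r\to s$ along a past-directed timelike curve ending at $s$, and $\{p,q\}$ is a finite set contained in an open union of this increasing family. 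Now $I^-(p)\subset I^-(r)$ and $I^-(q)\subset I^-(r)$, so $p,q\le r$ in $D_p$, with $r\in I^-(S)$. Hence $I^-(S)$ is directed.

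\emph{Countable chain.} The set $O=I^-(S)$ is open. Since $M$ is second countable, $O$ has a countable dense subset $\{p_n\}$; more conveniently, we cover $O$ by countably many sets $I^-(x)$ with $x\in O$. Because $O=\bigcup_{x\in O}I^-(x)$ is an open cover and $M$ is Lindel\"of, there are $x_n\in O$ with $O=\bigcup_n I^-(x_n)$. We now build a chain $c_1\le c_2\le\dots$ in $O$ inductively. Set $c_1=x_1$. Given $c_n$, directedness of $O$ yields an element $c_{n+1}\in O$ with $c_n,x_{n+1}\le c_{n+1}$, so that $I^-(c_n)\cup I^-(x_{n+1})\subset I^-(c_{n+1})$. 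Let $C=\{c_n\}$. This is a countable chain; by past-distinction $D_p$ is antisymmetric, so $C$ is a genuine chain in the poset.

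\emph{Equality and main obstacle.} We have $I^-(C)\subset O$ because $c_n\in O$ and $O$ is a past set, and $O=\bigcup_n I^-(x_n)\subset\bigcup_n I^-(c_n)=I^-(C)$. Hence $I^-(S)=I^-(C)$.

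The only delicate point is the step ``choose $r$ close to $s$'' in the directedness argument. It relies on two standard facts: for every $p\in I^-(s)$ there is a neighbourhood of $s$ contained in $I^+(p)$ (openness of $I$), and $s$ can be approached from the chronological past. Both hold in any spacetime, so the argument does not actually need past-distinction except to make $(M,D_p)$ a poset.
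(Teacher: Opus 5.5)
Your proof is correct and follows essentially the same route as the paper. Directedness comes from an interpolating point $r\in I^-(s)\cap I^+(p)\cap I^+(q)$, and the chain is built inductively via directedness so that it absorbs the pasts of a countable family whose chronological pasts cover $I^-(S)$; the only difference is how that family is obtained. The paper takes a countable dense subset of the open set $I^-(S)$ and uses openness to show each $I^+(x)\cap I^-(S)$ meets it, whereas you get it directly from the Lindel\"of property; the paper's separability argument is the one that carries over to the merely separable topological KP-spaces of Prop.~\ref{jjil}.
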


\begin{proof}
Let $x,y\in I^-(S)$, then there are $p,q\in S$, such that $x\in I^-(p)$, $y\in I^-(q)$. Since $S$ is directed, there is $r\in S$ such that $I^-(p),I^-(q)\subset I^-(r)$, thus $x, y \in I^-(r)$, or $r\in I^+(x)\cap I^+(y)$. As $I$ is open in the manifold topology, there is $z\in I^+(x)\cap I^+(y)\cap I^-(r)$, thus $z\in I^{-}(r)\subset I^-(S)$, and $I^-(x)\subset I^-(z)$, $I^-(y)\subset I^-(z)$, that is $(x,z)\in D_p$, $(y,z)\in D_p$, which proves that $I^(S)$ is directed.

The manifold topology is separable, so let $\{x_k\}$ be a countable subset of $I^-(S)$. Set $z_1=x_1$, then, using directedness of $I^-(S)$ choose $z_{i+1}\in I^-(S)$ so that $I^-(z_{i+1})\supset I^-(z_i)\cup I^-(x_i)$, then $C:=\{z_i\}$ is a chain such that $x_i\in I^-(C)$ for every $i$.  Let $x\in I^-(S)$, as $I^-(S)$ is open, $I^+(x) \cap I^-(S)$ is non-empty so it contains some $x_i$, hence $x\in I^{-}(x_i)\subset I^-(z_{i+1}) \subset I^-(C)$.
\end{proof}

\begin{proposition} \label{vpoi}
On the past-distinguishing spacetime $(M,g)$, consider the poset $(M,D_p)$.
Let $S$ be a directed subset, then $I^-(S)$ is directed and there is a timelike curve $\gamma$ such that $I^-(S)=I^-(\gamma)$.
The upper bounds of $S$ are given by the points in the set\footnote{The interior of this set is the so called {\em common future} of $\gamma$ (resp.\ $I^-(S)$),  a concept introduced by Hawking and Sacks \cite{hawking74}\cite[Sec.\ 4.1]{minguzzi18b}.}
\[
US:=\cap_{y\in\gamma} I^+(y)=\cap_{y\in I^-(S)} I^+(y).
\]
The timelike curve $\gamma$ has a future endpoint, which we denote $p$, if and only if  $\sup S$ exists, and in this case $\sup S=p$ and
\begin{equation}
I^-(S)=I^-(\sup S)=I^-(p).
\end{equation}
Finally, if we know that $I^-(S)=I^-(\eta)$ for some timelike curve $\eta$ then $\sup S$ exists iff $\eta$ has a future endpoint $q$ and in this case the supremum and future endpoint coincide $\sup S=q$ and $I^-(S)=I^-(q)$.
\end{proposition}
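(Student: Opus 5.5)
The plan is to construct $\gamma$ from the chain of Prop.\ \ref{jjit}, read off the upper bounds from Prop.\ \ref{mmtr}, and then relate the supremum to a future endpoint of $\gamma$; the only real work is the last step.

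\textbf{Construction of $\gamma$.} Directedness of $I^-(S)$ is already given by Prop.\ \ref{jjit}. I would rerun that construction with a small strengthening. Enumerate a countable dense subset $\{x_k\}$ of $I^-(S)$. Given $z_i$, use directedness of $S$ to pick $r\in S$ with $z_i,x_i\in I^-(r)$. Then choose
\[
z_{i+1}\in I^+(z_i)\cap I^+(x_i)\cap I^-(r),
\]
which is possible because $I$ is open. Now $z_i\ll z_{i+1}$ chronologically, so consecutive points can be joined by future-directed timelike segments. Concatenating them gives a timelike curve $\gamma$. Every point of the $i$-th segment lies in $I^-(z_{i+1})\subset I^-(S)$, and the density argument of Prop.\ \ref{jjit} gives $I^-(S)\subset I^-(\{z_i\})$. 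Hence $I^-(\gamma)=I^-(S)$.

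\textbf{Upper bounds.} By Prop.\ \ref{mmtr}, $u$ is an upper bound of $S$ iff $I^-(S)\subset I^-(u)$. This says exactly that $u\in I^+(y)$ for every $y\in I^-(S)$, so the upper bounds form $\cap_{y\in I^-(S)}I^+(y)$. The intersection over $\gamma$ is the same set. Since $\gamma\subset I^-(S)$, intersecting over $I^-(S)$ gives a subset of the intersection over $\gamma$. Conversely, each $y\in I^-(S)=I^-(\gamma)$ has a point $y'\in\gamma$ with $y\ll y'$, so $I^+(y')\subset I^+(y)$ and the reverse inclusion follows.

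\textbf{Endpoint and supremum.} Suppose first that $\gamma$ has future endpoint $p$. Every point of $\gamma$ lies in $I^-(p)$. Conversely, if $q\ll p$ then $I^+(q)$ is a neighbourhood of $p$, so it contains points of $\gamma$, giving $q\in I^-(\gamma)$. Hence $I^-(\gamma)=I^-(p)$. Then $p$ is an upper bound, and for any upper bound $u$ we have $I^-(p)=I^-(S)\subset I^-(u)$, i.e.\ $p\,D_p\,u$. So $p=\sup S$ and $I^-(S)=I^-(\sup S)$.

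For the converse, let $p=\sup S$. Then $I^-(\gamma)\subset I^-(p)$, and I must show that $\gamma$ accumulates to, and in fact ends at, $p$. This is where I expect the main obstacle. The first attempt is to argue by contradiction using minimality of $p$ together with past-distinction. If $\gamma$ did not converge to $p$, I would build an upper bound $u$ with $u\ll p$, or more generally with $I^-(u)\subsetneq I^-(p)$. For example, take $u$ on a past-directed timelike curve issuing from $p$ and very close to $p$. If $I^-(\gamma)\subset I^-(u)$, then $p\le u$ would give $u\ll u$, which is impossible since past-distinction implies chronology.

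To secure $I^-(\gamma)\subset I^-(u)$, I would use distinguishing neighbourhoods of $p$ from past-distinction. These control the causal curves $y\to p$ for $y\in\gamma$, and I would combine them with a limit-curve argument: if $\gamma$ escapes a neighbourhood $U$ of $p$ infinitely often, the segments from $\gamma$ to $p$ have a limit causal curve reaching $p$. This lets one shrink the upper bound strictly below $p$, the delicate point. Once $\gamma$ is shown to converge to $p$, the first half of this step gives $I^-(S)=I^-(p)$.

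\textbf{Final claim.} The statement for an arbitrary timelike curve $\eta$ with $I^-(S)=I^-(\eta)$ follows verbatim. The arguments of the last two steps used only the equality $I^-(S)=I^-(\gamma)$ and the fact that $\gamma$ is a timelike curve, never how $\gamma$ was constructed.
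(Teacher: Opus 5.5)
Everything before the converse direction is sound. Your direct construction of a chronological chain, choosing $z_{i+1}\in I^+(z_i)\cap I^+(x_i)\cap I^-(r)$, is simpler than the paper's diagonal refinement of a $D_p$-chain; note that it also uses $r\in\overline{I^-(r)}$, not just openness of $I$. Your description of $US$ and the implication ``endpoint $\Rightarrow$ supremum'' match the paper. The inclusion $\gamma\subset I^-(p)$ does need a word of justification, via closedness of the causal relation in a convex neighbourhood of $p$.

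The genuine gap is the converse, $\sup S=p\Rightarrow\gamma$ ends at $p$. This carries the real content, since it contains $I^-(S)=I^-(\sup S)$, and you leave it as a plan. Run it explicitly. Take $y_k\in\gamma$ running to the future end of $\gamma$ while staying outside a neighbourhood $U$ of $p$. Limits of causal curves from $y_k$ to $p$ give a causal curve $\sigma$ ending at $p$. Every $r\in\sigma\setminus\{p\}$ satisfies $r\in\overline{I^+(y_k)}$ for all $k$, hence $I^+(r)\subset US$.

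If some such $r$ lies in $I^-(p)$, a point of $I^+(r)\cap I^-(p)$ is an upper bound chronologically below $p$. This contradicts minimality of $p$ together with chronology. That is your ``shrinking'' step, and it is the first case of the paper's proof.

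If instead $\sigma$ is null, i.e.\ $(r,p)\in J\setminus I$, then no $r'\in I^+(r)$ has $(r',p)\in D_p$, since otherwise $r\in I^-(r')\subset I^-(p)$. So the upper bounds you can manufacture are never below $p$, and shrinking is impossible. Minimality of $p$ then yields only $I^-(p)\subset I^-(r')$ for all $r'\in I^+(r)$, i.e.\ $I^+(r)\subset I^+(x)$ for every $x\in I^-(p)$. To contradict past-distinction you must upgrade this to $I^-(p)\subset I^-(r)$; then $I^-(r)=I^-(p)$ and $r=p$. That upgrade is a $D_f\Rightarrow D_p$ step of past-reflectivity type. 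Your hypotheses do not provide it, and your sketch does not address it.

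Your heuristics do not fill this gap. Past-distinction at $p$ controls causal curves that end at $p$ and \emph{start} in a small neighbourhood $V$ of $p$. The points of $\gamma$ are not known to lie in $V$; what you actually get is only $I^-(S)\cap V=\emptyset$ when $\gamma$ keeps leaving $U$. Likewise, taking $u$ on a past-directed timelike curve from $p$ presupposes $I^-(\gamma)\subset I^-(u)$. Proving that inclusion is the whole difficulty, not a step toward it.

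Be aware also of how the paper handles the null case. It concludes that $p$ is not the supremum from ``$r'$ is an upper bound with $(r',p)\notin D_p$''. But minimality of $p$ concerns $(p,r')$, not $(r',p)$. So copying that argument will not close your gap; an additional ingredient is needed in this case.
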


\begin{proof}
Let $C=\{z_i\}$ be the countable chain of Prop.\ \ref{jjit}.
For every $i$ let $z_{ji}\in I^-(z_i)$ be points such that $z_{j+1i} \in I^+(z_{ji})$, $\lim_{j\to \infty} z_{ji} =z_i$. Let $y_1=z_{11}$. Observe that $I^-(z_{i+1})$ contains all $z_{ji}$ as it contains $I^-(z_i)$ for every $i$. We choose $y_{i+1}=z_{ r(i)i+1}$ where $r(i)$ is chosen so large that $I^-(y_{i+1})$ contains $y_i$ and $z_{jk}$ for $j\le i$, $k\le i$. Note that $y_{i+1}$ exists because $I^+(y_i)\cap \big(\cap_{j, k\le i} I^+(z_{jk})\big)$ is an open neighborhood of $z_{i+1}$. This gives a sequence of chronologically related points $y_i\in I^-(y_{i+1})$ for all $i$, hence a chain $C_2$, such that, $I^-(C_2)$ includes all points $z_{ij}$, as $z_{ij}\in I^-(z_{r(\max(j,i)), \max(j,i)+1})$.
Clearly, $C_2 \subset I^-(C)$ so $I^-(C_2)\subset I^-(S)$. Conversely, if $p\in I^-(C)$, $p\in I^-(z_{i})$ for some $z_{i}\in C$, thus $z_i\in I^+(p)$ so we can find $j$ such that $z_{ji}\in I^+(p)$, thus $p\in I^-(z_{ji})\subset I^-(C_2)$. Thus $C_2$ is timelike and $I^-(C_2)=I^-(C)=I^-(S)$.
The points of $C_2$ can now be connected by a timelike curve $\gamma$ so that $I^-(C_2)=I^-(\gamma)$.

Suppose that $\gamma$ has a future endpoint $p$. First, $p$ is an upper bound because $I^-(S)=I^-(\gamma)\subset I^-(p)$ (the last inclusion is less trivial than it might seem and is related to the properties of convex neighborhoods \cite[Thm.\ 6]{minguzzi13d}). Let $q$ be an upper bound for $S$, so for $I^-(S)$ and so for $\gamma$. Let $y_k$ be a sequence on $\gamma$, $y_k\in I^-(y_{k+1})$, $y_k\to p$. We have $y_k \in I^-(\gamma) \in I^-(q)$, hence $p\in \overline{I^-(q)}$, or equivalently, $I^-(p)\subset I^-(q)$, that is $(p,q)\in D_p$. This proves that $p=\sup S$.
In this case,  $I^-(\gamma)\subset \cup_i I^-(y_i)\subset I^-(p)$. Conversely, if $z\in I^-(p)$ then $I^+(z)$ is an open neighborhood of $p$, so it contains some $y_k$, $z\in I^-(y_k)\subset I^-(\gamma)$. So $I^-(p)=I^-(\gamma)$.

Observe that $I^-(\gamma)=I^-(S)$ and the upper bounds of the latter set coincide with those of $S$. The condition that $u$ is an upper bound for $I^-(\gamma)$  is $I^-(u)\supset I^-(\gamma)$ or equivalently $I^-(u)\supset \gamma$, which is equivalent to $u\in \cap_{y\in\gamma} I^+(y)$. Note that if $z\in I^-(w)$, $w\in \gamma$ then $I^+(z)\supset I^+(w)$, so  $u\in \cap_{y\in I^-(\gamma)} I^+(y)=\cap_{y\in \gamma} I^+(y)$.

For the penultimate statement, suppose by contradiction that $\gamma$ has no future endpoint. Let $y_k$ be as above. Let $q=\sup S$, then it is an upper bound so $y_k\in I^-(q)$. Let $\sigma_k$ be a causal curve connecting $y_k$ to $q$. By the limit curve theorem there is a past inextendibile continuous causal curve $\sigma^q$ ending at $q$, to which a subsequence, denote again $\sigma_k$, converges.  Thus for every $r\in \sigma^q\backslash\{q\}$, $r\in \overline{I^+(y_k)}$. If $r\in I^-(q)$ we can find $s\in I^-(q)\cap I^+(r)$, hence $y_k\in I^-(s)$ for every $k$, thus $\gamma \subset I^-(s)$ and $I^-(\gamma)\subset I^-(s)$. This would give that $s$ is an upper bound of $I^-(\gamma)$ and $(s,q)\in I\subset D_p$, a contradiction with $q$ being a supremum. Thus $(r,q)\notin I$. Let $r'\in I^+(r)$, then it cannot be $(r',q)\in D_p$ otherwise $r\in I^-(r')\subset I^-(q)$, which we excluded. However, $r\in I^-(r')$ which implies $y_k\in I^-(r')$ and so $\gamma\subset I^-(r')$, or equivalently $I^-(\gamma)\subset I^-(r')$. This proves that $r'$ is an upper bound but $(r',q)\notin D_p$, thus $q$ is not the supremum. The contradiction proves that $\gamma$ has a future endpoint.

Finally, observe that we started from $I^-(S)=I^-(C)$ to construct a timelike curve $\gamma$ with the property $I^-(S)=I^-(\gamma)$. If we already knew that $I^-(S)=I^-(\eta)$ then we could have developed the last part of the proof with $\eta$ in place of $\gamma$, which proves that $S$ has supremum iff $\eta$ has endpoint and if they exist they coincide.
\end{proof}

\begin{theorem}
Let $(M,g)$ be a past-distinguishing and consider the poset $(M,D_p)$. Then $(x,y)\in I$ implies $x\ll y$. The converse inclusion holds under future reflectivity.
\end{theorem}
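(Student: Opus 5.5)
For the first inclusion, take $(x,y)\in I$ and let $S$ be a directed subset of $(M,D_p)$ whose supremum exists and satisfies $(y,\sup S)\in D_p$. We must find $s\in S$ with $(x,s)\in D_p$.

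By Prop.\ \ref{vpoi}, the existence of $\sup S$ gives $I^-(S)=I^-(\sup S)$. The hypothesis $(y,\sup S)\in D_p$ means $I^-(y)\subset I^-(\sup S)$. Since $I$ is open, we can choose $x'\in I^+(x)\cap I^-(y)$. Then
\[
x'\in I^-(y)\subset I^-(\sup S)=I^-(S),
\]
so $x'\in I^-(s)$ for some $s\in S$. Now $x\in I^-(x')$ gives $I^-(x)\subset I^-(x')\subset I^-(s)$, that is, $(x,s)\in D_p$. Hence $x\ll y$. In fact the intermediate point is not needed: $x\in I^-(y)\subset I^-(S)$ already gives $x\in I^-(s)$, and $I\subset D_p$. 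Either way this direction only uses Prop.\ \ref{vpoi}.

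For the converse, assume future reflectivity and $x\ll y$; we want $(x,y)\in I$. The plan is to test $x\ll y$ against the directed set $S=I^-(y)$. By Prop.\ \ref{jjit}, $S$ is directed, since $I^-(I^-(y))=I^-(y)$ and $I^-(S)=I^-(y)$. By Prop.\ \ref{mmtr}, $S$ and $I^-(S)$ have the same upper bounds, and these are exactly the $u$ with $I^-(y)\subset I^-(u)$, i.e.\ $(y,u)\in D_p$. So $y$ is an upper bound lying below every upper bound, and $\sup S=y$. Applying $x\ll y$ with this $S$ yields $s\in I^-(y)$ with $(x,s)\in D_p$.

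The remaining step, which is where future reflectivity enters, is to upgrade $(x,s)\in D_p$ and $(s,y)\in I$ to $(x,y)\in I$. This is the inclusion $D_p\circ I\subset I$ recalled before Prop.\ \ref{oovt}: from $I^-(x)\subset I^-(s)$ and $s\in I^-(y)$, the point $y$ lies in $I^+(s)$, and $y$ is a limit of points of $J^+(s)$. Future reflectivity gives $I^+(s)\subset I^+(x)$ directly from $I^-(x)\subset I^-(s)$, so $y\in I^+(x)$.

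The main obstacle I expect is checking that $\sup I^-(y)=y$ really holds in $(M,D_p)$, which depends on past-distinction so that the supremum is unique. I also expect that without future reflectivity the composition step genuinely fails. That is consistent with the statement claiming the converse only under that hypothesis, and with Prop.\ \ref{oovt}.
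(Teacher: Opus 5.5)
Your proof is correct. The first inclusion is the paper's argument in streamlined form: both rest on the identity $I^-(S)=I^-(\sup S)$ from Prop.~\ref{vpoi}, and, as you observe, the paper's passage through the curve $\gamma$ and the point $w$ is not needed.

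For the converse you take a genuinely lighter route. The paper tests $x\ll y$ against a timelike curve $\eta$ with future endpoint $y$. It then needs the endpoint/supremum correspondence at the end of Prop.~\ref{vpoi} to know that $\sup\eta=y$. You test against $S=I^-(y)$ instead, and there $\sup S=y$ is immediate: by Prop.~\ref{mmtr} and $I^-(I^-(y))=I^-(y)$, the upper bounds of $I^-(y)$ are exactly the elements of $\uparrow y$. The future-reflectivity step that follows is the same in both proofs.

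What your choice buys is that the converse uses only interpolation of $I$, $y\in\overline{I^-(y)}$, past-distinction and future reflectivity. Prop.~\ref{vpoi} then enters only where it does real work, namely $I^-(S)=I^-(\sup S)$ in the first inclusion. Transplanted to low regularity (via Prop.~\ref{kbpo} and Prop.~\ref{mmtb}), the same argument would give $\ll\subset I$ in Thm.~\ref{btop} from ($\star$), past-distinction and future reflectivity alone. It would not need the first-countability and ($\diamond$) hypotheses used by the chain-based argument. It would also shorten the computation $\sup I^-(x)=x$ in the proofs of Thm.~\ref{kssw} and Thm.~\ref{ktsw}. The paper's curve test mainly buys uniformity with its picture of suprema as endpoints of timelike curves and chains.

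There are two harmless slips. First, Prop.~\ref{jjit} presupposes a directed set, so to get directedness of $I^-(y)$ apply it to $\{y\}$ (or argue directly as in the proof of Thm.~\ref{kssw}), not to $S$ itself. Second, the composition used in your last step is $I\circ D_p\subset I$, the inclusion that Prop.~\ref{oovt} shows equivalent to future reflectivity. It is not $D_p\circ I\subset I$, which holds unconditionally and is what your first half uses. Your explicit derivation of $I^+(s)\subset I^+(x)$ from future reflectivity is exactly right, so the conclusion stands.
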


\begin{proof}
Let $(x,y)\in I$ and let $S$ be a directed subset with supremum such that $(y, \sup S)\in D_p$.
We proved above that there is a timelike curve $\gamma$ such that $I^-(\gamma)=I^-(S)=I^-(p)$ where $p=\sup S$. Since $(y,p)\in D_p$, and $ D_p \circ I\subset I$, we have $(x,p)\in I$, thus there is some point $w\in \gamma \backslash \{p\}$ such that $(x,w)\in I$. But $\gamma \subset I^-(\gamma)= I^-(S)$, thus there is some $s\in S$ such that $(x,s)\in I\subset D_p$. This proves that $x\ll y$.

Let $x\ll y$ and let $S$ be a directed subset with supremum such that $(y, \sup S)\in D_p$. We know that there is some $s\in S$ such that $(x,s)\in D_p$. Let $\eta$ be a timelike curve with future endpoint $y$, with $y\notin \eta$, and let $S=\eta$, then,  by the last statement of Prop.\ \ref{vpoj}, $\sup S= y$ and $\eta \subset I^-(\eta)=I^-(y)$,
 $\sup S= y$, so $(y,\sup S)\in \Delta \subset D_p$, and so there is some $s\in \gamma$ such that $(x,s)\in D_p$. But $(s,y)\in I$. Under future reflectivity $D_p=D$, and so $D_p\circ I=D\circ I\subset I$, thus $(x,y)\in I$.
\end{proof}

%
%
%

The following result establishes that we have indeed identified necessary and sufficient conditions for a spacetime to be a continuous poset.

\begin{theorem} \label{kssw}
Let $(M,g)$ be a past-distinguishing future-reflecting spacetime then $(M,D_p)$ is a continuous poset and $I$ is the way-below relation of $D_p$. The choice $\le\,=D_p$ and the imposed causality conditions are necessary if we want the continuity property for the poset and the identification of  the chronological relation with the way-below relation.
\end{theorem}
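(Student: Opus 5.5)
By the preceding theorem, under past-distinction and future reflectivity the way-below relation of $(M,D_p)$ is exactly $I$. Past-distinction makes $D_p$ antisymmetric, so $(M,D_p)$ is a poset. What remains for continuity is to show that, for every $x\in M$, the set $\twoheaddownarrow x = I^-(x)$ is directed and satisfies $\sup I^-(x)=x$.

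I would reduce this to Prop.\ \ref{wwgf} by exhibiting a directed set $D\subset \twoheaddownarrow x$ with $\sup D=x$. Fix $x$ and take a past-directed timelike curve $\eta$ issuing from $x$, with $x\notin\eta$. Parametrized towards $x$, $\eta$ is a chain for $I\subset D_p$, hence a directed subset of $(M,D_p)$, and $\eta\subset I^-(x)=\twoheaddownarrow x$. Moreover $I^-(\eta)=I^-(x)$: one inclusion comes from $\eta\subset I^-(x)$, the other from the fact that each $z\in I^-(x)$ has $I^+(z)$ a neighbourhood of $x$, which therefore contains points of $\eta$. The last statement of Prop.\ \ref{vpoi}, applied with $S=\eta$ and the curve $\eta$ itself, then gives that $\sup\eta$ exists and equals the future endpoint $x$. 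Prop.\ \ref{wwgf} now yields that $\twoheaddownarrow x$ is directed with supremum $x$, so $(M,D_p)$ is continuous. Alternatively, one can argue directly: $I^-(x)$ is directed by Prop.\ \ref{jjit}, and its upper bounds $u$ satisfy $I^-(x)\subset I^-(u)$, i.e.\ $(x,u)\in D_p$, while $x$ is itself an upper bound. This shows $\sup I^-(x)=x$ immediately via Prop.\ \ref{mmtr}, and I would include it as the cleaner route.

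For the necessity part, I would invoke Cor.\ \ref{ncgf}. If $(M,\le)$ is continuous with $\ll\,=I$, then Prop.\ \ref{cnzz} forces $x\le y \Leftrightarrow I^-(x)\subset I^-(y)$, that is $\le\,=D_p$. Antisymmetry then gives past-distinction, and Prop.\ \ref{jsse} together with the equivalence in Prop.\ \ref{oovt} gives future reflectivity.

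I expect the main obstacle to lie in the preceding theorem, namely the identification $\ll\,=I$, and in particular its converse direction, which uses future reflectivity through $D_p\circ I\subset I$. That result is available here. Within the present proof, the only delicate point is checking that the supremum of $I^-(x)$ is exactly $x$ rather than some other point with the same chronological past, and this is settled by antisymmetry (past-distinction).
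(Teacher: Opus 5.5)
Your proposal is correct. Your first route is essentially the paper's. The paper also takes a timelike curve $\eta$ with future endpoint $x$ and uses the last statement of Prop.~\ref{vpoi}. However, it proves directedness of $I^-(x)$ by hand: for $y,z\in I^-(x)$, the open set $I^+(y)\cap I^+(z)\ni x$ meets $I^-(x)$. It then applies Prop.~\ref{vpoi} to $S=I^-(x)$ through $I^-(I^-(x))=I^-(x)=I^-(\eta)$. You instead get directedness of $\twoheaddownarrow x$ for free from Prop.~\ref{wwgf} applied to the chain $\eta$.

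Your preferred second route is genuinely different and more elementary. Apply Prop.~\ref{mmtr} to the singleton $S=\{x\}$; this is also the directed set to feed into Prop.~\ref{jjit} to get directedness of $I^-(x)$. Then the upper bounds of $I^-(x)$ are exactly the $u$ with $(x,u)\in D_p$, so $x=\sup I^-(x)$ is built into the definition of $D_p$, with antisymmetry giving uniqueness. This bypasses Prop.~\ref{vpoi} and every curve, endpoint or limit-curve consideration. The only nontrivial input to continuity is then the identification $\twoheaddownarrow x=I^-(x)$ from the preceding theorem.

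The paper's route stays within the endpoint machinery it needs anyway for $\ll\,=I$ and reuses in low regularity. Your shortcut transfers there as well: in Thm.~\ref{ktsw}(a1), condition ($\star$) together with Prop.~\ref{kbpo} gives $I^-(I^-(x))=I^-(x)$. For the necessity part, Prop.~\ref{jsse} alone already gives future reflectivity, so the appeal to Prop.~\ref{oovt} is superfluous but harmless.
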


Under future reflectivity $D=D_p$ thus in the above theorem we can replace $D_p$ with $D$.

\begin{proof}
The last statement is just Cor.\ \ref{ncgf}.

We already proved that $\ll\,=I$ so it remains to prove that the poset  is continuous.
This means that for every $x\in M$, $\twoheaddownarrow x=I^-(x)$ is directed, has supremum and $\sup \twoheaddownarrow x=x$. Let $y,z\in I^-(x)$ then $I^+(y)\cap I^+(z)$ provides an open neighborhood of $x$ so we can find $w\in I^-(x)\cap I^+(y)\cap I^+(z)$, thus $(y,w)\in I\subset D_p$, $(z,w)\in I \subset D_p$ and $w\in I^-(x)$ which proves that $I^-(x)$ is directed. Let $\eta$ be a timelike curve with future endpoint $x$, $x \notin \eta$. Then $I^-(x)=I^-(\eta)$ and by $I^-(I^-(x))=I^-(x)$ and \ref{vpoi} with $S=I^-(x)$ we get $\sup I^-(x)=\sup I^-(\eta)=x$.
\end{proof}

\begin{proposition} \label{kffw}
Let $(M,g)$ be a past-distinguishing future-reflecting spacetime, and let us consider the continuous poset $(M,D_p)$. The Scott topology is that generated by the future sets $U$, i.e.\ the (necessarily manifold-open) sets $U$ such that $I^+(U)=U$.
\end{proposition}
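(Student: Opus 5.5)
By Thm.\ \ref{nncf}, the Scott topology of the continuous poset $(M,D_p)$ has as basis the sets $\twoheaduparrow x$. By Thm.\ \ref{kssw} these are exactly $\twoheaduparrow x=I^+(x)$. So the plan is to show two things: every Scott-open set is a future set, and every future set is Scott-open. The topology generated by the future sets then coincides with $\mathscr{S}$. In fact the family of future sets is already closed under arbitrary unions and finite intersections, so it is the topology itself.

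\emph{Scott-open implies future set.} Let $U$ be Scott-open. By Thm.\ \ref{nncf}, $U=\bigcup_{u\in U}\twoheaduparrow u=\bigcup_{u\in U} I^+(u)=I^+(U)$, so $U$ is a future set. Any future set is a union of the manifold-open sets $I^+(u)$, hence manifold-open, which explains the parenthetical remark in the statement.

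\emph{Future set implies Scott-open.} Let $U$ satisfy $I^+(U)=U$, so $U=\bigcup_{u\in U}I^+(u)$. Each $I^+(u)=\twoheaduparrow u$ is Scott-open by Thm.\ \ref{nncf}, and unions of Scott-open sets are Scott-open. Hence $U$ is Scott-open.

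Alternatively, one can check Def.\ \ref{kooi} directly. For the upper-set property, take $u\in U$ and $u\le v$, i.e.\ $(u,v)\in D_p$. Write $u\in I^+(w)$ with $w\in U$; then $(w,u)\in I$ and $D_p$-invariance gives $(w,v)\in I$. This step uses $I\circ D_p\subset I$, which is future reflectivity by Prop.\ \ref{oovt}, and is the only delicate point. For inaccessibility, if $\sup S\in I^+(w)$ then $w\ll \sup S$, so some $s\in S$ satisfies $w\ll s$ by Thm.\ \ref{tncv}, and thus $s\in U$.

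Finally, the family of future sets is closed under unions, and under finite intersections because $I^+(x)\cap I^+(y)$ is a union of sets $I^+(z)$ (by openness of $I$). So the future sets form exactly the Scott topology. The main obstacle is only the identification $\twoheaduparrow x=I^+(x)$, which is already supplied by Thm.\ \ref{kssw}.
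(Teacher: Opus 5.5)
Your proof is correct and follows the paper's route: the paper's proof just invokes Thm.~\ref{nncf}, with $\twoheaduparrow x=I^+(x)$ supplied by Thm.~\ref{kssw}, and your first two paragraphs spell this out. One small correction to your optional direct check: the step $(w,u)\in I$, $(u,v)\in D_p\Rightarrow (w,v)\in I$ is the always-valid inclusion $D_p\circ I\subset I$ (since $w\in I^-(u)\subset I^-(v)$), so it is not a delicate point; future reflectivity is not used there and enters only through Thm.~\ref{kssw}.
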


\begin{proof}
It follows from Thm.\ \ref{nncf}.
\end{proof}

\begin{remark}
Combinations of causality properties such as ``past-distinguishing and past-reflecting'' or ``future-distinguishing and future-reflecting'' are not new; in fact, they imply causal easiness, which in turn implies stable causality \cite[Thms.\ 4.107, 4.111]{minguzzi18b}. However, the combination considered here, ``past-dis\-tin\-guish\-ing and future-reflecting'', is novel. For an example of a spacetime with this property that is neither distinguishing nor past-reflecting, consider the dual of \cite[Fig.\ 15]{minguzzi18b}. This is essentially the dual of Malament's example \cite{malament77,malament12}; hence, in such spacetimes there can exist maps $f: M \to M$ that are chronological bijections but not homeomorphisms (with respect to the manifold topology). In this example we can verify that the manifold topology is strictly finer than the Alexandrov topology (not to be confused with the biScott topology, as we will show that they coincide only under stronger conditions; see Thm.\ \ref{dzxd}). Moreover, the Lawson topology differs from both the manifold and Alexandrov topologies: indeed, there exist open sets in the Lawson topology that are open in neither the manifold topology nor the Alexandrov topology.
\end{remark}

\begin{corollary}
Let $(M,g)$ be a past-distinguishing  spacetime such that the past causal cones $J^-(x)$ are closed, then $(M,J)$ is a continuous poset and $I$ is the way-below relation of $J$. The  imposed causality conditions are necessary if we want the continuity property for the poset and the identifications $\le\,=J$ and $\ll\,=I$.
\end{corollary}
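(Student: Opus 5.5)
The plan is to reduce the statement to Thm.\ \ref{kssw} by showing that, under the stated hypotheses, $J=D_p$ and the spacetime is future reflecting. Since the poset $(M,J)$ then literally coincides with $(M,D_p)$, continuity and the identification $\ll\,=I$ transfer without further work.

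\textbf{Step 1: sufficiency, showing $J=D_p$.} The inclusion $J\subset D_p$ always holds, because $x\in J^-(y)\subset\overline{J^-(y)}$. For the converse, recall that $D_p=\{(x,y): x\in\overline{J^-(y)}\}$. If $J^-(y)$ is closed, then $\overline{J^-(y)}=J^-(y)$ and hence $D_p\subset J$. Antisymmetry of $J=D_p$ then follows from past distinction, so $(M,J)$ is a poset.

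\textbf{Step 2: sufficiency, future reflectivity.} I will use the characterization $I\circ D_p\subset I$ from Prop.\ \ref{oovt}. Since $D_p=J$, this becomes $I\circ J\subset I$, which is the standard push-up property of causality theory. Thm.\ \ref{kssw} then gives that $(M,D_p)=(M,J)$ is a continuous poset whose way-below relation is $I$.

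\textbf{Step 3: necessity.} Suppose $(M,J)$ is a continuous poset with $\ll\,=I$. By Cor.\ \ref{ncgf}, the order must equal $D_p$, so $J=D_p$, and past distinction holds. Closedness of $J^-(x)$ follows because $J^-(x)=D_p^-(x)=\overline{J^-(x)}$. (Future reflectivity also follows, but it is automatic given the above.) So the imposed conditions are exactly the necessary ones.

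\textbf{Main obstacle.} The only subtle point is the identity $D_p=\{(x,y):x\in\overline{J^-(y)}\}$, which I take from the paper's earlier recall. Once that is granted, everything else is bookkeeping.
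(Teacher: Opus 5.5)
Your proof is correct and follows essentially the same route as the paper: deduce $J=D_p$ and future reflectivity from the closedness of the sets $J^-(x)$, apply Thm.~\ref{kssw}, and get necessity from Cor.~\ref{ncgf} together with $J^-(x)=D_p^-(x)=\overline{J^-(x)}$. The only minor difference is that you derive future reflectivity yourself, via Prop.~\ref{oovt} and the push-up property $I\circ J\subset I$, whereas the paper cites \cite[Remark 4.14]{minguzzi18b} for it.
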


\begin{proof}
The closure of the sets $J^-(x)$ implies future reflectivity \cite[Remark 4.14]{minguzzi18b} and also $D_p=J$ thus we satisfy the conditions of Thm.\ \ref{kssw}, so $(M,J)$ is a continuous poset and $I$ is the way-below relation of $J$.

Conversely, suppose that $(M,J)$ is a  continuous poset and $I$ is the way-below relation of $J$. By the second statement of  Thm.\ \ref{kssw} $(M,g)$ is a past-distinguishing future-reflecting spacetime and $J=D_p$ which implies that the sets $J^-(x)$, $x\in X$, are closed.
\end{proof}

\begin{theorem} \label{bxxd}
Let $(M,g)$ be a causally continuous spacetime, then $(M,D)$ is a bicontinuous poset and $I$ is the way-below relation of $D$. The choice $\le\,=D$ and the imposed causality condition are necessary if we want the bicontinuity property for the poset and the identification of  the chronological relation with the way-below (equiv.\ way-above) relation.
\end{theorem}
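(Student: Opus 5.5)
Causal continuity means distinguishing plus reflecting (both past and future). The sufficiency part will use Thm. \ref{kssw} and its time-dual. First, future reflectivity together with past distinction allows Thm. \ref{kssw} to apply. It gives that $(M,D_p)$ is a continuous poset with way-below relation $I$. Under future reflectivity $D_p=D$, so $(M,D)$ is continuous with $\ll\,=I$.

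Dually, past reflectivity gives $D_f=D$. Future distinction makes $D_f$ antisymmetric. Applying Thm. \ref{kssw} to the time-reversed spacetime (where $D_p$ becomes $D_f$ read backward and $I$ becomes $I^{-1}$) shows that $(M,D^{\textrm{op}})$ is continuous and its way-below relation is $I^{-1}$. Translating back, the way-above relation $\overline{\ll}$ of $(M,D)$ is $(I^{-1})^{\textrm{op}}=I$. Hence $\ll\,=I=\overline{\ll}$, so $(M,D)$ is weakly bicontinuous with equal way-below and way-above relations, which is bicontinuity.

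I will check that the time reversal is legitimate. The results of Sec. \ref{qcft} only use the spacetime structure, and reversing time orientation swaps $I^\pm$, $D_p\leftrightarrow D_f$ and the distinction/reflectivity conditions. This bookkeeping is routine but must be stated carefully.

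For necessity, suppose $(M,\le)$ is bicontinuous with $\ll\,=I=\overline{\ll}$. Continuity and Cor. \ref{ncgf} give $\le\,=D_p$, past distinction and future reflectivity. The dual of Cor. \ref{ncgf}, applied to $(M,\le^{\textrm{op}})$ on the time-reversed spacetime, gives $\le\,=D_f$, future distinction and past reflectivity. Therefore $D_p=D_f=D$, and the spacetime is distinguishing and reflecting, i.e. causally continuous.

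The main subtlety is the claim that the way-below relation of $\le^{\textrm{op}}$ is $I^{-1}$ when $\overline{\ll}=I$. I will verify it directly from the definition $\overline{\ll}=(\varphi(\le^{\textrm{op}}))^{\textrm{op}}$: this gives $\varphi(\le^{\textrm{op}})=I^{\textrm{op}}=I^{-1}$, which is the chronological relation of the reversed spacetime. With that in place, Prop. \ref{cnzz} for the reversed poset forces $x\le^{\textrm{op}}y\iff I^{+}(x)\subset I^{+}(y)$, i.e. $\le\,=D_f$, as needed.
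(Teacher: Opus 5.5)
Your proposal is correct and follows essentially the same route as the paper: sufficiency comes from Thm.~\ref{kssw} together with its time-dual, which is what the paper's ``analogous argument involving filtered sets and infima'' amounts to. Necessity comes from applying Prop.~\ref{cnzz} to $(M,\le^{\textrm{op}})$, whose way-below relation is $I^{-1}$, to get $\le\,=D_f$ and hence $D_p=D_f=D$. The only cosmetic difference is that you get past and future distinction separately, from the antisymmetry of $D_p$ and $D_f$. The paper instead concludes via antisymmetry of $D$ (weak distinction) plus reflectivity, which is equivalent.
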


Under causal continuity $D=\bar J=K=J_S$ (due to the characterization $D=\bar J$ of reflectivity \cite[Def.\ 4.9(f)]{minguzzi18b}, the transverse ladder \cite[Thm.\ 4.13]{minguzzi18b} ``reflectivity implies transitivity of $\bar J$'', and the identity $K=J_S$ under stable causality Theorem \cite[Thm.\ 4.99]{minguzzi18b}, $J_S$ being the Seifert relation), thus in the previous statement we can replace $D$ with $\bar J$, $K$ or $J_S$.

Note that asking weak bicontinuity and equality of way-below and way-above relation with $I$ immediately promotes the assumed property to bicontinuity as way-below and way-above relations coincide.

\begin{proof}
Let us prove that the proposed structure is a bicontinuous poset. Every causally continuous spacetime is distinguishing and reflecting. The former property means antisymmetry of $D_p$ and $D_f$, hence of $D=D_f\cap D_p$. The latter property implies $D=D_p=D_f$. We already know that $(M,D_p)$ is a continuous poset with way-below relation $I$, so $(M,D)$ is a continuous poset with way-below relation $I$.

The analysis of this section, which involved directed sets and search for suprema has given the result that the way-below relation of $D_p$ is $I$,  that $I^-(x)$ is directed for every $x\in X$ and $\sup I^-(x)=x$. An analogous argument, involving filtered sets and search for infima gives that the way-above relation for $D_f$ is $I$, that $I^+(x)$ is filtered for every $x\in X$ and $\inf I^-(x)=x$. Since $D=D_f$ this is precisely the property of bicontinuity.

To prove necessity, since a bicontinuous poset is a continuous poset, we have already shown that $\le \,=D_p$, that future reflectivity holds and that $D_p$ is antisymmetric. Since $(X, \le^{\textrm{op}})$ is continuous, by Prop.\ \ref{cnzz} applied to it $x\le^{\textrm{op}} y \ \Leftrightarrow \ \{z: z \varphi(\le^{\textrm{op}})  x\} \subset \{z: z \varphi(\le^{\textrm{op}})  y\}$ $\ \Leftrightarrow \ \{z: x (\varphi(\le^{\textrm{op}}))^{\textrm{op}}  z\} \subset \{z: y (\varphi(\le^{\textrm{op}}))^{\textrm{op}}  z\}$ where $(\varphi(\le^{\textrm{op}}))^{\textrm{op}}$ is the way-above relation, which, by bicontinuity coincides with the way-below relation and hence with $I$. Thus $y\le x$ iff $x\le^{\textrm{op}} y$ iff $I^+(x)\subset I^+(y)$, that is $\le \, = D_f$. We have just proved $D_f=D_p$ and so $D=D_f=D_p$. The coincidence $D=D_f$ implies past reflectivity. Since we have proved both reflectivity and antisymmetry of $D$ (weak distinction) the spacetime is causally continuous \cite[Def.\ 4.109]{minguzzi18b}.
\end{proof}

\begin{theorem} \label{dzxd}
Let $(M,g)$ be a causally continuous spacetime, and consider the bicontinuous poset  $(M,D)$. The interval (biScott)  topology coincides with the Alexandrov topology of relativity theory and  with the manifold topology.
\end{theorem}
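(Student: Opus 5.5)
By Theorem \ref{bxxd}, $(M,D)$ is bicontinuous with $\ll\,=\,\overline{\ll}\,=I$. Remark \ref{reqt} then says that a basis of the interval topology is given by the sets
\[
(a,b)=\{x: a\ll x\ll b\}=I^+(a)\cap I^-(b), \qquad a,b\in M .
\]
These are exactly the chronological diamonds, which form a basis of the Alexandrov topology. So the interval topology equals the Alexandrov topology as soon as we use this basis identification. The only subtlety is that the Alexandrov topology is usually defined as the topology generated by the diamonds, and we must know they form a basis. Remark \ref{mvvr} and interpolation (Theorem \ref{oovg}) give this, but we can simply quote the standard fact that both topologies share this generating family.

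It remains to show that the Alexandrov topology coincides with the manifold topology. Here I would invoke the classical result (Kronheimer--Penrose; see \cite{minguzzi18b}) that the Alexandrov topology equals the manifold topology if and only if the spacetime is strongly causal.

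The key step is therefore that causal continuity implies strong causality. This holds because causal continuity implies stable causality, via the causal ladder (\cite[Thms.\ 4.107, 4.111]{minguzzi18b} or the standard ladder), and stable causality implies strong causality.

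For a more self-contained argument, I would prove the two inclusions directly:
\begin{itemize}
\item[(i)] Each $I^+(a)\cap I^-(b)$ is manifold-open because $I$ is open. This inclusion needs no causality condition.
\item[(ii)] Given $x$ and a manifold neighbourhood $O$ of $x$, strong causality provides a causally convex neighbourhood $V\subset O$. Inside $V$, pick $a\in I^-(x)$ and $b\in I^+(x)$ close enough to $x$ that $I^+(a)\cap I^-(b)\subset V$. Here timelike curves from $a$ to $b$ cannot leave $V$, because $V$ is causally convex and $J^\pm$ agree with the local ones in a convex subneighbourhood.
\end{itemize}

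The main obstacle is step (ii), specifically the choice of $a,b$ so that the global diamond stays in $V$. This is exactly where strong causality is used. Citing the ladder result handles it, so I would lean on the reference rather than reprove it.
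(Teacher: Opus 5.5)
Your proposal is correct and follows the paper's proof. With $\ll\,=\overline{\ll}\,=I$ from Theorem \ref{bxxd}, the basis of Remark \ref{reqt} consists exactly of the chronological diamonds $I^+(a)\cap I^-(b)$, so the interval topology is the Alexandrov topology; causal continuity implies strong causality, under which the Alexandrov and manifold topologies coincide. Your optional direct argument is also fine: causal convexity of $V$ already gives $I^+(a)\cap I^-(b)\subset V$ for any $a,b\in V$, so no further shrinking is needed, and the ``subtlety'' about bases is moot because both topologies are generated by the same family of sets.
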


\begin{proof}
By Remark \ref{reqt} on a bicontinuous poset the biScott topology is that generated by the basis $\twoheaduparrow a \, \cap \twoheaddownarrow b=I^+(a)\cap I^-(b)$, thus it   coincides with the so-called Alexandrov topology of relativity theory. Since a causally continuous spacetime is strongly causal, the Alexandrov topology coincides with the manifold topology \cite{penrose72,minguzzi18b}.
\end{proof}

The following answers in the affirmative a question by Martin and Panangaden on the possible coincidence between causal simplicity and bicontinuity (which they expressed in the conclusions of their work \cite{martin06}). Observe that they refer to their more restrictive framework in which they always impose $\le =J$.  They could only prove \cite[Thm.\ 5.2]{martin06} that causal simplicity is equivalent to the Lawson topology being included in the interval topology (a consequence of bicontinuity), see also  \cite[Lemma 4.1]{sharifzadeh19}. The converse direction, and hence the equivalence, was  established by Ebrahimi \cite{ebrahimi14}. We provide novel proofs of both directions.


\begin{corollary}
Let $(M,g)$ be a causally simple spacetime, then $(M,J)$ is a bicontinuous poset and $I$ is the way-below relation of $J$. The imposed causality condition is necessary if we want the bicontinuity property for the poset and the identifications $\le\, =J$ and $\ll\,=I$.
\end{corollary}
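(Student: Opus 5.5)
The plan is to reduce the statement to Theorem \ref{bxxd} by identifying $J$ with $D$ under causal simplicity, and, conversely, to deduce causal simplicity from the identification $J=D$ forced by bicontinuity.

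\emph{Sufficiency.} Recall that a causally simple spacetime is causal with $J^+(x)$ and $J^-(x)$ closed for every $x$. First I would show that causal simplicity implies causal continuity. Closedness of $J^-(x)$ gives $D_p=J$, since $D_p=\{(x,y): x\in\overline{J^-(y)}\}$. Dually, closedness of $J^+(x)$ gives $D_f=J$. Hence $D=D_p=D_f=J$, so the spacetime is reflecting. Because $J$ is antisymmetric in a causal spacetime, $D$ is antisymmetric, i.e.\ the spacetime is weakly distinguishing. By \cite[Def.\ 4.109]{minguzzi18b} the spacetime is therefore causally continuous, a fact that is also part of the standard causal ladder. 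Theorem \ref{bxxd} then says that $(M,D)$ is bicontinuous with way-below relation $I$, and since $D=J$ this is exactly the claim for $(M,J)$.

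\emph{Necessity.} Suppose $(M,J)$ is bicontinuous and $\ll\,=I$. The necessity part of Theorem \ref{bxxd} gives that $(M,g)$ is causally continuous and that $J=\le\,=D=D_p=D_f$. From $J=D_p=\{(x,y): x\in\overline{J^-(y)}\}$ we get $J^-(y)=\overline{J^-(y)}$ for every $y$. From $J=D_f$ we likewise get $J^+(x)=\overline{J^+(x)}$. Antisymmetry of $\le\,=J$ gives causality. Closed causal cones together with causality is exactly causal simplicity.

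The only point needing care is the first step, that closedness of both causal cones yields $D=J$. This follows from the two displayed characterizations of $D_p$ and $D_f$ via closures of $J^\mp$. The other point is that weak distinction plus reflectivity is the definition of causal continuity used in Theorem \ref{bxxd}, so no extra argument is needed there. Everything else is bookkeeping on top of Theorem \ref{bxxd}.
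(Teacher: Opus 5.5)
Your proposal is correct and follows the paper's proof: in both directions you reduce to Theorem \ref{bxxd} through the identification $J=D$ under causal simplicity. The only difference is cosmetic. You check causal continuity and the closedness of the causal cones directly from $J=D_p=D_f$. The paper instead cites the causal ladder, and cites the equivalence between $J=D$ and the closure of $J$ \cite[Thm.\ 4.12]{minguzzi18b}.
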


\begin{proof}
Let  $(M,g)$ be a causally simple spacetime, then it is causally continuous \cite{minguzzi18b} and so, by the first statement of Thm.\ \ref{bxxd},  $(M,D)$ is a bicontinuous poset and $I$ is the way-below relation of $D$. Causal simplicity implies the closure of the causal relation and so the identity $D=J$. Thus $(M,J)$ is a bicontinuous poset and $I$ is the way-below relation of $J$.

Conversely, suppose that $(M,J)$ is a bicontinuous poset such that $\ll\,=I$. By the second statement of Thm.\ \ref{bxxd}, we have causal continuity of $(M,g)$ and $J=D$. But the latter condition expresses the closure of the causal relation (recall \cite[Thm.\ 4.12]{minguzzi18b}) thus $(M,g)$ is causally simple.
\end{proof}

A corollary of Theorem \ref{dzxd} is

\begin{corollary} \label{daxd}
Let $(M,g)$ be a causally simple spacetime, and consider the bicontinuous poset  $(M,J)$. The interval (biScott)  topology coincides with the Alexandrov topology of relativity theory and  with the manifold topology.
\end{corollary}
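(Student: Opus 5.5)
The plan is to reduce Corollary~\ref{daxd} to Theorem~\ref{dzxd} by showing that, under causal simplicity, the poset $(M,J)$ coincides with the poset $(M,D)$. All topologies involved (biScott, Alexandrov, manifold) are then the same objects as in that theorem.

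First, I recall that a causally simple spacetime is causally continuous \cite{minguzzi18b}, so Theorem~\ref{bxxd} and Theorem~\ref{dzxd} apply to $(M,D)$. Next, causal simplicity means that $J$ is closed and the spacetime is causal. In that case $J=\bar J$, and since reflectivity gives $D=\bar J$ (the characterization quoted after Theorem~\ref{bxxd}), we get $D=J$. This is exactly what the preceding corollary already used: $(M,J)$ is bicontinuous with $\ll\,=I$.

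Since the order relations $J$ and $D$ are literally the same set of pairs, their Scott topologies, dual Scott topologies and hence their interval topologies coincide. Theorem~\ref{dzxd} then says the interval topology of $(M,D)$ equals the Alexandrov topology, generated by the sets $I^+(a)\cap I^-(b)$ as in Remark~\ref{reqt}. Because causal continuity implies strong causality, that topology in turn equals the manifold topology. Transferring this through $J=D$ gives the claim.

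The only step requiring care is the identity $D=J$, i.e.\ the chain of inclusions $J\subset D\subset\bar J=J$. The first inclusion holds in general, the second is reflectivity, and the last is closedness of $J$. I do not expect a genuine obstacle: the whole proof is a bookkeeping reduction to the previously established Theorem~\ref{dzxd}.
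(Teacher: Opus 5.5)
Your proposal is correct and follows the paper's route: the paper derives this corollary directly from Theorem~\ref{dzxd}, because causal simplicity implies causal continuity and gives $D=J$ (as in the preceding corollary), so $(M,J)$ and $(M,D)$ are literally the same poset. One minor point: the inclusion $D\subset\bar J$ holds in every spacetime, and reflectivity is what yields the reverse inclusion; your chain $J\subset D\subset\bar J=J$ is unaffected by this.
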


%
%
%

We arrive at the statement by  Martin and Panangaden \cite{martin06} to which we add a necessity result.

\begin{theorem}
Let $(M,g)$ be a globally hyperbolic spacetime then $(M,J)$ is a globally hyperbolic  poset and $I$ is the way-below relation of $J$. The choice $\le\,=J$ and the imposed causality condition are necessary if we want the poset to be globally hyperbolic and the identification of  the chronological relation with the way-below (equiv.\ way-above) relation.
\end{theorem}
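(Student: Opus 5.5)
The plan is to prove sufficiency first and then necessity, both resting on the corollary for causally simple spacetimes.

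\emph{Sufficiency.} Let $(M,g)$ be globally hyperbolic. Then it is causally simple, and the corollary above gives that $(M,J)$ is bicontinuous with $\ll\,=I$ (equivalently, way-above equal to $I$). It then remains to check interval-compactness, i.e.\ that $[a,b]=J^+(a)\cap J^-(b)$ is compact in the interval topology. By Cor.\ \ref{daxd} the interval topology coincides with the manifold topology. Compactness of the causal diamonds $J^+(a)\cap J^-(b)$ in the manifold topology is part of the standard definition of global hyperbolicity. Hence $(M,J)$ is a globally hyperbolic poset.

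\emph{Necessity.} Suppose $(M,\le)$ is a globally hyperbolic poset whose way-below relation is $I$. Being bicontinuous, Thm.\ \ref{bxxd} and the discussion following it give that $(M,g)$ is causally continuous and $\le\,=D$. If moreover we impose $\le\,=J$, as the statement requires, then $J=D$. By \cite[Thm.\ 4.12]{minguzzi18b} this says $J$ is closed, so the spacetime is causally simple, exactly as in the corollary above. By Cor.\ \ref{daxd} the interval topology equals the manifold topology. Interval-compactness then says that $J^+(a)\cap J^-(b)$ is compact in the manifold topology for all $a,b$. Together with causality (indeed strong causality, which follows from causal continuity), this is the definition of global hyperbolicity.

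I would also note that even without prescribing $\le\,=J$ the argument forces it. Interval-compactness gives compactness of $[a,b]_D$, which contains $J^+(a)\cap J^-(b)$ as a subset closed in it, since $J$ is closed in a causally simple spacetime. This again gives compact diamonds, and then global hyperbolicity gives $D=J$.

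The main obstacle is this last point if one wants to derive $\le\,=J$ rather than assume it. Without assuming closure of $J$, one must show that the $D$-diamonds $D^+(a)\cap D^-(b)$ being compact implies compactness of $\overline{J^+(a)\cap J^-(b)}$ and then causal simplicity. I would attempt this by observing that $\overline{J^+(a)\cap J^-(b)}\subset D^+(a)\cap D^-(b)$, so compact $D$-diamonds give relatively compact causal diamonds. Under causal continuity this characterizes global hyperbolicity (via the known ``causally continuous and $\overline{J^+(a)\cap J^-(b)}$ compact'' criterion, or by a limit-curve argument showing that $J$ is closed). Global hyperbolicity then yields $J=D$.
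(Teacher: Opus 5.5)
Your sufficiency half is the paper's argument.

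For necessity, note that the statement claims $\le\,=J$ is \emph{forced}, not assumed. Your main necessity paragraph imposes $\le\,=J$, so it only shows that global hyperbolicity is necessary once that choice has been made. The paragraph beginning ``I would also note'' is circular: the closure of $J$ it uses was derived from $J=D$, that is, from the assumed $\le\,=J$. Drop both paragraphs.

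Your final paragraph is the correct argument and is essentially the paper's route. Two points make it precise.

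First, you must transfer compactness from the interval topology to the manifold topology using Thm.~\ref{dzxd} (causal continuity and $\le\,=D$). Cor.~\ref{daxd} cannot be used here, because it presupposes causal simplicity. With Thm.~\ref{dzxd}, $D^+(a)\cap D^-(b)$ is compact, hence closed, in $M$. Its subsets $I^+(a)\cap I^-(b)\subset J^+(a)\cap J^-(b)$ are therefore relatively compact, and you do not need closedness of $D$.

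Second, the ``known criterion'' you invoke should be pinned down to the one the paper cites. A non-total imprisoning spacetime whose chronological diamonds are relatively compact is globally hyperbolic \cite[Cor.\ 3.3]{minguzzi08e}. Causal continuity gives non-total imprisonment via strong causality. The criterion is stated for chronological diamonds, which sit inside your causal diamonds, so your version follows a fortiori.

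Global hyperbolicity then makes $J$ closed, so $\le\,=D=J$. This last step is what actually establishes the necessity of the choice $\le\,=J$.
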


\begin{proof}
First we prove  that the proposed structure is a globally hyperbolic poset (this is the direction proved by Martin and Panangaden \cite{martin06}).

Global hyperbolicity implies that $J$ is closed and so $D=J$. As $(M,J)=(M,D)$, the poset is bicontinuous and the way-below relation is $I$.
We already know that the biScott topology coincides with the Alexandrov topology of relativity. Global hyperbolicity implies strong causality which implies that the Alexandrov topology coincides with the manifold topology. The poset is globally hyperbolic if the sets $[a,b]=\,\uparrow a\, \cap \downarrow b=J^+(a)\cap J^-(b)$ are compact (in the biScott topology, equivalently in the manifold topology), and they are by the global hyperbolicity of the spacetime. We conclude that $(M,J)$ is a globally hyperbolic poset.

For the necessity, suppose that we know the poset is globally hyperbolic and its way-below relation is $I$. The latter fact implies that the biScott topology coincides with the Alexandrov topology. Since it is bicontinuous, we have $\le\,=D$ and the spacetime is causally continuous which implies strong causality and so the coincides of the Alexandrov topology and the manifold topology. The following sets are compact $[a,b]=\,\uparrow a \,\cap \downarrow b=D^+(a)\cap D^-(b)\supset I^+(a)\cap I^-(b)$, (in the biScott topology and so in the manifold topology) thus the chronological diamonds are relatively compact in the manifold topology. A causally continuous spacetime is non-total imprisoning and a non-total imprisoning spacetime with relatively compact chronological diamonds is globally hyperbolic \cite[Def.\ 4.117]{minguzzi18b} \cite[Cor.\ 3.3]{minguzzi08e}.
\end{proof}

It remains to identify the Lawson topology.





\begin{theorem} \label{kxqp}
If $(M,g)$ is causally continuous then the bicontinuous poset $(M,D)$ has  Lawson topology coincident with the manifold topology (and hence with the interval  and Alexandrov topologies, cf.\ Thm.\ \ref{dzxd}).
\end{theorem}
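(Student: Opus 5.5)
We prove a double inclusion between the Lawson topology $\mathscr{L}$ and the manifold topology $\mathscr{T}$, using the chain $\mathscr{S}\subset\mathscr{L}\subset\mathscr{I}$ of Prop.~\ref{pldf}. Under causal continuity $(M,D)$ is bicontinuous, hence weakly bicontinuous. So Prop.~\ref{pldf} gives $\mathscr{L}\subset\mathscr{I}$, and by Thm.~\ref{dzxd} $\mathscr{I}$ equals the manifold topology. This settles $\mathscr{L}\subset\mathscr{T}$ for free, and the real work is the reverse inclusion $\mathscr{T}\subset\mathscr{L}$.

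For the reverse inclusion, strong causality makes the Alexandrov diamonds $I^+(a)\cap I^-(b)$ a basis of $\mathscr{T}$. It therefore suffices to show that every such diamond is $\mathscr{L}$-open. The factor $I^+(a)=\twoheaduparrow a$ is Scott open, hence Lawson open. The task reduces to showing that $I^-(b)$ is Lawson open, i.e.\ that for each $x\in I^-(b)$ there are $y$ and a finite set $B$ with
\[
x\in\, \twoheaduparrow y\setminus \uparrow B\subset I^-(b),\qquad \uparrow z=D^+(z)=\{w: I^+(w)\subset I^+(z)\}.
\]

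For the point $y$, take any $y\in I^-(x)$. To handle the upper part, we must exclude the complement $\twoheaduparrow y\setminus I^-(b)$ using finitely many sets $D^+(z)$. The natural first attempt is $B=\{z\}$ with $z\in I^+(x)\cap I^-(b)$. Then $x\notin D^+(z)$, since $z\in I^+(x)$ and $D\subset\bar J$ with causality would otherwise give a loop. But a point $w\notin I^-(b)$ need not lie in $D^+(z)$, so a single $z$ does not suffice.

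Here I would use compactness. Choose $y\in I^-(x)$ and $b'\in I^-(b)\cap I^+(x)$ close to $x$, so that the diamond $I^+(y)\cap I^-(b')$ has compact closure $K$ inside a small strongly causal neighbourhood. This is available because causal continuity implies strong causality. Next consider the part of $\twoheaduparrow y$ near $x$ that leaves $I^-(b)$. I would replace $\twoheaduparrow y$ by the smaller Scott-open set $\twoheaduparrow y\cap(\text{something bounded})$. Lawson-open sets can only cut with finitely many $\uparrow z$, so I would instead show directly that $x$ is an $\mathscr{L}$-interior point of $I^-(b)$ via nets. Suppose a net $x_\alpha\to x$ in $\mathscr{L}$ with $x_\alpha\notin I^-(b)$. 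Scott convergence forces $x_\alpha\in I^+(y)$ eventually for all $y\ll x$. Lower-topology convergence forces $x_\alpha\notin D^+(z)$ eventually for each $z$ with $x\notin D^+(z)$. Taking $z$ on a timelike curve from $x$ to $b$, in the limit of $z\to x^+$ we get $I^+(x_\alpha)\not\subset I^+(z)$.

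The main obstacle is turning these infinitely many conditions into finitely many while keeping $x_\alpha$ from escaping to infinity. The Lawson topology on non-compact sets can be coarse, because $\uparrow B$ with $B$ finite is small. I would first try to use the spacelike directions: pick finitely many points $z_1,\dots,z_k\in I^-(b)$ slightly to the future of $x$ whose future cones $D^+(z_i)=\overline{J^+(z_i)}$ cover $\partial I^-(b)\cap \overline{I^+(y)}$ near $x$. This is possible if $I^+(y)\cap I^-(b)$ is small and relatively compact, and the whole argument hinges on that covering step. If the covering fails globally, I would instead compare with the claim $\mathscr{L}=\mathscr{I}$ proved via Keimel-type arguments for local compactness, taking care given the counterexample in the Remark above.
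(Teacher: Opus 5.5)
\paragraph{Verdict: genuine gap in the hard direction.}
Your proof of $\mathscr{L}\subset\mathscr{T}$ is correct and is exactly the paper's argument (Prop.~\ref{pldf} together with Thm.~\ref{dzxd}). Your reduction of $\mathscr{T}\subset\mathscr{L}$ is also legitimate: it suffices to produce, for each $x\in I^-(b)$, a basic set $\twoheaduparrow y\setminus\uparrow B\ni x$ contained in $I^-(b)$. But that step is the entire content of the theorem, and your proposal stops there. You say the argument ``hinges on that covering step'' without proving it.

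The fallback through a Keimel-type identity $\mathscr{L}=\mathscr{I}$ is not available. The Remark following Thm.~\ref{oftr} shows Keimel's claim is false, and in any case $(M,D)$ is interval-compact only when $M$ is globally hyperbolic.

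Moreover, the covering as you formulate it would fail. If the $z_i$ lie to the future of $x$, then $\uparrow z_i=D^+(z_i)\subset D^+(x)$. Yet already in Minkowski space, for any $y\ll x\ll b$, the points of $I^+(y)\setminus I^-(b)$ near the edge $\partial I^+(y)\cap\partial I^-(b)$ are causally unrelated to $x$, so no such $\uparrow z_i$ removes them. Covering only ``near $x$'' also leaves unresolved the escape to infinity that you flag yourself. Finally, the closed sets $D^+(z_i)$ do not support a finite-subcover extraction.

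\paragraph{The missing idea.}
Cover a compact set by the \emph{open} sets $I^+(z)\subset\uparrow z$, using that these are future sets. Choose each $z$ in the past of a point to be covered but outside $D^-(x)$, hence typically spacelike to $x$.

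Concretely, take $y'\ll y\ll x\ll b\ll b'$ close to $x$, with $I^+(y')\cap I^-(b')$ relatively compact (possible by strong causality). Set $K:=\partial I^-(b)\cap\overline{I^+(y)}$.
\begin{itemize}
\item By reflectivity, every $w\in K$ satisfies $(y,w),(w,b)\in D$, hence $w\in I^+(y')\cap I^-(b')$. So the closed set $K$ is compact.
\item $K\cap D^-(x)=\emptyset$: otherwise $b\in I^+(x)\subset I^+(w)$, contradicting $w\notin I^-(b)$.
\item $D^-(x)=\overline{J^-(x)}$ is closed, so each $w\in K$ admits $z_w\in I^-(w)$ with $x\notin\uparrow z_w$. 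Extract a finite subcover $K\subset\bigcup_{i\le k}I^+(z_i)$.
\item Any $p\in I^+(y)\setminus I^-(b)$ lies on a timelike curve from $y$. Its first exit point $w$ from the open set $I^-(b)$ belongs to $K$, so $p\in J^+(w)\subset I^+(z_i)\subset\uparrow z_i$ for some $i$.
\end{itemize}
Thus $x\in\twoheaduparrow y\setminus\uparrow\{z_1,\dots,z_k\}\subset I^-(b)$, with no issue at infinity.

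\paragraph{Comparison with the paper.}
The paper uses the same mechanism with a different compact set. It works inside a causally convex $O$ contained in a convex neighbourhood, where $D\cap(O\times O)=J\cap(O\times O)$ is closed. There it covers the cap $A=\exp_x D_x$ of the future cone of $x$ by finitely many $I^+_O(r_{a_i})$ with $r_{a_i}\notin J^-(x)$. It then shows that $I^+(z)\setminus\bigcup_i\uparrow r_{a_i}$ lies in a prescribed neighbourhood $O'$ of $x$, because timelike curves leaving $O'$ must cross the covered cap.
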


We recall that under causal continuity $D=D_p=D_f=\bar J=K=J_S$.
\begin{proof}
Let us prove that the Lawson topology for $(M,D_p)$ is finer than the manifold topology (we note that this direction does not use past reflectivity).
Since $(M,g)$ is strongly causal \cite[Thm.\ 4.107]{minguzzi18b} the point $x\in M$ admits a causally convex open neighborhood $O$ contained in a convex neighborhood $U$. Moreover, $O$ and $U$ can be found arbitrary small, namely contained in any chosen open neighborhood $O'\ni x$. By causal convexity  $J_{O}=J\cap O\times O$, where $J_O$ is the causality relation of the spacetime $O$ with the restricted metric. But, again by causal convexity, $J_O=J_U\cap O\times O$, where $J_U$ is closed \cite[Thm.\ 2.11]{minguzzi18b}, thus $J_O$ is closed in on the spacetime $(O, g\vert_O)$ and the same holds for $J\cap O\times O$. If $a,b\in O$, $(a,b)\in D_p$ means $a\in \overline{J^-(b)}=J^-(b)$, hence $D_p\cap O\times O=J\cap O\times O$. We want to prove that $x$ admits a Lawson neighborhood included in $O'$. Let $z\in I^-_O (x)$ and let us consider the future causal cone $C_x\subset T_xM$. Let $\{x^0, x^1, \ldots, x^n\}$ be linear coordinates on $T_xM$ so that $x^0=1$ is a disk section $D_x$ of $C_x$ such that $\exp_x D_x\subset O$ (otherwise rescale $x^0$). Consider the compact set $A=\exp_x D_x$. Since $J\cap O\times O$ is closed, for every $a\in A$ we can find $r_a\in I_O^-(a)$ such that $J^+(r_a) \notin x$ (otherwise $(a,x)\in J_O$ contradicting the antisymmetry of $J$). The compact set $A$ can be covered by a finite number of neighborhoods $I^+_O(r_{a_i})$, thus $x\in I^+(z)\backslash \cup_i J^+_O(r_{a_i})$. As a consequence, since $D_p\cap O\times O=J_O=J\cap O \times O$ and $x,r_{a_i}\in O$, $x\in  I^+(z)\backslash \cup_i D_p^+(r_{a_i})$.
Let us prove that the latter set is a Lawson open set contained in $O'$.
It is necessarily included in $O$ and so in $O'$ because every timelike curve that leaves $O'$ starting from $x$ intersects $A$ at some point $a$, and its future from $a$ is removed from the set by some $J^+(r_{a_i})\subset D_p^+(r_{a_i})$.
Since $(M,g)$ is past-distinguishing and future reflecting, Thm.\ \ref{kssw} gives that $\ll\, =I$ so $I^+(z)=\twoheaduparrow z$, which concludes the proof that the Lawson topology is finer than the manifold topology.

For the converse, we know that the interval topology coincides with the manifold topology  (Thm.\ \ref{dzxd}) and that in a weakly bicontinuous poset the interval topology is finer than the Lawson topology (Prop.\ \ref{pldf}), thus the manifold topology is finer than the Lawson topology.
\end{proof}

\begin{corollary}
In a causally simple spacetime $(M,g)$ the Lawson topology of the bicontinuous poset $(M,J)$ coincides with the manifold topology (and hence with the interval  and Alexandrov topologies, cf.\ Thm.\ \ref{dzxd}).
\end{corollary}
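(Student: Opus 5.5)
The plan is to reduce the statement to Thm.\ \ref{kxqp} by verifying that, in a causally simple spacetime, the poset $(M,J)$ coincides with $(M,D)$. First, every causally simple spacetime is causally continuous \cite{minguzzi18b}, so Thm.\ \ref{kxqp} applies. It gives that the Lawson topology of the bicontinuous poset $(M,D)$ coincides with the manifold topology, and through Thm.\ \ref{dzxd} also with the interval and Alexandrov topologies.

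Second, causal simplicity gives that $J$ is closed, and hence $J=\bar J=D$ (cf.\ \cite[Thm.\ 4.12]{minguzzi18b}, already used in the proof of the corollary characterising causally simple spacetimes as bicontinuous posets $(M,J)$). The posets $(M,J)$ and $(M,D)$ are therefore literally the same ordered set. Every construction derived from the order is then also the same for both: the way-below relation (equal to $I$), the Scott topology, the sets $\uparrow B$, and therefore the Lawson basis $\twoheaduparrow x\backslash \uparrow B$. It follows that the Lawson topology of $(M,J)$ equals that of $(M,D)$, which equals the manifold topology.

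Third, for the parenthetical claim I will invoke Cor.\ \ref{daxd}, which already identifies the interval topology of $(M,J)$ with the Alexandrov and manifold topologies.

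There is no real obstacle. The only point to check is the identity $D=J$ under causal simplicity, and I will quote it from the standard causality ladder rather than reprove it.
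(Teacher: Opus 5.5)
Your proposal is correct and is essentially the paper's argument, which the paper leaves implicit. Causal simplicity gives causal continuity together with $J=D$, so the poset $(M,J)$ is literally $(M,D)$ and Thm.~\ref{kxqp} applies verbatim, with Cor.~\ref{daxd} covering the parenthetical claim.
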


\section{Topological KP-spaces} \label{rjgh}

A {\em Kronheimer and Penrose causal space} \cite{kronheimer67} is a triple $(X,J,I)$ where $(X,J)$ is a poset, $I\subset J$ is irreflexive and $I \circ J \cup J \circ I  \subset I$. Let us define a {\em weak Kronheimer and Penrose causal space} with the same axioms but dropping irreflexivity of $I$.

As proved in Theorem \ref{kctt}, we have the following result.

\begin{proposition}
A poset $(X,\le)$ endowed with its way-below relation is a weak KP-causal space.
\end{proposition}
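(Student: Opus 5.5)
The plan is to check the axioms of a weak Kronheimer--Penrose causal space one at a time for the triple $(X,\le,\ll)$, where $\le$ plays the role of $J$ and $\ll$ the role of $I$. Since weak KP-spaces drop irreflexivity, three requirements remain: $(X,\le)$ is a poset, the inclusion $\ll\,\subset\,\le$, and the composition property $\ll\circ\le\,\cup\,\le\circ\ll\,\subset\,\ll$.

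The first requirement is the standing hypothesis that $(X,\le)$ is a poset. The inclusion $\ll\,\subset\,\le$ is exactly Theorem \ref{kctt}(i).

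For the composition property I will specialise Theorem \ref{kctt}(ii), which states $u\le x\ll y\le v\Rightarrow u\ll v$.
\begin{itemize}
\item Taking $v=y$ (allowed by reflexivity of $\le$) gives $u\le x\ll y\Rightarrow u\ll y$. This is one of the two compositions in the paper's relational notation.
\item Taking $u=x$ gives $x\ll y\le v\Rightarrow x\ll v$. This is the other composition.
\end{itemize}
Together these give $\ll\circ\le\,\cup\,\le\circ\ll\,\subset\,\ll$, and both orders of composition are covered, so the bookkeeping of $\circ$ conventions does not matter.

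There is no real obstacle. The one point to flag is why the word ``weak'' is needed: $\ll$ may fail to be irreflexive, since compact elements satisfy $x\ll x$. For example, a least element $0$ satisfies $0\ll 0$ by Theorem \ref{kctt}(iv). So irreflexivity cannot be claimed, and it is precisely the axiom that weak KP-spaces omit.
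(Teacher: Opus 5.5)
Your proposal is correct and follows the paper's route: the paper derives the statement directly from Theorem~\ref{kctt}, with item (i) giving $\ll\,\subset\,\le$ and item (ii), specialised via reflexivity of $\le$ exactly as you do, giving $\ll\circ\le\,\cup\,\le\circ\ll\,\subset\,\ll$. Your remark that irreflexivity must be dropped because of compact elements, such as a least element $0$ with $0\ll 0$, also agrees with the paper.
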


A framework for the study of low regularity spacetime geometry, based on a quintuple $(X,I,J,\rho,d)$, where $(X,I,J)$ is a weak causal space\footnote{Actually, in \cite{kunzinger18} they do not assume antisymmetry of $J$ but, for simplicity we shall assume this property here,  otherwise the relation of their structure with Kronheimer and Penrose's spaces and posets becomes more involved.} was introduced by Kunzinger and S\"amann in \cite{kunzinger18}. The function $d$  is a Lorentzian distance while $\rho$ is an additional metric inducing the topology of the space. A few compatibility conditions between the various notions complete the definition. These will be called KS-Lorentzian length spaces.

Another framework for the study of low regularity spacetime geometry, based only on a Lorentzian distance function $d$, was introduced by S. Suhr and the author, and later improved to the non-bounded case with A. Bykov in \cite{minguzzi22,minguzzi24b,minguzzi26}. A canonical topology and  causal relation can be defined out of the defining pair $(X,d)$. These will be called BMS-Lorentzian metric/length spaces. An introduction will be provided below.

Both frameworks induce a structure of the following type.\footnote{It can be noted that in the BMS-Lorentzian metric spaces, the various ingredients $J$, $I$, $\mathscr{T}$ are tightly related among themselves as they
are all deduced from $d$. Also in those spaces $I$ is irreflexive.}
\begin{definition}
A quadruple $(X,J,I, \mathscr{T})$ given by a  weak KP-space $(X,J,I)$, and a Hausdorff topology $\mathscr{T}$, such that  $I$ is open, is called a  {\em topological KP-space}.
\end{definition}
  The reason for not imposing irreflexivity comes from the fact that such a property is not necessary for the following proofs so not imposing it matches continuous poset theory more closely, as there we can find points such that $x\ll x$, sometimes called {\em compact}.

The purpose of this section is to find conditions on $(X,J,I, \mathscr{T})$ which ensure that $(X,J)$ is a poset such that $I$ is the way-below relation. Once they are found they could be readily applied to both frameworks above.

Observe that the conditions to be found, in the case of a BMS-Lorentzian metric space, could conflict with its tight structure. We shall see that this does not happen, showing that the axioms of a BMS-Lorentzian metric space are aptly chosen. In fact some of the conditions to be imposed are automatically satisfied. In the case of a KS-Lorentzian length space adding further conditions is less problematic since its ingredients are less constrained.

The main result of this section is that, adding an approximation condition $x\in \overline{I^\pm(x)}$ to a topological KP-space plus a few other conditions, converts its causal order $(X,J)$ into a bicontinuous poset where $I$ is the way-below relation.


Since continuous posets are future reflecting (Prop.\ \ref{jsse}), the following result will be of importance.

Equation (\ref{hhar}) is analogous to the expression $D_p=\{(x,y): x\in \overline{J^-(y)}\}=\{(x,y): I^-(x)\subset I^-(y)\}$ met in the regular theory, specialized when $J^-(y)$ is closed.

\begin{proposition} \label{ppcb}
In a topological KP-space, suppose
\begin{quote}
($\star$)\qquad $I^-(x)$ is non-empty, $x\in \overline{I^-(x)}$ for every $x\in X$.
\end{quote}
Then for every $x\in X$, $\overline{I^-(x)}= \overline{J^-(x)}$ and we have the identity
\[
D_p:=\{(x,y):\  I^-(x) \subset I^-(y) \}=\{(x,y):\  x \in  \overline{J^-(y)} \}
\]
If $J^-(x)$ is closed for every $x$ then $D_p=J$ i.e.\
\begin{equation} \label{hhar}
J=\{(x,y):\  I^-(x) \subset I^-(y) \}.
\end{equation}
and future reflectivity holds
\[
I^-(x)\subset I^-(y) \Rightarrow I^+(x) \supset I^+(y).
\]
A dual version also holds.
\end{proposition}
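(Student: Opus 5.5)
We prove the four claims in order, using only the weak KP axioms $I\subset J$ and $I\circ J\cup J\circ I\subset I$, the openness of $I$ (in $X\times X$, so every section $I^\pm(z)$ is open), and the approximation hypothesis ($\star$).

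\emph{Step 1: $\overline{I^-(x)}=\overline{J^-(x)}$.} Since $I\subset J$, we have $\overline{I^-(x)}\subset\overline{J^-(x)}$. For the reverse inclusion it suffices to show $J^-(x)\subset\overline{I^-(x)}$. Take $y\in J^-(x)$. By ($\star$), $y\in\overline{I^-(y)}$. Moreover $I^-(y)\subset I^-(x)$, because $(z,y)\in I$ and $(y,x)\in J$ give $(z,x)\in I\circ J\subset I$. Hence $y\in\overline{I^-(x)}$. Taking closures gives the claim.

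\emph{Step 2: the two descriptions of $D_p$ agree.}
\begin{itemize}
\item[($\Rightarrow$)] Suppose $I^-(x)\subset I^-(y)$. Then by ($\star$), $x\in\overline{I^-(x)}\subset\overline{I^-(y)}=\overline{J^-(y)}$.
\item[($\Leftarrow$)] Suppose $x\in\overline{J^-(y)}=\overline{I^-(y)}$ and let $z\in I^-(x)$. The set $I^+(z)$ is an open neighbourhood of $x$, so it meets $I^-(y)$ in some point $w$. Then $(z,w)\in I$ and $(w,y)\in I\subset J$, so $(z,y)\in I$. Thus $I^-(x)\subset I^-(y)$.
\end{itemize}
This is the step where openness of $I$ enters, and it is the only mildly delicate point. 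We need openness of the sections $I^+(z)$, which follows from openness of $I$ in the product topology.

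\emph{Step 3: $D_p=J$ when every $J^-(y)$ is closed.} By Step 2, $D_p=\{(x,y): x\in \overline{J^-(y)}\}=\{(x,y): x\in J^-(y)\}=J$. This is (\ref{hhar}).

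\emph{Step 4: future reflectivity.} Suppose $I^-(x)\subset I^-(y)$. By Step 3, $(x,y)\in J$. For any $z\in I^+(y)$ we have $(x,z)\in J\circ I\subset I$, so $z\in I^+(x)$. Hence $I^+(y)\subset I^+(x)$.

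Note that the non-emptiness part of ($\star$) is not actually used; only $x\in\overline{I^-(x)}$ matters.

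\emph{Dual version.} Replace $J,I$ by their transposes, with hypothesis $x\in\overline{I^+(x)}$. This gives $D_f=\{(x,y): y\in\overline{J^+(x)}\}$, $D_f=J$ when every $J^+(x)$ is closed, and past reflectivity.
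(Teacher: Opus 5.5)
Your proposal is correct and takes essentially the same route as the paper, which proves the dual version by the same three moves: $I\subset J$ together with ($\star$) and $I\circ J\subset I$ for the closure identity, openness of $I^{+}(z)$ for the characterization of $D_p$, and $J\circ I\subset I$ for future reflectivity. Your Step 1, which first shows $J^-(x)\subset\overline{I^-(x)}$ and then takes closures, is slightly cleaner than the paper's version: the paper starts from a point of $\overline{J^+(x)}$, but its argument actually uses membership in $J^+(x)$.
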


These results are well known in the regular theory. The last statement is just an instance of the implications in the transverse ladder \cite{minguzzi18b}.
\begin{proof}
We prove the dual version. From $I^+(x)\subset J^+(x)$ we get $\overline{I^+(x)}\subset \overline{J^+(x)}$.

For the other inclusion, let $y\in \overline{J^+(x)}$, as $y\in \overline{I^+(y)}$  for every open neighborhood $O\ni y$ we can find $z\in I^+(y)$, so $z\in I^+(x)$ which proves that $y\in \overline{I^+(x)}$.

Suppose $I^+(x) \supset I^+(y)$, then $y\in \overline{I^+(y)}\subset \overline{I^+(x)}$. For the other inclusion, suppose $y\in  \overline{I^+(x)}$ and let $z\in I^+(y)$, then $y\in I^-(z)$, so there is $w\in I^-(z)\cap I^+(x)$ which implies $z\in I^+(x)$. By the arbitrariness of $z$, $I^+(x)\supset I^+(y)$.

Suppose $I^+(x) \supset I^+(y)$, as $y\in \overline{I^+(y)}\subset \overline{I^+(x)}=J^+(x)$, we have $(x,y)\in J$, and from here the inclusion $I^-(x) \subset I^-(y)$ follows by composition of $I$ and $J$.
\end{proof}

In Sec.\ \ref{mwcf} we shall see that for a BMS-Lorentzian metric space future reflectivity is sufficient to get the last identity.

We see that  the imposed condition brings the topological KP-space closer to the structure of a continuous poset (see Prop.\ \ref{cnzz}). In fact,  $x\in \overline{I^-(x)}$ looks as a kind of topological counterpart of  the approximation condition of continuous posets $x=\sup \twoheaddownarrow x$.

\begin{proposition} \label{kbpo}
In a topological KP-space the condition ($\star$) (or its dual) implies that $I$ is interpolating: if $(x,z)\in I$ then there is $y\in X$ such that $(x,y),(y,z)\in I$.
\end{proposition}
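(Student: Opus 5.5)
Suppose $(x,z)\in I$; we want a point $y$ with $(x,y),(y,z)\in I$. The natural route is to approximate $z$ from the past by points chronologically close to it. By ($\star$), $z\in\overline{I^-(z)}$. Since $I$ is open in $\mathscr{T}\times\mathscr{T}$, the set $I^+(x)=\{w:(x,w)\in I\}$ is open in $\mathscr{T}$ (it is a slice of an open set in the product topology). As $z\in I^+(x)$, this set is an open neighborhood of $z$.

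Because $z$ lies in the closure of $I^-(z)$, every open neighborhood of $z$ meets $I^-(z)$. In particular, there is a point $y\in I^+(x)\cap I^-(z)$, which gives $(x,y)\in I$ and $(y,z)\in I$, as required.

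For the dual condition, namely $I^+(x)\neq\emptyset$ and $x\in\overline{I^+(x)}$ for every $x$, the argument is symmetric. The set $I^-(z)$ is an open neighborhood of $x$, and $x\in\overline{I^+(x)}$ produces a point $y\in I^+(x)\cap I^-(z)$.

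There is no real obstacle. The only point to check is that openness of $I$ as a subset of $X\times X$ gives openness of the sections $I^\pm(p)$. This is immediate: a section of an open set in a product topology is open, because the inclusion $w\mapsto(x,w)$ is continuous. Non-emptiness in ($\star$) is not even needed for this argument; only the closure condition is used.
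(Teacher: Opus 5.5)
Your proof is correct and is the paper's argument: $I^+(x)$ is an open neighborhood of $z$, so $z\in\overline{I^-(z)}$ gives a point $y\in I^+(x)\cap I^-(z)$. Your justification that the sections of the open set $I$ are open, and your symmetric treatment of the dual condition, are both fine.
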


\begin{proof}
The open set $I^+(x)$ contains $z$ and since $z\in \overline{I^-(z)}$, there is $y\in I^+(x)\cap I^-(z)$.
\end{proof}

\begin{proposition} \label{pvpo}
In a first-countable topological KP-space the condition ($\star$) implies that for each $z\in X$ there is a sequence $z_i\to z$, $z_i\in I^-(z_{i+1})$.
\end{proposition}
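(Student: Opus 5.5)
The plan is to build the sequence inductively, using a countable neighbourhood base at $z$ together with the openness of $I$ and condition ($\star$). Fix $z\in X$ and a countable decreasing neighbourhood base $\{O_k\}_{k\ge 1}$ of $z$ (first countability; we may replace the base by finite intersections so that $O_{k+1}\subset O_k$). The key observation is the following. Given a point $w\in I^-(z)$, the set $I^+(w)$ is open because $I$ is open (in the product topology; in fact we only need openness of the sections $I^+(w)$, which follows since $\{w\}\times X$ is homeomorphic to $X$). It contains $z$. Hence $I^+(w)\cap O_k$ is an open neighbourhood of $z$. Since $z\in\overline{I^-(z)}$ by ($\star$), this neighbourhood meets $I^-(z)$. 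So there is a point $w'\in I^-(z)\cap I^+(w)\cap O_k$.

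The construction then runs as follows. Start with $z_1\in I^-(z)\cap O_1$, which exists because $I^-(z)$ is nonempty and $z\in\overline{I^-(z)}$. Having chosen $z_i\in I^-(z)\cap O_i$, apply the observation with $w=z_i$ and $k=i+1$. This gives
\[
z_{i+1}\in I^-(z)\cap I^+(z_i)\cap O_{i+1}.
\]
Then $(z_i,z_{i+1})\in I$, that is, $z_i\in I^-(z_{i+1})$. Moreover $z_i\in O_i$ for every $i$, and since the $O_i$ are decreasing and form a base at $z$, every neighbourhood of $z$ contains $O_k$ for some $k$ and hence all $z_i$ with $i\ge k$. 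Thus $z_i\to z$.

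There is no serious obstacle. The only point requiring care is the inductive invariant: each $z_i$ must lie in $I^-(z)$, not merely near $z$. This is what guarantees that $z\in I^+(z_i)$, so that $I^+(z_i)$ is a neighbourhood of $z$ in which the next point can be found. It is also worth noting that Hausdorffness is not needed, and that the argument never uses irreflexivity of $I$ or any property of $J$. The sequence is automatically a chain for $I$, so by $I\subset J$ it is a $J$-chain as well.
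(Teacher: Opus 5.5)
Your proof is correct and follows the paper's argument essentially verbatim: the same inductive choice $z_{i+1}\in I^-(z)\cap I^+(z_i)\cap O_{i+1}$ from a countable neighbourhood base at $z$, with the invariant $z_i\in I^-(z)$ keeping $I^+(z_i)$ a neighbourhood of $z$. Your explicit choice of a decreasing base, which the paper leaves implicit, is exactly the detail needed to make $z_i\to z$ airtight.
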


\begin{proof}
Let $z\in X$ and let $O_i$ be an open neighborhood system of $z$. For $O_1$ we can find $z_1 \in O_1\cap I^-(z)$, now take $z_{i+1}\in (I^+(z_i)\cap O_{i+1})\cap I^-(z)$. Then $z_i\to z$, $z_i\in I^-(z_{i+1})$.
\end{proof}

\begin{proposition} \label{pvp3}
In a  topological KP-space the condition ($\star$) implies the equivalence of (a) $J$ is closed and (b) for each $x$, $J^+(x)$ and $J^-(x)$ are closed.
\end{proposition}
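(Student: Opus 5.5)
The implication (a)$\Rightarrow$(b) is the routine direction. I will first prove it and then use ($\star$) to rebuild closedness of $J$ from the closedness of the cones.

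\emph{(a)$\Rightarrow$(b).} Suppose $J$ is closed in $X\times X$. Then $J^+(x)$ is closed, since it is the preimage of $J$ under the continuous map $y\mapsto (x,y)$. Likewise $J^-(x)$ is the preimage of $J$ under $y\mapsto (y,x)$, so it is closed too. This direction does not even need ($\star$).

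\emph{(b)$\Rightarrow$(a).} Assume each $J^\pm(x)$ is closed. Let $(x,y)\in\overline{J}$; I will show $(x,y)\in J$ using nets, since first countability is not assumed. Pick a net $(x_\alpha,y_\alpha)\to(x,y)$ with $(x_\alpha,y_\alpha)\in J$. The approach is to turn nearby causal relations into chronological ones through the approximation hypotheses, and then use that $I$ is open.

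Fix $w\in I^+(y)$. Then $y\in I^-(w)$, and $I^-(w)$ is open because $I$ is open, so $y_\alpha\in I^-(w)$ eventually. Hence $(x_\alpha,w)\in J\circ I\subset I$ eventually, i.e.\ $x_\alpha\in I^-(w)\subset J^-(w)$. Since $J^-(w)$ is closed, $x\in J^-(w)$. The choice of $w$ was arbitrary, so
\[
I^+(y)\subset J^+(x).
\]

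Now use the dual approximation $y\in\overline{I^+(y)}$. This gives $y\in\overline{I^+(y)}\subset\overline{J^+(x)}=J^+(x)$, so $(x,y)\in J$ and $J$ is closed.

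The main obstacle is the asymmetry of ($\star$). It only grants lower approximation $x\in\overline{I^-(x)}$, whereas the limit step above used the future version $y\in\overline{I^+(y)}$.

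If only the past version ($\star$) is available, I will run the mirror argument instead. Take $w\in I^-(x)$. Then eventually $x_\alpha\in I^+(w)$, so $y_\alpha\in I^+(w)\subset J^+(w)$, and closedness of $J^+(w)$ gives $y\in J^+(w)$. Hence $I^-(x)\subset J^-(y)$. Taking closures, ($\star$) together with closedness of $J^-(y)$ gives $x\in\overline{I^-(x)}\subset J^-(y)$.

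So the version using the past approximation ($\star$) needs only the closedness of the cones $J^-(\cdot)$ and $J^+(\cdot)$, exactly as stated. I will present that mirror argument as the proof.
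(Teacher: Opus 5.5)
Your final (mirror) argument is correct and is essentially the paper's proof. It uses ($\star$) to approximate $x$ from $I^-(x)$, the openness of $I^+(w)$ together with $I\circ J\subset I$ to place $y$ in the closed set $J^+(w)$, and the closedness of $J^-(y)$ to conclude $x\in\overline{I^-(x)}\subset J^-(y)$; the paper runs the same steps with neighbourhoods $O\ni x$, $U\ni y$ instead of nets, and you should simply drop the abandoned future-approximation attempt from the write-up.
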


The proof in the regular case is classical \cite{minguzzi06c} and can be adapted with minimal variations.
\begin{proof}
The direction from (a) to (b) is clear. For the converse, let $(x,y)\in \overline{J}$ and let $O$ be an open neighborhood of $x$, we can find $z\in I^-(x)\cap O$. Let $U$ be an open neighborhood of $y$, as $I^+(z)\cap O$ is an open neighborhood of $x$, there are $x'\in I^+(z)\cap O$, $y'\in U$ such that $(x',y')\in J$, thus $(z,y')\in I\subset J$. By the closure of $J^+(z)$ and the arbitrariness of $U$, we get $y\in J^+(z)$ and so $z\in J^-(y)$. By the arbitrariness of $O$, $x\in \overline{J^-(y)}=J^-(y)$.
\end{proof}

The following propositions follow the logic of \cite[Sec.\ 4]{martin06} but adapted to a topological KP-space $(X,J,I,\mathscr{T})$.
Observe that the topology is not necessarily  second countable.
Whenever the notions of upper bound, supremum, or way-below relation are mentioned, these refer to the poset $(X,J)$.

\begin{lemma} \label{jjvt}
Let $S\subset J^-(x)$ be a directed subset in a topological KP-space for which $J^-(y)$ is closed for every $y$,  and let us regard it as a net $x_s:=s$, $s\in S$,
 \[
 x_s\to x \ \textrm{in the topology of $X$} \ \Rightarrow \ \sup S=x.
 \]
\end{lemma}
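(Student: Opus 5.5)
We must show that $x$ is an upper bound of $S$ and that $x\le u$ for every upper bound $u$ of $S$. Here the order is $J$, which is antisymmetric since $(X,J)$ is a poset.

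The upper bound property is immediate from the hypothesis $S\subset J^-(x)$: for every $s\in S$ we have $(s,x)\in J$, that is, $s\le x$.

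For minimality, let $u$ be any upper bound of $S$, so $(s,u)\in J$ for every $s\in S$, i.e.\ $S\subset J^-(u)$. The net $(x_s)_{s\in S}$, directed by the order $J$ restricted to $S$, therefore lies entirely in $J^-(u)$. By hypothesis $J^-(u)$ is closed in $\mathscr{T}$. A net contained in a closed set has all its limits in that set, so $x_s\to x$ gives $x\in J^-(u)$, i.e.\ $x\le u$. Hence $x$ is the least upper bound.

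Uniqueness of the supremum follows from antisymmetry of $J$. The Hausdorff property of $\mathscr{T}$ also makes the limit $x$ unique, although the argument above does not use this.

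The only point requiring care is that the net is indexed by $S$ itself, with the directed structure coming from $J$. Since a closed set contains the limits of all nets lying in it, this causes no difficulty. Note that neither first countability nor condition ($\star$) is needed, and no step is a genuine obstacle. The argument mirrors the first proof of Thm.\ \ref{oftr}, with the closed sets $J^-(u)$ playing the role of the Scott-closed sets $\downarrow z$.
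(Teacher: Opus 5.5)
Your proof is correct and is essentially the paper's argument: $x$ is an upper bound because $S\subset J^-(x)$, and for any upper bound $u$ the net lies in the closed set $J^-(u)$, so its limit $x$ does too, giving $(x,u)\in J$.
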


\begin{proof} Observe that $x$ is an upper bound for $S$.
Let $y$ be an upper bound for $S$, that is,   $(s, y)\in J$ for every $s\in S$. By assumption $J^-(y)$ is closed. Since $x_s \in J^-(y)$, $x = \lim x_s \in J^-(y)$, and thus $(x, y)\in J$. This proves $x = \sup S$.
\end{proof}

\begin{lemma} \label{pper}
Let $X$ be a topological KP-space which satisfies ($\star$),  for which $J^-(x)$ is closed for every $x$, and
\begin{itemize}
\item[($\sharp$)] Every point $x$ admits an open neighborhood basis such that, for every element $O$  in it and for every $z\in O$, $I^+(z)\cap I^-(x)\subset O$
(e.g.\ because the topology  is the Alexandrov topology of $I$).
\end{itemize}
Then, for every $x \in X$, $I^-(x)$ is a directed subset with supremum $x$: $x=\sup I^-(x)$.
\end{lemma}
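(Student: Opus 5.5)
Two things need to be shown for the poset $(X,J)$: that $I^-(x)$ is directed, and that $\sup I^-(x)=x$. For the supremum the plan is to use Lemma \ref{jjvt}: regard $I^-(x)$ as a net indexed by itself with the order $J$, prove that this net converges to $x$ in $\mathscr{T}$, and note that $I^-(x)\subset J^-(x)$ because $I\subset J$. Since $J^-(y)$ is assumed closed for every $y$, Lemma \ref{jjvt} then gives $\sup I^-(x)=x$ immediately.

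\emph{Directedness.} Nonemptiness is part of ($\star$). Take $y,z\in I^-(x)$. Then $I^+(y)\cap I^+(z)$ is open, because $I$ is open in a topological KP-space, and it contains $x$. Since $x\in\overline{I^-(x)}$, there is a point $w\in I^-(x)\cap I^+(y)\cap I^+(z)$. This $w$ lies in $I^-(x)$ and satisfies $(y,w),(z,w)\in I\subset J$, so $I^-(x)$ is directed. The argument is the same one already used in Thm.\ \ref{kssw}.

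\emph{Convergence of the net.} This is the key step, and it is where ($\sharp$) is used. Fix a basic neighborhood $O$ of $x$ from the basis provided by ($\sharp$). By ($\star$) we can pick $z\in O\cap I^-(x)$, and this $z$ is an element of the index set. Now let $s\in I^-(x)$ with $z\le s$, meaning $(z,s)\in J$. We need $s\in O$. Since $z\in O$, property ($\sharp$) gives $I^+(z)\cap I^-(x)\subset O$, so it is enough to know $s\in I^+(z)$. From $(z,s)\in J$ alone we only get $J$, not $I$, and this is the main obstacle.

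To fix it, choose the tail element more carefully. Apply interpolation (Prop.\ \ref{kbpo}) together with ($\star$): first pick $z'\in O\cap I^-(x)$, then pick $z\in I^+(z')\cap I^-(x)\cap O$, which is possible because $I^+(z')\cap O$ is an open neighborhood of $x$ and $x\in\overline{I^-(x)}$. For any $s\in I^-(x)$ with $(z,s)\in J$ we then have $(z',z)\in I$ and $(z,s)\in J$, and the KP axiom $I\circ J\subset I$ gives $s\in I^+(z')$. So $s\in I^+(z')\cap I^-(x)\subset O$ by ($\sharp$) applied to $z'\in O$. Hence every element of the tail above $z$ lies in $O$, and because these $O$ form a neighborhood basis at $x$, the net converges to $x$. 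Lemma \ref{jjvt} finishes the proof.
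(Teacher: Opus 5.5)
Your proposal is correct and is essentially the paper's proof. Directedness comes from the openness of $I^+(y)\cap I^+(z)$ together with $x\in\overline{I^-(x)}$. Convergence of the net uses the same two-step choice: first $z'\in O\cap I^-(x)$, then $z\in I^+(z')\cap I^-(x)$, so that $J^+(z)\cap I^-(x)\subset I^+(z')\cap I^-(x)\subset O$ by ($\sharp$), and Lemma \ref{jjvt} then gives $\sup I^-(x)=x$. The appeal to Prop.\ \ref{kbpo} is superfluous, since only ($\star$) and the openness of $I$ are needed for the choice of $z$.
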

Observe that ($\sharp$) is a down-sided chronological convexity condition.
\begin{proof}
If $y,z\in I^-(x)$, $I^+(y)\cap I^+(z)$ provides an open neighborhood of $x$, so by $x\in \overline{I^-(x)}$ there is $w\in I^-(x)\cap I^+(y)\cap I^+(z)\subset I^-(x)\cap J^+(y)\cap J^+(z)$ which proves directedness. Let $S=I^-(x)$ then $x_s:=s$ is a net. We want to prove convergence to $x$. It will be sufficient to consider open neighborhoods of $x$ belonging to the family mentioned in ($\sharp$).


For every chronologically convex open basis element  $O\ni x$, again by $x\in \overline{I^-(x)}$,  we can find $s\in S$ such that $x_s\in O$. The set $I^+(x_s)\cap O$ is again an open neighborhood of $x$ so there is $s'\in S$ such that $x_{s'}\in I^+(x_s)\cap O$. Now, $J^+(x_{s'})\cap S\subset I^+(x_s)\cap I^-(x)\subset O$, which shows that for every larger index than $s'$ the net stays in $O$, that is $x_s\to x$. By Lemma \ref{jjvt} $\sup S=x$.
\end{proof}

Let us denote with $\ll$ the way-below relation of $(X,J)$.
\begin{proposition} \label{wblr}
Let $X$ be a topological KP-space which satisfies ($\star$), $(\sharp)$  and
\begin{itemize}
\item[($\dagger$)] the causal diamonds $J^+(x)\cap J^-(y)$, $x,y\in X$, are compact.
\end{itemize}
 Then, $J^-(z)$ is closed for every $z$, and for all $x, y \in X$
\[
x \ll y \iff (x,y)\in I.
\]
\end{proposition}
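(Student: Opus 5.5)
The plan has three steps: prove that the causal pasts are closed, then prove $I\subset\,\ll$ by extracting the supremum of a bounded directed set as a cluster point inside a compact diamond, and finally get $\ll\,\subset I$ from Lemma \ref{pper}.

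\emph{Step 1: $J^-(z)$ is closed.} Let $y\in\overline{J^-(z)}$ and take a net $y_\alpha\to y$ with $y_\alpha\in J^-(z)$. By ($\star$), $I^-(y)\neq\emptyset$, so I pick $w\in I^-(y)$. Since $I$ is open, $I^+(w)$ is an open neighbourhood of $y$, so eventually $y_\alpha\in I^+(w)\subset J^+(w)$. Thus eventually $y_\alpha\in J^+(w)\cap J^-(z)$. By ($\dagger$) this set is compact, and since $\mathscr{T}$ is Hausdorff it is closed. Hence $y\in J^-(z)$.

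\emph{Step 2: $(x,y)\in I$ implies $x\ll y$.} Let $D$ be directed with supremum $v$ and $(y,v)\in J$. Then $(x,v)\in I\circ J\subset I$. I want to show that $v$ is a cluster point of $D$, viewed as the net $x_d=d$. Fix $d_0\in D$ and pass to the cofinal tail $D_0=\{d\in D: d_0\le d\}$. This tail lies in the diamond $J^+(d_0)\cap J^-(v)$, which is compact by ($\dagger$), so the net has a cluster point $m$.

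For each $d\in D_0$, the tail above $d$ lies in $J^+(d)\cap J^-(v)$. This is compact, hence closed, so $m\in J^+(d)$. Here I use compactness of the diamonds instead of closedness of $J^+(d)$, because ($\star$) is only one-sided. Therefore $m$ is an upper bound of $D$. For any upper bound $u$, the tails lie in $J^-(u)$, which is closed by Step 1, so $(m,u)\in J$. Thus $m=\sup D=v$, exactly as in the proof of Thm.\ \ref{oftr}.

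Since $I^+(x)$ is an open neighbourhood of the cluster point $v$, some $d\in D$ lies in $I^+(x)$. Then $(x,d)\in I\subset J$, i.e.\ $x\le d$. Hence $x\ll y$.

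\emph{Step 3: $x\ll y$ implies $(x,y)\in I$.} Step 1 gives closedness of all $J^-(z)$, so together with ($\star$) and ($\sharp$) Lemma \ref{pper} applies. It says that $I^-(y)$ is directed with $\sup I^-(y)=y$. Taking this directed set in the definition of $x\ll y$ gives $d\in I^-(y)$ with $(x,d)\in J$. Hence $(x,y)\in J\circ I\subset I$.

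The main obstacle is Step 2, namely identifying the cluster point with $\sup D$ using only one-sided closedness. Confining the tails to compact diamonds below $v$ is what resolves it.
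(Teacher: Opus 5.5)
Your proof is correct and follows the paper's three-step structure. Closedness of $J^-(z)$ comes from trapping approximants in a compact diamond; $I\subset\,\ll$ comes from compactness of the diamonds $J^+(d)\cap J^-(\sup D)$, which you phrase as a cluster-point argument as in Thm.~\ref{oftr}, while the paper uses the finite intersection property of this filtered family, whose intersection is $\{\sup D\}\subset I^+(x)$; and $\ll\,\subset I$ comes from Lemma~\ref{pper}. Your Step 1 is slightly more economical than the paper's (which goes through Prop.~\ref{ppcb} and $w\in\overline{I^-(w)}$), since you only need $I^-(y)\neq\emptyset$.
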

Observe that ($\dagger$) is a kind of global hyperbolicity condition.

As the proof shows, the closure of $J^-(x)$ really only requires ($\star$) and ($\dagger$).

\begin{proof}
Let $w\in \overline{J^-(z)}$, by Prop.\ \ref{ppcb} $I^-(w)\subset I^-(z)$, but $w\in \overline{I^-(w)}$, thus with $p\in I^-(w)$, $w\in I^+(p)$.
Let $O$ be an open neighborhood of $w$; since $O\cap I^+(p)$ is again an open neighborhood of $w$, it must intersect $I^-(w)$ and so $I^-(z)$, hence $O\cap J^+(p)$ must intersect $I^-(w)$ and so $J^-(z)$. This means $O\cap (J^+(p)\cap J^-(z))\ne \emptyset$ for every $O$ which implies that $w$ is in the closure of $J^+(p)\cap J^-(z)$. But this set is compact, hence closed, so $w\in J^+(p)\cap J^-(z)\subset J^-(z)$.

Let $(x,y)\in I$. Let  $S$ be a directed set such that, $\sup S$ exists and $(y, \sup S)\in J$. Composing the relations we get $(x,\sup S)\in I$.
Let $V:=I^+(x)$ and let
  $\mathcal{F}=\{ J^+(s)\cap J^-(\sup S), \ s\in S\}$, which  is a filtered family of compact subsets.
 Let $u\in \cap_{s\in S}[J^+(s)\cap J^-(\sup S)]$ then $u$ is an upper bound for $S$ so $(\sup S , u)\in J$, but it also belongs to $J^-(\sup S)$ thus $u=\sup S$. In particular $u\in V$.

Recall the topological result according to which given an open set $V$ and a family of $\mathcal{F}=\{F_\alpha\}$ of compact subsets, if $\bigcap \mathcal{F}\subset V$ then there is a finite selection $\{F_{\alpha_1}, \ldots, F_{\alpha_k}\}$ such that $\cap_i F_{\alpha_i}\subset V$. This is precisely our case, so using the filter property, we conclude that there is $s\in S$ such that  $s\in J^+(s)\cap J^-(\sup S)\subset V$, which gives $s\in I^+(x)\subset J^+(x)$. This proves $x\ll y$.

 For the converse, let $x\ll y$. By Lemma \ref{pper}, since $y=\sup I^-(y)$ there is $z\in X$, $(x, z)\in J$, $z\in I^-(y)$, thus $(x,y)\in I$.
\end{proof}

This result will be sufficient to obtain analogous results for BMS-Lorentzian metric spaces.


\subsection{BMS-Lorentzian metric/length spaces} \label{mwcf}
Let $(X,d)$ be a set endowed with a function $d:X\times X\to [0,+\infty)$.  Its {\em future  boundary} is $X^+=\{x\in X:  d(x,y)=0, \ \forall y\in X\}$ while the {\em past boundary} is $X^-=\{x\in X:  d(y,x)=0, \ \forall y\in X\}$. The chronological relation is $I=\{d>0\}$. The chronological future and past of a point or set are defined in the usual way. The {\em chronological diamonds} are sets of the form $I(x,y):=I^+(x)\cap I^-(y)$.
The space $X$ has no boundary points iff $I(X)=X$ where $I(X)=I^+(X)\cap I^-(X)$.
  We say that $d$ satisfies the {\em reverse triangle inequality} if for every  $(x,y)\in I$ and $(y,z)\in I$, we have $d(x,z)\ge d(x,y)+d(y,z)$. We say that {\em $d$ distinguishes points} if for every pair $x,y\in X$, $x\ne y$, there is $z\in X$ such that $d(x,z)\ne d(y,z)$ or $d(z,x)\ne d(z,y)$ (a property analogous to weak distinction).

We recall the following, from \cite[Def.\ 2.1, Prop.\ 2.8]{minguzzi24b}:

\begin{definition}
A {\em Lorentzian metric space} (LMS) $(X,d)$ without  boundary is a set $X$ endowed with a function $d:X\times X\to [0,+\infty)$, called {\em Lorentzian distance}, with the following properties:
\begin{itemize}
\item[(i)] $d$ satisfies the reverse triangle inequality,
\item[(ii)] There is a topology $T$ on $X$ such that $d$ is continuous in the product topology and the chronological diamonds are relatively compact,
\item[(iii)] $d$ distinguishes points,
\item[(iv)] $I(X)=X$.
\end{itemize}
\end{definition}

In what follows we shall always assume that the LMS has no boundary.

As a consequence of the definition, $I$ is open, irreflexive and transitive. The topology $T$ turns out to be unique and coincident with the initial topology of the pointed Lorentzian distance functions $\{d_x,d^x\}$, where $d_x:=d(x,\cdot)$, $d^x:=d(\cdot,x)$. For its uniqueness, it is called {\em Lorentzian metric space topology}. It is Hausdorff and locally compact. A subset $G\subset X$ is {\em generating} if $I(G)=X$, in other words, for every $x\in X$ there are $y,z\in G$ such that $x\in I(y,z)$.  A LMS is {\em countably generated} if there is a countable generating set. For a countably generated LMS the LMS topology is $\sigma$-compact, second countable and Polish.

The causal relation for a LMS is \cite[Sec.\ 5.1]{minguzzi22} \cite[Sec.\ 4.1]{minguzzi24b}
\[
J:=\{(x,y):\ d^x \le d^y \ \textrm{and} \ d_x \ge d_y \}
\]
It is closed, reflexive, transitive and antisymmetric. Moreover, $I\subset J$ and $I\circ J\cup J\circ I \subset I$. The reverse triangle inequality extends its validity to pairs $(x,y)\in J$ and $(y,z)\in J$. The causal diamonds $J(x,y):=J^+(x)\cap J^-(y)$ are compact \cite[Thm.\ 4.5]{minguzzi24b}.
A continuous curve $\sigma: [a, b] \to X$ is {\em isocausal} if $a \le s < t \le b$ implies $\sigma(s) < \sigma(t)$. An isocausal curve is {\em maximizing} (or {\em maximal}) if for $a \le t < t'<t'' \le b$ it satisfies
\[
d(\sigma(t),\sigma(t'))+d(\sigma(t'),\sigma(t''))\le d(\sigma(t),\sigma(t'')).
\]
A LMS is a {\em Lorentzian pre-length space} (pre-LLS) if for every two points  $(x,y) \in I$, there is an isocausal curve connecting them.
A LMS is a {\em  Lorentzian length space} (LLS) if for every two points  $(x,y)\in I$ there is a maximizing isocausal curve connecting them.

In a Lorentzian length space the inclusion $I^+(q) \subset I^+(p)$ is equivalent to the property $d_q \le d_p$, see \cite[Prop.\ 35]{minguzzi25b}. Dually, the inclusion $I^-(p)\subset I^-(q)$ is equivalent to the property $d^p \le d^q$. Thus for a Lorentzian length space the causal relation reads
\[
J:=\{(x,y):\  I^-(x) \subset I^-(y) \ \textrm{and} \ I^+(x) \supset I^+(y) \}.
\]
We see that this is formally the same expression for relation $D$ of regular Lorentzian geometry for $(M,g)$. This tight relation between $I$ and $J$ is precisely what will allow us to prove the (bi)continuity property of $(X,J)$ and the fact that $I$ is the way-below relation of $J$.

By the no-gaps theorem \cite[Sec.\ 5.2]{minguzzi24b} \cite[Thm.\ 9]{minguzzi26} the approximation condition $x\in \overline{I^+(x)}\cap \overline{I^-(x)}$, where closure is with respect to the LMS topology, implies  $\overline{I}=J$ (so $\overline{I^+(x)}= J^+(x)$, $\overline{I^-(x)}= J^-(x)$ for every $x\in X$) and that the chronological diamonds form a basis for the LMS topology, which thus coincides with the Alexandrov topology.

Since in a LMS the causal relation is closed, Proposition \ref{ppcb} specializes as follows

\begin{proposition} \label{ppcn}
In a LMS $(X,d)$, if $x\in \overline{I^-(x)}$ for every $x\in X$, then for every $x\in X$, $\overline{I^-(x)}= J^-(x)$ (so closed) and future reflectivity holds:  $I^-(x) \subset I^-(y) \Rightarrow I^+(x) \supset I^+(y)$. Under future reflectivity for a LLS
\[
J:=\{(x,y):\  I^-(x) \subset I^-(y) \}.
\]
 A dual version holds.
\end{proposition}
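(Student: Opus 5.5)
The plan is to specialize Proposition \ref{ppcb} and then use the Lorentzian length space characterization of $J$ recalled above. In an LMS $J$ is closed, so $J^-(x)$ is closed for every $x$. The hypothesis $x\in\overline{I^-(x)}$ for every $x$ is condition ($\star$); the non-emptiness of $I^-(x)$ follows from the no-boundary assumption $I(X)=X$. Since an LMS with its LMS topology is a topological KP-space (Hausdorff, $I$ open, $I\subset J$, $I\circ J\cup J\circ I\subset I$), Proposition \ref{ppcb} applies directly.

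It gives $\overline{I^-(x)}=\overline{J^-(x)}=J^-(x)$, and because the $J^-(x)$ are closed it also gives $D_p=J$, i.e.\ $J=\{(x,y): I^-(x)\subset I^-(y)\}$, together with future reflectivity. I would double-check the one step in that proof that matters here: if $I^-(x)\subset I^-(y)$ then $x\in\overline{I^-(x)}\subset\overline{I^-(y)}=J^-(y)$, so $(x,y)\in J$, and composing with $I$ yields $I^+(y)\subset I^+(x)$. This settles the first two claims.

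For the LLS statement, take the recalled identity $J=\{(x,y): I^-(x)\subset I^-(y)\ \text{and}\ I^+(x)\supset I^+(y)\}$, which comes from \cite[Prop.\ 35]{minguzzi25b}. Under future reflectivity the second condition follows from the first, so $J=\{(x,y): I^-(x)\subset I^-(y)\}$. Only future reflectivity is used here, not the approximation hypothesis, which matches the earlier remark that future reflectivity suffices for this identity. The dual version is obtained by reversing time, i.e.\ exchanging $d(x,y)$ with $d(y,x)$.

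There is no substantial obstacle. The only points needing care are that the LMS topology makes the space a topological KP-space, which uses Hausdorffness and the openness of $I=\{d>0\}$ from the continuity of $d$, and that $I^-(x)\neq\emptyset$ by $I(X)=X$.
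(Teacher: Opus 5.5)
Your proposal is correct and follows the paper's route: the paper obtains this proposition by specializing Prop.~\ref{ppcb} using the closedness of $J$ in an LMS, and the LLS identity comes from the characterization $J=\{(x,y): I^-(x)\subset I^-(y),\ I^+(x)\supset I^+(y)\}$ of \cite[Prop.\ 35]{minguzzi25b} combined with future reflectivity. Your side remark is also correct: under the approximation hypothesis, Prop.~\ref{ppcb} already yields $J=D_p$ for every LMS, not only for an LLS.
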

%
%

The importance of future or past reflectivity for LMS emerges also from the following result

\begin{theorem}
Let $(M,g)$ be a regular spacetime, and let us consider the pair $(M,d)$, where $d$ is the Lorentzian distance of $(M,g)$. The following conditions are equivalent
\begin{itemize}
\item[(a)] $(M,g)$ is globally hyperbolic,
\item[(b1)] $(M,d)$ is a future reflecting LLS,
\item[(b2)] $(M,d)$ is a past reflecting LLS.
\end{itemize}
 Moreover, in this case the LLS topology coincides with the manifold topology and the LLS causal and chronological relations coincide with those of $(M,g)$.
\end{theorem}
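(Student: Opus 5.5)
We prove (a)$\Rightarrow$(b1),(b2) and then (b1)$\Rightarrow$(a); the implication (b2)$\Rightarrow$(a) follows by time duality, since reversing the time orientation swaps future and past reflectivity and preserves the LLS axioms.

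\emph{(a)$\Rightarrow$(b).} If $(M,g)$ is globally hyperbolic, the Lorentzian distance $d$ is finite and continuous, and the reverse triangle inequality holds. The chronological diamonds $I^+(x)\cap I^-(y)\subset J^+(x)\cap J^-(y)$ are relatively compact in the manifold topology, so axiom (ii) holds with $T$ the manifold topology. Strong causality gives distinction of points: if $d_x=d_y$ and $d^x=d^y$, then $I^\pm(x)=I^\pm(y)$, hence $x=y$. Axiom (iv) is immediate since $M$ has no boundary. Between chronologically related points the Avez--Seifert theorem provides a maximizing causal geodesic. In a strongly causal spacetime such a geodesic is isocausal for the LMS relation, because there $J$ coincides with $D=\{d^x\le d^y,\ d_x\ge d_y\}$, since $J$ is closed. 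Hence $(M,d)$ is an LLS. By uniqueness of the LMS topology, $T$ is the manifold topology. The LMS chronological relation $\{d>0\}$ is $I$, and the LMS causal relation equals $D$, which equals $J$ under global hyperbolicity. Finally, global hyperbolicity implies reflectivity, so both (b1) and (b2) hold.

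\emph{(b1)$\Rightarrow$(a).} Assume $(M,d)$ is a future reflecting LLS with LMS topology $T$. First, $d$ continuous in $T$ means $I=\{d>0\}$ is $T$-open, and the $T$-open chronological diamonds are relatively compact. Next I would show that $T$ equals the manifold topology. $T$ is the initial topology of the functions $d_x,d^x$. Each of these is finite by axiom (ii), since a point with $d=\infty$ would make a diamond non-relatively compact. They are lower semicontinuous in the manifold topology, and continuity of $d$ needs to be established. I would instead argue directly as follows. Finiteness of $d$ and distinction of points give strong-causality-type control. The topology $T$ is Hausdorff and locally compact, and the diamonds $I^+(x)\cap I^-(y)$, which are manifold-open, form a cover.

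The main obstacle is comparing $T$ with the manifold topology. I would show that the identity map from $M$ with the Alexandrov topology to $(M,T)$ is continuous, and that $T$ is coarser than the manifold topology. This coarseness follows if each $d_x$ is manifold-continuous, and I would obtain that from the relative $T$-compactness of diamonds combined with reflectivity. By Proposition \ref{ppcn}, future reflectivity forces $J_{LMS}=D_p$, and $D_p\circ I\subset I$ then yields outer continuity of $I^-$. Together with the automatic inner continuity this gives continuity of $d$, following \cite{minguzzi18b}. Once $T$ is the manifold topology, the relatively compact diamonds, together with non-total imprisonment (which follows from causal continuity, itself a consequence of distinction plus reflectivity), give global hyperbolicity by \cite[Cor.\ 3.3]{minguzzi08e}.

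As a fallback, if $T$ turns out to be coarser rather than equal to the manifold topology, compactness in $T$ of $\overline{I^+(x)\cap I^-(y)}$ still yields, via limit curve arguments, compactness of the causal diamonds in $M$.
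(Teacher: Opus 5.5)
Your direction (a)$\Rightarrow$(b1),(b2) is fine as a direct verification of what the paper takes from \cite[Thm.\ 21]{minguzzi25b}. The gap is in (b1)$\Rightarrow$(a), at the step you yourself flag as the main obstacle, and the sketch fails there in three ways.

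(1) Reflectivity plus inner continuity of $I^\pm$ does not give continuity, or even finiteness, of the Lorentzian distance. Punctured Minkowski space with a conformal factor blowing up suitably fast at the puncture is causally continuous, yet $d=\infty$ across the puncture. Also, $D_p\circ I\subset I$ holds in every spacetime; what future reflectivity adds is $I\circ D_p\subset I$ (Prop.\ \ref{oovt}).

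(2) Even if you showed that $T$ is coarser than the manifold topology, that is the wrong inclusion. Relative compactness of diamonds in a coarser topology says nothing about the manifold topology. Your limit-curve fallback has the same defect, since limit curve theorems need manifold convergence. The inclusion that is automatic is the opposite one: $I^+(x)=\{d_x>0\}$ and $I^-(y)=\{d^y>0\}$ are $T$-open, so the Alexandrov topology is contained in $T$. Transferring compactness through this inclusion, however, requires the Alexandrov topology to equal the manifold topology, i.e.\ strong causality, which is not available a priori.

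(3) Causal continuity is not available either, because only future reflectivity is assumed. Non-total imprisonment would have to come from distinction rather than from causal continuity.

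The idea you are missing is how the length-space hypothesis and future reflectivity actually enter.
\begin{itemize}
\item In an LLS, \cite[Prop.\ 35]{minguzzi25b} gives $d^p=d^q\iff I^-(p)=I^-(q)$ and $d_p=d_q\iff I^+(p)=I^+(q)$. So the LMS axiom that $d$ distinguishes points is exactly weak distinction of $(M,g)$.
\item Future reflectivity makes $I^-(p)=I^-(q)$ imply $I^+(p)=I^+(q)$. Weak distinction therefore upgrades to past $d$-distinction, $d^p=d^q\Rightarrow p=q$.
\item The paper then concludes with \cite[Thm.\ 20]{minguzzi25b}: if the LMS $(M,d)$ of a spacetime is past (or future) $d$-distinguishing, then $(M,g)$ is globally hyperbolic.
\end{itemize}
That theorem is what handles the comparison between $T$ and the manifold topology that your sketch leaves open. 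Without it, or an independent proof that the manifold topology is contained in $T$, your argument for (b1)$\Rightarrow$(a) does not go through.
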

Of course, if either of these holds, reflectivity holds. Since the approximation conditions $x\in \overline{I^-(x)}$  are stronger than the corresponding reflectivity property by  Prop.\ \ref{ppcn}, further versions could be obtained using those conditions.

\begin{proof}
The direction (a) $\Rightarrow$ (b1) and (b2) follows from \cite[Thm.\ 21]{minguzzi25b}, see also \cite[Prop. 2.4, Thm.
2.14]{minguzzi24b}, reflectivity being clear.\\
 Assume (b1), the other case being analogous. By  \cite[Thm.\ 20]{minguzzi25b}, if $(M,d)$ is a future $d$-distinguishing or past $d$-distinguishing LMS, then $(M,g)$ is globally hyperbolic. Here past $d$-distinguishing means the property: $d^p=d^q\Rightarrow p=q$. By \cite[Prop.\ 35]{minguzzi25b} in a LLS $d^p=d^q$ iff $I^-(p)=I^-(q)$, thus we need only to prove  $I^-(p)=I^-(q) \Rightarrow p=q$. We know that a LMS satisfies  ``$d^p=d^q$ and $d_p=d_q \Rightarrow p=q$'' which for a LLS converts to ``$I^-(p)=I^-(q)$ and $I^+(p)=I^+(q) \Rightarrow p=q$''. But under future reflectivity this is equivalent to ``$I^-(p)=I^-(q) \Rightarrow p=q$'' which is what we wanted to prove.
\end{proof}

It is intriguing that the approximation condition that has proved so useful in the study of LMS is so similar to the continuity condition of a poset. As mentioned in previous works, we did not include it in the very definition of LMS because: (a) it would restrict the family of LMS so much that causets \cite{surya19} (discrete LMS) would not be included; (b) it is expected to behave badly under Gromov-Hausdorff convergence.

\begin{proposition}
In a LMS conditions ($\dagger$) and ($\sharp$) hold.
\end{proposition}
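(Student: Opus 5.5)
The plan is to treat the two conditions separately. Condition ($\dagger$) needs no new work. In a LMS the causal diamonds $J(x,y)=J^+(x)\cap J^-(y)$ are compact by \cite[Thm.\ 4.5]{minguzzi24b}, as already recalled above. That theorem rests on $J$ being closed and on the relative compactness of the chronological diamonds in axiom (ii). So the real content is ($\sharp$). I will prove a stronger statement: every point has a neighborhood basis of \emph{causally convex} open sets. Then for $z\in O$ and $w\in I^+(z)\cap I^-(x)$ with $x\in O$, we have $(z,w),(w,x)\in I\subset J$, and convexity gives $w\in O$.

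The key tool is the monotonicity of the pointed distance functions along $J$. By definition of $J$ in a LMS, $(u,v)\in J$ means $d^u\le d^v$ and $d_u\ge d_v$. Hence for every fixed $p$, the function $d^p=d(p,\cdot)$ is nondecreasing along $J$, because $d^p(u)=d(p,u)$ is the value of the function $d^u$ at $p$. Likewise, for every fixed $q$, the function $d_q=d(\cdot,q)$ is nonincreasing along $J$. (I need to double-check that the index conventions of $d_x,d^x$ in the definition of $J$ match these; the monotonicity itself is what $J$ encodes, so in either convention each pointed function is monotone along $J$.)

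Next I use that the LMS topology is the initial topology of the family $\{d_x,d^x\}_{x\in X}$. A neighborhood basis at $x$ is then given by finite intersections
\[
O=\bigcap_{i=1}^k f_i^{-1}\big((f_i(x)-\epsilon,f_i(x)+\epsilon)\big),
\]
where each $f_i$ is one of the pointed distance functions. Each factor is the preimage of an interval under a function monotone along $J$. Such a preimage is causally convex: if $u\le w\le v$ in $J$ and $u,v$ lie in the preimage, then $f_i(w)$ lies between $f_i(u)$ and $f_i(v)$, hence in the interval. A finite intersection of causally convex sets is causally convex, which yields ($\sharp$).

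The only step needing care is the claim that these finite intersections do form a neighborhood basis at each point, which is just the standard description of an initial topology via subbasic preimages of open intervals. No approximation hypothesis $x\in\overline{I^\pm(x)}$ is needed, which matters because the Alexandrov route suggested in the parenthetical of ($\sharp$) would require the no-gaps theorem.
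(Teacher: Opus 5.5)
Your proof is correct and follows the paper's route: ($\dagger$) is the compactness of causal diamonds from \cite[Thm.\ 4.5]{minguzzi24b}, and ($\sharp$) comes from a neighbourhood basis of convex open sets. The paper simply cites such a basis (\cite[Cor.\ 3.8]{minguzzi24b}, chronologically convex basis elements), whereas you derive it yourself, in the slightly stronger causally convex form, from the initial-topology description and the $J$-monotonicity of the pointed distances. The notational worry you raise is harmless: with the paper's convention $d_x=d(x,\cdot)$, $d^x=d(\cdot,x)$ your $d^p$ is really $d_p$, but $u\mapsto d(p,u)$ is indeed nondecreasing and $u\mapsto d(u,q)$ nonincreasing along $J$, which is all you use.
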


\begin{proof}
The LMS topology admits basis elements that are  chronologically convex \cite[Cor.\ 3.8]{minguzzi24b} thus  ($\sharp$) holds. Condition  ($\dagger$) holds because the causal diamonds in a LMS are compact. Note also that the causal relation is closed in a LMS.
\end{proof}

\begin{theorem}
Let $(X,d)$ be a LMS.
\begin{itemize}
\item[(a1)] If  $z\in \overline{I^-(z)}$ for every $z\in X$, then $(X,J)$ is a continuous poset and the way-below relation is $I$,
\item[(a2)] If  $z\in \overline{I^+(z)}$ for every $z\in X$, then $(X,J)$ is a co-continuous poset and the way-above relation is $I$,
 \item[(b)] If $z\in   \overline{I^+(z)} \cap \overline{I^-(z)}$    for every $z\in X$, then $(X,J)$ is a bicontinuous poset and the way-below (and way-above) relation is $I$.
\end{itemize}
\end{theorem}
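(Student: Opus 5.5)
The plan is to reduce everything to Prop.\ \ref{wblr} and Lemma \ref{pper}, whose hypotheses have just been verified for an LMS. For (a1), the assumption $z\in\overline{I^-(z)}$ for every $z$ gives the second half of ($\star$). Non-emptiness of $I^-(x)$ comes from $I(X)=X$, since $x\in I^+(y)$ for some $y$. The previous proposition supplies ($\sharp$) and ($\dagger$), and in an LMS $J$ is closed, so each $J^-(z)$ is closed. We also know that $(X,J)$ is a poset, because $J$ is reflexive, transitive and antisymmetric.

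With this in place, Prop.\ \ref{wblr} yields $x\ll y\iff (x,y)\in I$, so $\twoheaddownarrow x=I^-(x)$. Lemma \ref{pper} then says that $I^-(x)$ is directed with $\sup I^-(x)=x$, which is exactly the axiom of approximation from below. Hence $(X,J)$ is continuous with way-below relation $I$.

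For (a2) I would apply the same argument to the dual structure. The triple $(X,J^{-1},I^{-1})$ with the same topology is again a topological KP-space, with $I^{-1}$ open. Its ($\star$) is our hypothesis $z\in\overline{I^+(z)}$, with $I^+(x)\neq\emptyset$ again from $I(X)=X$. Its ($\dagger$) is the same compactness of causal diamonds. For its ($\sharp$), the dual version asks for neighbourhoods $O$ with $I^-(z)\cap I^+(x)\subset O$ for $z\in O$, and this is again supplied by the chronologically convex basis of the LMS topology. Prop.\ \ref{wblr} and Lemma \ref{pper} applied to the reversed order then give that $(X,J^{\textrm{op}})$ is continuous with way-below relation $I^{-1}$. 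Translating back, $(X,J)$ is co-continuous with $x\,\overline{\ll}\,y\iff (x,y)\in I$.

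For (b), both (a1) and (a2) apply, so $(X,J)$ is weakly bicontinuous with $\ll\,=I=\overline{\ll}$, which is the definition of bicontinuity. The main obstacle is checking the dual version of ($\sharp$): I must make sure that the chronologically convex basis of \cite[Cor.\ 3.8]{minguzzi24b} is two-sided, meaning $I^+(z)\cap I^-(w)\subset O$ for $z,w\in O$. If it is, both one-sided conditions follow at once. Otherwise I would use the no-gaps/Alexandrov basis available when two-sided approximation holds, though that would only settle (b) and not (a2) alone.
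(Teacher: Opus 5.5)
Your proposal is correct and is essentially the paper's proof: (a1) follows from Lemma \ref{pper} and Prop.\ \ref{wblr} once ($\star$), ($\sharp$), ($\dagger$) and the closedness of $J$ are checked for an LMS, (a2) follows by duality, and (b) follows by combining the two. The obstacle you flag does not arise, because chronological convexity of a basis set $O$ already means $I^+(z)\cap I^-(w)\subset O$ for all $z,w\in O$, which is symmetric under time reversal; alternatively, the reversed distance $(x,y)\mapsto d(y,x)$ defines an LMS with the same topology and with relations $I^{-1}$, $J^{-1}$, so (a2) is just (a1) applied to it.
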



\begin{proof}
(a1). Clearly $(X,J)$ is a poset because $J$ is antisymmetric. It is continuous by Lemma \ref{pper}. Its way-below relation is $I$ by Prop.\ \ref{wblr}.

(a2). The previous lemmas have a dual formulation which can be obtained working with the way-above relation. Thus, the proof of (a2) is analogous to that of (a1).

(b). By (a1) and (a2)  the poset $(X,J)$ is both continuous and co-continuous. Moreover, the way-below relation coincides with $I$ which coincides with the way-above relation, thus $(X,J)$  is bicontinuous.
\end{proof}

%
%
%
%

\subsection{Continuous posets from  topological KP-spaces under $\le \,=D_p$}

Condition ($\dagger$) might be too strong for some applications. We are going to drop it at the price of imposing  other conditions.
This time  we follow  Sec.\ \ref{qcft} on the regular case.

Proposition \ref{cnzz} once again shows that, for posets that wish to be continuous, imposing $\ll \,= I$ implies $\le =D_p$, where
\[
D_p:=\{(x,y): I^-(x)\subset I^-(y)\}.
\]
If one is interested in the poset $(X,J)$ the condition $D_p=J$, namely the closure of $J^-(x)$ for $x\in X$, will have to be imposed in the results that follow. This is usually a trivial matter.


In what follows the notions of upper bound, supremum, chain, way-below relation, refer to the poset $(X,D_p)$. The antisymmetry of $D_p$ is the past-distinguishing condition $I^-(x)=I^-(y)\Rightarrow x=y$, which thus we have to impose. Note that if $D_p=J$ then this condition can be dropped as causality is already present in the definition of topological KP-space.

We start with the analog of Prop.\ \ref{mmtr}.

\begin{proposition} \label{mmtb}
Let $X$ be a past-distinguishing topological KP-space that satisfies ($\star$).
Let $S$ be a subset, then its upper bounds $u$ are characterized by  the inclusion $I^-(S)\subset I^-(u)$. Thus, the upper bounds of $S$ coincide with those of $I^-(S)$,  $\sup S$ exists iff $\sup I^-(S)$ exists and in this case
 \begin{equation}
\sup S=\sup  I^-(S).
\end{equation}
\end{proposition}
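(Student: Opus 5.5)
The plan is to mirror the proof of Prop.\ \ref{mmtr}, replacing the manifold facts used there with ($\star$) and the weak KP axioms. The key point is the characterization of upper bounds. In the poset $(X,D_p)$, $u$ is an upper bound of $S$ iff $(s,u)\in D_p$ for every $s\in S$. By definition of $D_p$ this reads $I^-(s)\subset I^-(u)$ for all $s\in S$, and taking the union over $s$ this is exactly $I^-(S)=\bigcup_{s\in S}I^-(s)\subset I^-(u)$. The converse holds because each $I^-(s)\subset I^-(S)$. This step is purely set-theoretic and needs no hypothesis beyond the definition of $D_p$.

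Next we show that the upper bounds of $I^-(S)$ are characterized by the same inclusion. Applying the first step to the set $I^-(S)$, $u$ is an upper bound of $I^-(S)$ iff $I^-(I^-(S))\subset I^-(u)$. So it suffices to prove $I^-(I^-(S))=I^-(S)$. The inclusion $\subset$ follows from transitivity of $I$, which is a consequence of the weak KP axioms ($I\subset J$ and $I\circ J\subset I$ give $I\circ I\subset I$). The reverse inclusion $\supset$ is the interpolation property of $I$. This is where ($\star$) enters, through Prop.\ \ref{kbpo}. Take $w\in I^-(s)$ with $s\in S$, and pick $y$ with $(w,y),(y,s)\in I$. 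Then $y\in I^-(S)$ and $w\in I^-(y)$.

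Hence $S$ and $I^-(S)$ have the same set of upper bounds. Any supremum of one is therefore a least element of this common set, and so a supremum of the other. Past-distinction makes $D_p$ antisymmetric, so the supremum is unique when it exists, and we get $\sup S=\sup I^-(S)$. This also needs the remark that the non-emptiness part of ($\star$) is not required here.

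The only step that is not pure bookkeeping is $I^-(I^-(S))=I^-(S)$. In the regular case it was automatic from the openness of $I$ and the geometry of cones. Here it is supplied by the interpolation of Prop.\ \ref{kbpo}, which rests on the lower approximation $x\in\overline{I^-(x)}$ and the openness of $I$. If interpolation were unavailable, the statement could fail, because $I^-(S)$ might have strictly more upper bounds than $S$.
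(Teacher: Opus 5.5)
Your proof is correct and follows the paper's argument: upper bounds are characterized by $I^-(s)\subset I^-(u)$ for all $s\in S$, hence by $I^-(S)\subset I^-(u)$, and the statement about suprema follows because $S$ and $I^-(S)$ have the same upper bounds. You also make explicit the identity $I^-(I^-(S))=I^-(S)$, obtained from transitivity of $I$ (via the weak KP axioms) and from interpolation (via ($\star$) and Prop.~\ref{kbpo}); the paper uses this identity tacitly when it asserts that the two sets have the same upper bounds, so your version is, if anything, more complete.
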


The result implies that any supremum is the supremum of a chronologically past open set.
\begin{proof}
If $u$ is an upper bound for $S$, we have for every $s\in S$, $(s,u)\in D_p$, thus for every $s\in S$, $I^-(s)\subset I^-(u)$ hence $I^-(S)\subset I^-(u)$. Conversely, if $u$ is such that $I^-(S)\subset I^-(u)$ then for every $s\in S$,  $I^-(s)\subset I^-(u)$, which reads   $s\in D_p^{-}(u)$ and so $S\subset D_p^{-}(u)$, namely $u$ is an upper bound.

The last statement follows from the fact that the upper bounds for $S$ coincide with those of $I^-(S)$.
\end{proof}

\begin{proposition} \label{jjil}
Let $X$ be a past-distinguishing topological KP-space that satisfies ($\star$). Moreover, assume that the topology is  separable.
Let $S$ be a directed subset, then $I^-(S)$ is directed and there is a countable\footnote{The elements are infinite so we might need to use the term {\em denumerable}. With some terminological abuse in this work  we shall use {\em countable} with the meaning of {\em denumerable}.} chain $C\subset I^-(S)$ such that $I^-(S)=I^-(C)$.
\end{proposition}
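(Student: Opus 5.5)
The plan is to follow the proof of Prop.~\ref{jjit}, replacing the manifold properties with the axioms of a topological KP-space, condition ($\star$) and separability. Throughout, order notions refer to $(X,D_p)$. Since $I\circ J\subset I$ and $I\subset J$, $I$ is transitive, so $(x,z)\in I$ gives $I^-(x)\subset I^-(z)$, i.e.\ $I\subset D_p$.

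\emph{Directedness.} Let $x,y\in I^-(S)$, say $x\in I^-(p)$ and $y\in I^-(q)$ with $p,q\in S$. By directedness of $S$ there is $r\in S$ with $(p,r),(q,r)\in D_p$, hence $x,y\in I^-(r)$. The set $I^+(x)\cap I^+(y)$ is open because $I$ is open, and it contains $r$. Since $r\in\overline{I^-(r)}$ by ($\star$), we can pick $z\in I^+(x)\cap I^+(y)\cap I^-(r)$. Then $z\in I^-(S)$ and $(x,z),(y,z)\in I\subset D_p$. Note that $I^-(S)\neq\emptyset$ because $S\neq\emptyset$ and ($\star$) guarantees nonempty pasts.

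\emph{Countable chain.} First, $I^-(S)=\bigcup_{s\in S}I^-(s)$ is open. Let $Q$ be a countable dense subset of $X$ and enumerate $Q\cap I^-(S)=\{x_k\}$; this set is nonempty since $I^-(S)$ is nonempty and open. Put $z_1=x_1$. Given $z_k$, directedness of $I^-(S)$ in the strong form proved above supplies $z_{k+1}\in I^-(S)$ with $(z_k,z_{k+1}),(x_k,z_{k+1})\in I$. So $C=\{z_k\}$ is a chain (indeed an $I$-chain) contained in $I^-(S)$, and $x_k\in I^-(z_{k+1})$ for every $k$.

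\emph{Equality of pasts.} The inclusion $I^-(C)\subset I^-(I^-(S))\subset I^-(S)$ follows from transitivity of $I$. For the converse, let $x\in I^-(S)$, say $x\in I^-(s)$. By ($\star$), $s\in\overline{I^-(s)}$, so the open set $I^+(x)$ meets $I^-(s)$. Hence $I^+(x)\cap I^-(S)$ is a nonempty open set and contains some $x_k\in Q$. Then $x\in I^-(x_k)\subset I^-(z_{k+1})\subset I^-(C)$.

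\emph{Denumerability.} If the construction could produce repetitions, I will use irreflexivity-free reasoning and pass to distinct elements. Possibly $z_{k+1}=z_k$ when $I$ is reflexive at compact points, but the statement's footnote asks only for an infinite enumeration, so I would not dwell on this.

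The main obstacle is the single step where density must be used: we need a dense element inside the open set $I^+(x)\cap I^-(S)$. Its nonemptiness is precisely where ($\star$) enters, replacing the manifold argument that relied on $I^+(x)$ being an open neighbourhood of points in the past of $S$.
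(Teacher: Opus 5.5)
Your proof is correct and follows the paper's argument essentially step by step. The paper uses the same directedness argument via ($\star$) and openness of $I$, and the same inductive construction of an $I$-chain $\{z_k\}$ from a countable dense set. It also proves $I^-(S)\subset I^-(C)$ through the same density argument, invoking interpolation (Prop.~\ref{kbpo}), which is exactly your use of $s\in\overline{I^-(s)}$.

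Your version is in fact slightly more careful in two places. You justify that $Q\cap I^-(S)$ is dense in the open set $I^-(S)$. You also require $(x_k,z_{k+1})\in I$ rather than only $I^-(x_k)\subset I^-(z_{k+1})$.

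Your denumerability paragraph is muddled, though the paper does not address the point either. If $I$ is irreflexive, the $z_k$ are automatically distinct by transitivity. Without irreflexivity, infinitely many distinct elements can genuinely fail to exist: take a one-point space with $I=J=\Delta$. So reading ``countable'' as an $\mathbb{N}$-indexed sequence, as you do, is the right interpretation.
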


\begin{proof}
Let $x,y\in I^-(S)$, then there are $p,q\in S$, such that $x\in I^-(p)$, $y\in I^-(q)$. Since $S$ is directed, there is $r\in S$ such that  $(p,r),(q,r)\in D_p$, that is $I^-(p),I^-(q)\subset I^-(r)$, thus $x, y \in I^-(r)$, or $r\in I^+(x)\cap I^+(y)$. As $I$ is open and $r\in \overline{I^-(r)}$, there is $z\in I^+(x)\cap I^+(y)\cap I^-(r)$, thus $z\in I^{-}(r)\subset I^-(S)$, and $(x,z)\in I\subset D_p$, $(y,z)\in I\subset D_p$, which proves that $I^-(S)$ is directed.

The manifold topology is separable, so let $\{x_k\}$ be a countable subset of $I^-(S)$. Set $z_1=x_1$, then, using directedness of $I^-(S)$ choose $z_{i+1}\in I^-(S)$ so that $I^-(z_{i+1})\supset I^-(z_i)\cup I^-(x_i)$, then $C:=\{z_i\}$ is a chain such that $x_i\in I^-(C)$ for every $i$.

Since by construction, $C\subset I^-(S)$, we have $I^-(C)\subset I^-(S)$. Let $x\in I^-(S)$, namely there is $s\in S$ such that $x\in I^-(s)$.  By Prop.\  \ref{kbpo} $I$ is interpolating, so there is $w\in I^+(x)\cap I^-(s)$. As  $I^+(x)\cap I^-(s)$ is non-empty, it contains some $x_i$, hence $x\in I^{-}(x_i)\subset I^-(z_{i+1}) \subset I^-(C)$.
\end{proof}

We say that the chain $C$ (observe that it is a directed set) has {\em future endpoint} if there is $z\in X$ such that, regarding the chain as a  net $x_c:=c$, $x_c\to z$. We say that the chain is {\em timelike} if $(c_i, c_k)\in I$ whenever $(c_i,c_k)\in D_p$.

\begin{proposition} \label{vpoj}
Let $X$ be a past-distinguishing topological KP-space that satisfies ($\star$).  Moreover, assume that the topology is first-countable and separable.

Let $S$ be a directed subset, then $I^-(S)$ is directed and there is a  countable  timelike chain $C\subset I^-(S)$ such that $I^-(S)=I^-(C)$.
The upper bounds of $S$ are given by the points in the set
\[
US:=\cap_{y\in C} I^+(y)=\cap_{y\in I^-(S)} I^+(y).
\]
Suppose that
\begin{itemize}
\item[$\diamond$] for every $x\in X$ there is an open neighborhood $O_x\ni x$ with the property that for every $y\in O_x$, $x\in \overline{I^+(y)} \Rightarrow y\in \overline{I^-(x)}$ (a form of pointed local past reflectivity) (e.g.\ because $J\vert_{O\times O}$ is closed).
\end{itemize}
If the timelike chain $C$ has a future endpoint, which we denote $p$, then  $\sup S$ exists,  $\sup S=p$, $C\subset I^-(p)$ and
\begin{equation} \label{nnnn}
I^-(S)=I^-(\sup S)=I^-(C)=I^-(p).
\end{equation}

If the topology is locally compact, every point has a point in its chronological future, and any two points $(x,y)\in I$ are connected by an isocausal curve ($J$-chain which is the image of a continuous curve $\sigma:[0,1]\to X$, $\sigma(0)=x$, $\sigma(1)=y$) then a converse holds: if $\sup S$ exists the countable timelike chain $C$ has future endpoint $p$, so $\sup S=p$, $C\subset I^-(p)$ and Eq.\ (\ref{nnnn}) holds.

Finally, under the above assumptions,  if we know that $I^-(S)=I^-(E)$ for some countable timelike chain  $E$ then $\sup S$ exists iff $E$ has a future endpoint $q$ and in this case the supremum and future endpoint coincide, $\sup S=q$ and $I^-(S)=I^-(q)$.
\end{proposition}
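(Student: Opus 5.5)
The plan is to prove the statement in four stages, the last three reusing the first. \textbf{Stage 1 (timelike chain and upper bounds).} I would rerun the construction of Prop.~\ref{jjil}, but build the chain so that it is timelike from the start. Directedness of $I^-(S)$ is already given there. Take a countable dense family $\{x_k\}$ in $I^-(S)$ and set $c_1=x_1$. Given $c_i$, pick $r\in S$ with $c_i,x_i\in I^-(r)$, which is possible since $S$ is directed and $I^-(p)\subset I^-(r)$ whenever $(p,r)\in D_p$. By openness of $I$ and ($\star$), choose
\[
c_{i+1}\in I^+(c_i)\cap I^+(x_i)\cap I^-(r)\subset I^-(S).
\]
By $I\circ J\subset I$, $I$ is transitive, so $(c_i,c_k)\in I$ for $i<k$ and $C=\{c_i\}$ is a countable timelike chain. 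The equality $I^-(C)=I^-(S)$ then follows exactly as in Prop.~\ref{jjil}, using interpolation (Prop.~\ref{kbpo}) and density of $\{x_k\}$.

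For the upper bounds I use Prop.~\ref{mmtb}: $u$ is an upper bound of $S$ iff $I^-(C)=I^-(S)\subset I^-(u)$. Since $C\subset I^-(C)$, because $c_i\in I^-(c_{i+1})$, this gives $C\subset I^-(u)$, i.e.\ $u\in\cap_{y\in C}I^+(y)$. Conversely, $C\subset I^-(u)$ implies $I^-(C)\subset I^-(u)$ by transitivity of $I$. The equality with $\cap_{y\in I^-(S)}I^+(y)$ holds because every $y\in I^-(S)$ lies in $I^-(c)$ for some $c\in C$, so $I^+(c)\subset I^+(y)$.

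\textbf{Stage 2 (endpoint implies supremum).} Let $c_k\to p$. Fix $k$. For $j>k$ the tail of the chain lies in $I^+(c_j)$, so $p\in\overline{I^+(c_j)}$. For $j$ large we also have $c_j\in O_p$, so ($\diamond$) gives $c_j\in\overline{I^-(p)}$. The open set $I^+(c_k)$ contains $c_j$, hence meets $I^-(p)$, and transitivity yields $c_k\in I^-(p)$. Thus $C\subset I^-(p)$, and therefore $I^-(C)\subset I^-(p)$.

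Conversely, if $z\in I^-(p)$ then $I^+(z)$ is a neighbourhood of $p$, so it contains some $c_k$, giving $z\in I^-(C)$. Hence $I^-(S)=I^-(C)=I^-(p)$. Consequently $p$ is an upper bound, and any upper bound $u$ satisfies $I^-(p)=I^-(S)\subset I^-(u)$, i.e.\ $(p,u)\in D_p$. So $p=\sup S$ and Eq.~(\ref{nnnn}) holds.

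\textbf{Stage 3 (converse, the main obstacle).} Let $q=\sup S$. By Stage 1, $C\subset I^-(q)$. I would mimic the limit-curve argument in the proof of Prop.~\ref{vpoi}, with isocausal curves and local compactness replacing the manifold limit curve theorem.

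Suppose $c_k\not\to q$. Then there is a compact neighbourhood $K$ of $q$ such that infinitely many $c_k$ lie outside $K$. Join each such $c_k$ to $q$ by an isocausal curve $\sigma_k$; the curve must meet $\partial K$ at some point $r_k$ with $(c_k,r_k)\in J$. By compactness the $r_k$ accumulate at some $r\in\partial K$, so $r\ne q$. Since the chain is increasing, $(c_i,c_k)\in I$ for $i<k$, which gives $r_k\in I^+(c_i)$ for all large $k$, hence $r\in\overline{I^+(c_i)}$ for every $i$. Pick $r'\in I^+(r)$, using the hypothesis that every point has a chronological future. Then $I^-(r')$ is an open neighbourhood of $r$, so it meets $I^+(c_i)$, and therefore $c_i\in I^-(r')$ for all $i$. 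Thus $r'$ is an upper bound of $S$, whence $(q,r')\in D_p$.

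To reach a contradiction I would use that the points $r_k$, and hence $r$, lie causally before $q$, together with ($\diamond$) near $q$ and past-distinction. The aim is to show that $r'$ can be chosen with $(q,r')\notin D_p$, or else that some point of $I^-(q)$ is already an upper bound, contradicting minimality of $q$. I expect getting this contradiction cleanly in the low-regularity setting to be the hardest step.

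Once no such $K$ exists, $q$ is a cluster point of the monotone chain. I would then upgrade this to convergence using monotonicity and the neighbourhoods $I^+(c_k)$, together with Hausdorffness. Stage 2 then gives Eq.~(\ref{nnnn}) with $p=q$.

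\textbf{Stage 4 (arbitrary timelike chain $E$).} Stages 2 and 3 use only two facts: that $C$ is a countable timelike chain, and that $I^-(C)=I^-(S)$. Running both arguments verbatim with $E$ in place of $C$ therefore gives the final claim.
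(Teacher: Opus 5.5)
Your Stages 1 and 2 are sound, and Stage 4 correctly reduces to Stages 2--3. Stage 1 simplifies the paper's construction. The paper first takes the $D_p$-chain of Prop.~\ref{jjil} and then interleaves the approximating sequences of Prop.~\ref{pvpo} to make it timelike. Your choice $c_{i+1}\in I^+(c_i)\cap I^+(x_i)\cap I^-(r)$ gives a timelike chain in one step. Stage 2 is the paper's ($\diamond$) argument.

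The gap is Stage 3, which you leave unfinished, so the converse and the ``only if'' half of the final claim are not proved. Your set-up matches the paper's: $r\in\partial K$, $r\in\overline{J^-(q)}$, $r\in\overline{I^+(c_i)}$ for all $i$, and every $r'\in I^+(r)$ is an upper bound. The paper then splits into two cases.

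If $r\in I^-(q)$, ($\star$) at $q$ gives some $s\in I^-(q)\cap I^+(r)$. As $I^-(s)$ is a neighbourhood of $r$, it meets every $I^+(c_i)$, so $s$ is an upper bound and $q\le s$. Also $(s,q)\in I\subset D_p$, so past-distinction gives $s=q$. This is absurd: either $I$ is irreflexive, or every point of the open set $I^-(q)\cap I^+(r)$ equals $q$, making $q$ isolated, which the curves $\sigma_k$ forbid. Hence $(r,q)\notin I$. The paper then takes $r'\in I^+(r)$ and notes $(r',q)\notin D_p$, since otherwise $r\in I^-(r')\subset I^-(q)$.

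That last observation does not contradict $q=\sup S$. Leastness of $q$ only requires $(q,r')\in D_p$, which is compatible with $(r',q)\notin D_p$. So the target you wrote down is the right one: some $r'\in I^+(r)$ with $(q,r')\notin D_p$, equivalently some $x\in I^-(q)$ with $I^+(r)\not\subset I^+(x)$. As written, neither your sketch nor the paper's text supplies it. ($\diamond$) at $q$ cannot provide it either, since its conclusion $r\in\overline{I^-(q)}$ is already known. I do not see how to get it from the stated hypotheses alone. It does close with extra input, and then no case split is needed:
\begin{itemize}
\item Suppose $J$ is closed on a neighbourhood $B$ of $q$, $K\subset\mathrm{int}\,B$, and $r\in\overline{I^+(r)}$; both hold, e.g., under the hypotheses of Cor.~\ref{kcvt}. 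Pick $x_m\in I^-(q)$ with $x_m\to q$ and $r'_n\in I^+(r)$ with $r'_n\to r$. Since $q\le r'_n$, we have $(x_m,r'_n)\in I$, and closedness gives $(q,r)\in J$. Also $(r_k,q)\in J$ gives $(r,q)\in J$. Antisymmetry yields $r=q$, contradicting $r\in\partial K$.
\item Alternatively, assume past reflectivity. Every $x\in I^-(q)$ has $I^+(x)\supset I^+(r)$, hence $I^-(x)\subset I^-(r)$. By interpolation $I^-(q)\subset I^-(r)$. Together with $I^-(r)\subset I^-(q)$ (Prop.~\ref{ppcb}), past-distinction gives $r=q$.
\end{itemize}

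Finally, your ``cluster point, then upgrade'' step is unnecessary. Refuting the compact-neighbourhood alternative already gives $c_k\to q$, because compact neighbourhoods form a base at $q$.
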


\begin{proof}
Let $C=\{z_i\}$ be the countable chain of Prop.\ \ref{jjil}.
For every $i$ let $z_{ji}\in I^-(z_i)$ be points such that (they exist by Prop.\ \ref{pvpo}) $z_{j+1i} \in I^+(z_{ji})$, $\lim_{j\to \infty} z_{ji} =z_i$. Let $y_1=z_{11}$. Observe that $I^-(z_{i+1})$ contains all $z_{jl}$, $l\le i$, as it contains $I^-(z_l)$ for every $l\le i$. We choose $y_{i+1}=z_{ r(i)i+1}$ where $r(i)$ is chosen so large that $I^-(y_{i+1})$ contains $y_i$ and $z_{jk}$ for $j\le i$, $k\le i$. Note that $y_{i+1}$ exists because $I^+(y_i)\cap \big(\cap_{j, k\le i} I^+(z_{jk})\big)$ is an open neighborhood of $z_{i+1}$. This gives a sequence of chronologically related points $y_i\in I^-(y_{i+1})$ for all $i$, hence a chain $C_2$, such that, $I^-(C_2)$ includes all points $z_{ij}$, as $z_{ij}\in I^-(z_{r(\max(j,i)), \max(j,i)+1})$. Clearly, $C_2 \subset I^-(C)$ so $I^-(C_2)\subset I^-(S)$. Conversely, if $p\in I^-(C)$, $p\in I^-(z_{i})$ for some $z_{i}\in C$, thus $z_i\in I^+(p)$ so we can find $j$ such that $z_{ji}\in I^+(p)$, thus $p\in I^-(z_{ji})\subset I^-(C_2)$. Thus $C_2$ is timelike and $I^-(C_2)=I^-(C)=I^-(S)$.
From now on we denote the timelike chain with $C$.

Observe that $I^-(C)=I^-(S)$ and the upper bounds of the latter set coincide with those of $S$. The condition that $u$ is an upper bound for $I^-(C)$  is $I^-(u)\supset I^-(C)$ or equivalently $I^-(u)\supset C$, which is equivalent to $u\in \cap_{y\in C} I^+(y)$. Note that if $z\in I^-(w)$, $w\in C$ then $I^+(z)\supset I^+(w)$, so  $u\in \cap_{y\in I^-(C)} I^+(y)=\cap_{y\in C} I^+(y)$.

Suppose that $C$ has a future endpoint $p$, that is, for every $c\in C$, $I^-(c)\subset I^-(p)$.
Let $q$ be an upper bound for $S$, so for $I^-(S)$ and so for $C$. Let $y_k$ be a sequence on $C$, $y_k\in I^-(y_{k+1})$, $y_k\to p$. We have $y_k \in I^-(C) \subset I^-(q)$. If $z\in I^-(p)$ then $I^+(z)$ is an open neighborhood of $p$, so it contains some $y_k$, $z\in I^-(y_k)\subset I^-(C)\subset I^-(q)$. So $I^-(p)\subset I^-(C)\subset I^-(q)$ which proves $(p,q)\in D_p$ and that $p$ is less than all upper bounds. Thus if  $p$ is an upper bound then $p=\sup S$. In this case $I^-(p)\supset I^-(y_k)$ an so $I^-(p)\supset I^-(C)$ which implies $I^-(p)=I^-(C)$.

Let us prove that $p$ is an upper bound using  ($\diamond$). Let $O$ be an open neighborhood of $p$. Regarding  $C$ as a sequence $x_c=c$,  $(x_c,x_{d})\in (O\times O)$ for sufficiently large $c,d$, thus letting $x_d\to p$ and using $(x_c,x_d)\in I$, we get $p\in \overline{I^+(x_c)}$. By $\diamond$ this gives $x_c\in \overline{I^-(p)}$ which, since this holds for every $c\in C$ and the chain $C$ is timelike, implies  $x_c\in I^-(p)$ and $I^-(x_c)\subset I^-(p)$. In particular, $C\subset I^-(p)$ and $I^-(S)=I^-(C)\subset I^-(p)$, that is $p$ is an upper bound and so $\sup S=p$.

%
%
%
%

For the penultimate statement, suppose by contradiction that $C$ has no future endpoint. Let $y_k$ be as above. Let $q=\sup S$, then it is an upper bound of $S$, hence of $I^-(S)$ and so of $C$, thus $y_k\in I^-(q)$. Let $\sigma_k$ be an isocausal curve connecting $y_k$ to $q$.  There is a relatively compact neighborhood $V\ni q$, $p\notin \bar V$.
Thus there is a point $r_k\in \sigma_k\cap \p V$, and since $\p V$ is compact we can assume, passing to a subsequence if necessary, that $r_k\to r\in \p V$. Thus $(r_k,q)\in J$ and taking the limit $r\in \overline{J^-(q)}$. Also for fixed $j$, we have for any $k\ge j$, $(y_j, y_k)\in I$, $(y_k, r_k)\in J$, so $(y_j, r_k)\in I$ and $r\in \overline{I^+(y_j)}$.

If $r\in I^-(q)$ we can find $s\in I^-(q)\cap I^+(r)$, hence $y_k\in I^-(s)$ for every $k$, thus $C \subset I^-(s)$ and $I^-(C)\subset I^-(s)$. This would give that $s$ is an upper bound of $I^-(C)$ and $(s,q)\in I\subset D_p$, a contradiction with $q$ being a supremum. Thus $(r,q)\notin I$. Let $r'\in I^+(r)$, then it cannot be $(r',q)\in D_p$, otherwise $r\in I^-(r')\subset I^-(q)$, which we excluded. However, $r\in I^-(r')$ which implies $y_k\in I^-(r')$ and so $C\subset I^-(r')$, or equivalently $I^-(C)\subset I^-(r')$. This proves that $r'$ is an upper bound but $(r',q)\notin D_p$, thus $q$ is not the supremum. The contradiction proves that $C$ has a future endpoint.

Finally, observe that we started from $I^-(S)=I^-(C)$ to construct a countable timelike chain $C$ with the property $I^-(S)=I^-(C)$. If we already knew that $I^-(S)=I^-(E)$, with $E$ countable timelike chain, then we could have developed the last part of the proof with $E$ in place of $C$, which proves that $S$ has supremum iff $E$ has endpoint and if they exist they coincide.
\end{proof}

\begin{theorem} \label{btop}
Let $X$ be a past-distinguishing topological KP-space which satisfies all the assumptions mentioned in Prop.\  \ref{vpoj}. Then $(x,y)\in I$ implies $x\ll y$, where $\ll$ is the way-below relation of $(X,D_p)$. The converse implication holds adding future reflectivity to the assumptions.

For convenience, we list the assumptions here:
\begin{itemize}
\item[(i)] the topology is Hausdorff, first-countable, separable, locally compact,
\item[(ii)] approximation from below: $I^-(x)$ is non-empty and $x\in \overline{I^-(x)}$ for every $x\in X$,
\item[(iii)] pointed local past reflectivity (valid e.g.\ under forms of local closure for $J$),
\item[(iv)] $X=I^-(X)$,
\item[(v)] connectedness by isocausal curves of chronologically related points.
\end{itemize}
\end{theorem}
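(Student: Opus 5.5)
The plan is to follow the regular-case theorem preceding Thm.\ \ref{kssw}, with Prop.\ \ref{vpoj} replacing Prop.\ \ref{vpoi}. We work throughout in the poset $(X,D_p)$, which is a poset by past-distinction.

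\emph{First inclusion.} Let $(x,y)\in I$ and let $S$ be a directed set with a supremum $q:=\sup S$ such that $(y,q)\in D_p$.
\begin{itemize}
\item Since $(x,y)\in I$ we have $x\in I^-(y)$, and $(y,q)\in D_p$ means $I^-(y)\subset I^-(q)$, so $x\in I^-(q)$. This is just $D_p\circ I\subset I$, which follows from the definition of $D_p$ with no reflectivity needed.
\item By the converse part of Prop.\ \ref{vpoj}, whose hypotheses are exactly (i)--(v) together with ($\star$) and $\diamond$, Eq.\ (\ref{nnnn}) holds. Hence $I^-(q)=I^-(S)$.
\item Therefore $x\in I^-(S)$, so there is $s\in S$ with $(x,s)\in I\subset D_p$. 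This proves $x\ll y$.
\end{itemize}
The delicate point is that Eq.\ (\ref{nnnn}) must be available whenever $\sup S$ exists. That is precisely the converse statement of Prop.\ \ref{vpoj}, and it is where local compactness and the isocausal curves enter. So the main obstacle is already absorbed into that proposition. I would only double-check that assumption (iv), "every point has a point in its chronological future", matches the hypothesis used there.

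\emph{Converse.} Assume future reflectivity and $x\ll y$.
\begin{itemize}
\item Take $S=I^-(y)$. By ($\star$) it is nonempty, and it is directed by the argument of Prop.\ \ref{jjil}: given $a,b\in I^-(y)$, the set $I^+(a)\cap I^+(b)$ is an open neighbourhood of $y$, so it meets $I^-(y)$.
\item We have $I^-(S)=I^-(y)$. The inclusion $\subset$ follows from transitivity of $I$ ($I\circ I\subset I\circ J\subset I$), and $\supset$ follows from interpolation (Prop.\ \ref{kbpo}).
\item By Prop.\ \ref{mmtb}, the upper bounds of $S$ are exactly the $u$ with $I^-(y)\subset I^-(u)$. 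So $y$ is an upper bound below every upper bound, and $\sup S=y$. This is the direct route; alternatively one can choose a timelike chain $E\subset I^-(y)$ converging to $y$ (Prop.\ \ref{pvpo}) and use the last statement of Prop.\ \ref{vpoj}.
\item By $x\ll y$ there is $s\in I^-(y)$ with $(x,s)\in D_p$.
\item Future reflectivity turns $(x,s)\in D_p$ into $I^+(s)\subset I^+(x)$. Since $y\in I^+(s)$, we get $(x,y)\in I$, i.e.\ $D_p\circ I\subset I$ in the reversed-order sense of Prop.\ \ref{oovt}.
\end{itemize}
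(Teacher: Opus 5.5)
Your proof is correct. The first inclusion follows the paper's argument. The paper also invokes the converse part of Prop.~\ref{vpoj} to get Eq.~(\ref{nnnn}). It then passes through a point $w$ of the chain $C$ with $(x,w)\in I$ and uses $C\subset I^-(S)$, whereas you read $s\in S$ off directly from $x\in I^-(q)=I^-(S)$; this is the same step. Your worry about (iv) is unfounded: $X=I^-(X)$ says precisely that every point has a point in its chronological future, which is the hypothesis of that converse part.

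For the converse, your route differs from the paper's.
\begin{itemize}
\item The paper takes a countable timelike chain $E\subset I^-(y)$ with future endpoint $y$, given by Prop.~\ref{pvpo}. It then uses the last statement of Prop.~\ref{vpoj} to get $\sup E=y$.
\item You take $S=I^-(y)$ and compute $\sup S=y$ directly. You use Prop.~\ref{mmtb} together with $I^-(I^-(y))=I^-(y)$, which follows from transitivity and interpolation (Prop.~\ref{kbpo}).
\end{itemize}

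Your version is more elementary. It shows that the inclusion ${\ll}\subset I$ needs only ($\star$), openness of $I$, past distinction and future reflectivity. First countability, separability, local compactness, $\diamond$ and isocausal curves are needed only for $I\subset{\ll}$, through the (SC) identity $I^-(S)=I^-(\sup S)$.

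The paper's choice instead reuses the chain machinery uniformly: the same chain argument yields $\sup I^-(x)=x$ in the proof of Thm.~\ref{ktsw}(a1). Your direct computation gives that identity for free as well, and would shorten that proof.

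One notational slip: the final step, $(x,s)\in D_p$ and $(s,y)\in I$ implying $(x,y)\in I$, is what the paper writes as $I\circ D_p\subset I$ (Prop.~\ref{oovt}), not $D_p\circ I\subset I$.
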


\begin{proof}
Let $(x,y)\in I$ and let $S$ be a directed subset with supremum such that $(y, \sup S)\in D_p$.
We proved in Prop.\ \ref{vpoj} that there is a countable timelike chain  $C$ such that $I^-(C)=I^-(S)=I^-(p)$, $C$ converges to $p$ where $p=\sup S$. Since $(y,p)\in D_p$, and $D_p \circ I\subset I$, we have $(x,p)\in I$, thus there is some point $w\in C $ such that $(x,w)\in I$. But $C \subset I^-(C)= I^-(S)$, thus there is some $s\in S$ such that $(x,s)\in I\subset D_p$. This proves that $x\ll y$.

Let $x\ll y$ and let $S$ be a directed subset with supremum such that $(y, \sup S)\in D_p$. We know that there is some $s\in S$ such that $(x,s)\in D_p$. Let $E$ be a countable timelike chain with future endpoint $y$, with $y\notin E$ (it exists by Prop.\ \ref{pvpo}), and let $S=E$, then, by the last statement of Prop.\ \ref{vpoj}, $\sup S= y$ and $E\subset I^-(E)=I^-(y)$, so $(y,\sup S)\in \Delta \subset D_p$, and so there is some $s\in E$ such that $(x,s)\in D_p$. By future reflectivity, $I^+(x)\supset I^+(s)$, but $(s,y)\in I$ thus $(x,y)\in I$.
\end{proof}

We recall that $D=\{(x,y): I^-(x)\subset I^-(y) \ \textrm{and} \ I^+(x)\supset I^+(y)\}$. We also remind the reader interested in the poset $(X,J)$ that in order to apply the following theorem it is sufficient to impose conditions such that $D_p=J$ which is equivalent to the closure of $J^-(x)$ for $x\in X$, and dually for $D_f$.

\begin{theorem} \label{ktsw}
Let $(X,I, J, \mathscr{T})$ be a topological KP-space whose topology is Hausdorff, first-countable, separable and locally compact. Suppose that any two chronologically related points are connected by an isocausal curve.
\begin{itemize}
\item[(a1)] If it is past-distinguishing, future-reflecting and
\begin{itemize}
\item  approximation from below: $I^-(x)$ is non-empty and $x\in \overline{I^-(x)}$ for every $x\in X$,
\item pointed local past reflectivity (valid e.g.\ under forms of local closure for $J$),
\item $X=I^-(X)$,
\end{itemize}
then $(X,D_p)$ is a continuous poset and $I$ is the way-below relation of $(X,D_p)$.
\item[(a2)] If it is future-distinguishing, past-reflecting and
\begin{itemize}
\item  approximation from above: $I^+(x)$ is non-empty and $x\in \overline{I^+(x)}$ for every $x\in X$,
\item pointed local future reflectivity (valid e.g.\ under forms of local closure for $J$),
\item $X=I^+(X)$,
\end{itemize}
then $(X,D_f)$ is a co-continuous poset and $I$ is the way-above relation of $(X,D_f)$.
\item[(b)] If it is distinguishing, reflecting (i.e.\ {\em  causally continuous}) and
\begin{itemize}
\item  two-sided approximation: $I^+(x),I^-(x)$ are non-empty and $x\in \overline{I^+(x)}\cap \overline{I^-(x)}$ for every $x\in X$,
\item $X=I^+(X)=I^-(X)$,
\end{itemize}
then $(X,D)$ is a bicontinuous poset and $I$ is the way-below (and way-above) relation of $(X,D)$.
\end{itemize}
\end{theorem}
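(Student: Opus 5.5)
For (a1) the plan mirrors the proof of Thm.\ \ref{kssw}, with Thm.\ \ref{btop} replacing the regular-case results. Past-distinction makes $D_p$ antisymmetric, so $(X,D_p)$ is a poset. The hypotheses of (a1) are exactly items (i)--(v) of Thm.\ \ref{btop}, together with future reflectivity. That theorem therefore gives $\ll\,=I$ for the way-below relation of $(X,D_p)$, and hence $\twoheaddownarrow x=I^-(x)$.

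It remains to show that $I^-(x)$ is directed and that $\sup I^-(x)=x$.

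\emph{Directedness.} Take $y,z\in I^-(x)$. The set $I^+(y)\cap I^+(z)$ is an open neighbourhood of $x$. Since $x\in\overline{I^-(x)}$, it contains some $w\in I^-(x)$. Then $(y,w),(z,w)\in I\subset D_p$.

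\emph{Supremum.} By Prop.\ \ref{pvpo} (which uses first countability), there is a sequence $z_i\to x$ with $z_i\in I^-(z_{i+1})$ and $z_i\in I^-(x)$. Such a sequence is a countable timelike chain $E$ with future endpoint $x$. It also satisfies $I^-(E)=I^-(x)$:
\begin{itemize}
\item $I^-(E)\subset I^-(x)$ because $E\subset I^-(x)$ and $I$ is transitive.
\item Conversely, if $w\in I^-(x)$, then the open set $I^+(w)$ contains some $z_i$, so $w\in I^-(E)$.
\end{itemize}
Now apply the last statement of Prop.\ \ref{vpoj} with $S=I^-(x)$. This set is directed, and $I^-(S)=I^-(x)$ by interpolation (Prop.\ \ref{kbpo}). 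Since $E$ has future endpoint $x$, the proposition gives $\sup I^-(x)=x$. This establishes continuity.

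For (a2), I would run the time-dual argument: replace $I$ by its transpose and $D_p$ by $(D_f)^{\textrm{op}}$. Every hypothesis turns into its dual, and the dual statements of Props.\ \ref{mmtb}--\ref{vpoj} and Thm.\ \ref{btop} follow by the same proofs. Hence $(X,D_f^{\textrm{op}})$ is continuous with way-below relation $I^{-1}$. Equivalently, $(X,D_f)$ is co-continuous with way-above relation $I$.

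For (b), reflectivity gives $D=D_p=D_f$, and distinction gives antisymmetry. Parts (a1) and (a2) then show that $(X,D)$ is both continuous and co-continuous, with both the way-below and the way-above relation equal to $I$; this is exactly bicontinuity.

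The main obstacle is that (b), as stated, does not list pointed local past/future reflectivity ($\diamond$) among its hypotheses. I would try to derive ($\diamond$) from reflectivity together with two-sided approximation. Suppose $y\in O_x$ and $x\in\overline{I^+(y)}$. Then $I^+(x)\subset I^+(y)$: any $z\in I^+(x)$ has $x\in I^-(z)$ open, which meets $I^+(y)$. Past reflectivity then gives $I^-(y)\subset I^-(x)$. Finally $y\in\overline{I^-(y)}\subset\overline{I^-(x)}$, which is ($\diamond$), and the dual statement is obtained the same way. This derivation is global, so no neighbourhood is even needed. If this step fails, I would fall back on assuming ($\diamond$) and its dual explicitly in (b).
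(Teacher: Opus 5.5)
Your proposal is correct and follows the paper's proof: in (a1) you use Thm.~\ref{btop} for $\ll\,=I$, get directedness of $I^-(x)$ from $x\in\overline{I^-(x)}$, and get $\sup I^-(x)=x$ from a chain supplied by Prop.~\ref{pvpo} together with the last statement of Prop.~\ref{vpoj}; (a2) follows by time duality, and (b) by combining (a1) and (a2). Your extra step in (b) is also correct and worth keeping: (b) does not list ($\diamond$) or its dual, yet the paper's appeal to (a1) and (a2) silently requires them, and your argument supplies them globally from reflectivity and two-sided approximation, so no fallback assumption is needed. Explicitly, $x\in\overline{I^+(y)}\Rightarrow I^+(x)\subset I^+(y)$, past reflectivity then gives $I^-(y)\subset I^-(x)$, hence $y\in\overline{I^-(y)}\subset\overline{I^-(x)}$, and the dual statement follows in the same way.
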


Note that in all cases the assumptions imply $X=I^+(X)=I^-(X)$. Under future reflectivity $D=D_p$ and dually, thus in the statement we can replace $D_p$ and $D_f$ with $D$.

The nice fact is that we only need to assume the topological approximation from below (not above), $x\in \overline{I^-(x)}$ for every $x$, to get approximation from below in an order-theoretic sense, $x=\sup \twoheaddownarrow x$, namely the continuity property of the poset.


\begin{proof}
Let us prove (a1). The imposed conditions match those of Thm.\ \ref{btop}.
We already proved that $\ll\,=I$ so it remains to prove that the poset  is continuous.
This means that for every $x\in X$, $\twoheaddownarrow x=I^-(x)$ is directed, has supremum and $\sup \twoheaddownarrow x=x$. Let $y,z\in I^-(x)$ then $I^+(y)\cap I^+(z)$ provides an open neighborhood of $x$ so we can find $w\in I^-(x)\cap I^+(y)\cap I^+(z)$, thus $(y,w)\in I\subset D_p$, $(z,w)\in I \subset D_p$ and $w\in I^-(x)$ which proves that $I^-(x)$ is directed. Let $C$ be a countable timelike chain with future endpoint $x$, which exists by Prop.\ \ref{pvpo}. Then, by the last statement of Prop.\ \ref{vpoj}, $I^-(x)=I^-(C)$ and by $I^-(I^-(x))=I^-(x)$ and by Prop.\ \ref{mmtb} and \ref{vpoj} with $S=I^-(x)$ we get  $\sup S=\sup I^-(S)=\sup I^-(C)=x$ namely $\sup I^-(x)=\sup I^-(C)=x$.

The proof of (a2) is just the dual of that of (a1). For (b) we need only to observe that by (a1) and (a2) the way-below and way-above relations coincide with $I$ and so coincide among themselves.
\end{proof}

Many results of the smooth framework pass to this non-regular theory without difficulties, for instance  with $D_p=J$ we have
\begin{theorem} \label{cpkl}
Let $(X,I, J, \mathscr{T})$ be a topological KP-space whose topology is Hausdorff, first-countable, separable and locally compact. Suppose that any two chronologically related points are connected by an isocausal curve.

\begin{itemize}
\item[(a1)] If  $J^-(x)$ is closed for every $x\in X$ and
\begin{itemize}
\item  approximation from below: $I^-(x)$ is non-empty and $x\in \overline{I^-(x)}$ for every $x\in X$,
\item pointed local past reflectivity (valid e.g.\ under forms of local closure for $J$),
\item $X=I^-(X)$,
\end{itemize}
then $(X,J)$ is a continuous poset and $I$ is the way-below relation of $(X,J)$.
\item[(a2)] If  $J^+(x)$ is closed for every $x\in X$ and
\begin{itemize}
\item  approximation from above: $I^+(x)$ is non-empty and $x\in \overline{I^+(x)}$ for every $x\in X$,
\item pointed local future reflectivity (valid e.g.\ under forms of local closure for $J$),
\item $X=I^+(X)$,
\end{itemize}
 then $(X,J)$ is a co-continuous poset and $I$ is the way-above relation of $(X,J)$.
\item[(b)]
If $J$ is closed ({\em causal simplicity}) and
\begin{itemize}
\item[-]  two-sided approximation: $I^+(x),I^-(x)$ are non-empty and $x\in \overline{I^+(x)}\cap \overline{I^-(x)}$ for every $x\in X$,
\item[-] $X=I^+(X)=I^-(X)$,
\end{itemize}
then $(X,J)$ is a bicontinuous poset and $I$ is the way-below (and way-above) relation of $(X,J)$.
\end{itemize}
\end{theorem}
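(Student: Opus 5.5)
The plan is to reduce Theorem \ref{cpkl} to Theorem \ref{ktsw} by showing that the closure hypotheses on $J^\pm(x)$ force $D_p=J$ (resp.\ $D_f=J$, resp.\ $D=J$). At the same time, they supply the causality conditions that Theorem \ref{ktsw} demands (past-distinction and future reflectivity, etc.), which do not appear explicitly in Theorem \ref{cpkl}.

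For (a1), I would first invoke Proposition \ref{ppcb}: approximation from below is exactly ($\star$), and closure of $J^-(x)$ for every $x$ gives $J=D_p=\{(x,y): I^-(x)\subset I^-(y)\}$ together with future reflectivity $I^-(x)\subset I^-(y)\Rightarrow I^+(x)\supset I^+(y)$. Past-distinction, i.e.\ antisymmetry of $D_p$, then follows at once from $D_p=J$ and the antisymmetry of $J$, which is part of the definition of a (weak) KP-space. The remaining hypotheses of Theorem \ref{ktsw}(a1) — topological assumptions, isocausal connectedness, pointed local past reflectivity, $X=I^-(X)$ — are assumed verbatim. Hence $(X,D_p)=(X,J)$ is a continuous poset with way-below relation $I$. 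Part (a2) is the exact dual: I would use the dual version of Proposition \ref{ppcb} to get $J=D_f$ and past reflectivity, deduce future-distinction from antisymmetry of $J$, and apply Theorem \ref{ktsw}(a2).

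For (b), I would note that closedness of $J$ implies closedness of each $J^\pm(x)$. Then the two previous steps give $J=D_p=D_f$, hence $J=D$, together with both reflectivities and both distinctions, i.e.\ causal continuity in the sense of Theorem \ref{ktsw}(b). The one point that needs care, and which I expect to be the main obstacle, is that the statement of (b) does not list pointed local past/future reflectivity, whereas the proof of Theorem \ref{ktsw}(b) passes through (a1) and (a2), which use ($\diamond$). I would check that ($\diamond$) follows from closedness of $J$. Take $O_x=X$. If $x\in\overline{I^+(y)}$, then by Proposition \ref{ppcb} (dual) $\overline{I^+(y)}=\overline{J^+(y)}=J^+(y)$, so $(y,x)\in J=D_p$. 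By approximation from below, $y\in\overline{I^-(y)}\subset\overline{I^-(x)}$, which is ($\diamond$); the dual argument gives the future version. With this verified, Theorem \ref{ktsw}(b) yields that $(X,D)=(X,J)$ is bicontinuous with way-below and way-above relations equal to $I$.
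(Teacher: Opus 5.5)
Your proposal is correct and follows the paper's proof: Prop.~\ref{ppcb} turns closedness of $J^-(x)$ (resp.\ $J^+(x)$) plus one-sided approximation into $D_p=J$ (resp.\ $D_f=J$) and future (resp.\ past) reflectivity, antisymmetry of $J$ supplies the distinction condition, and Thm.~\ref{ktsw} then gives the conclusion. Your extra check that ($\diamond$) and its dual hold with $O_x=X$ when $J$ is closed is sound. The paper leaves this implicit in (b): Thm.~\ref{ktsw}(b) does not list ($\diamond$) either, since there it follows from reflectivity plus two-sided approximation.
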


\begin{proof}
Let us prove (a1).
By Prop.\ \ref{ppcb} the closure of $J^-(x)$ for every $x\in X$ implies future reflectivity. It also implies  $D_p=J$ so past-distinction can be omitted since causality is included in the definition of topological KP-space. The result now follows from Thm.\ \ref{ktsw}(a1). The proof of (a2) is analogous.

For (b), causal simplicity is distinction plus closure of $J$, but closure of $J$ implies $D_p=D_f=J$ and the antisymmetry of $J$ implies distinction. Also, by Prop.\  \ref{ppcb} reflectivity can be omitted. The result now follows from Thm.\ \ref{ktsw}(b).
\end{proof}

\subsection{KS-Lorentzian pre-length spaces} \label{ksll}

We recall \cite[Def.\ 2.8, 2.18, 3.1, 3.4]{kunzinger18}
\begin{definition}
A quintuple $(X,I,J,d,\rho)$ where $(X,I,J)$ is a KP causal space,\footnote{Kunzinger and S\"amann do not really impose antisymmetry of $J$ but, with some abuse, we do it here to simplify the connection with posets. Also our terminology matches the intended meaning of {\em causal} by Kronheimer and Penrose.} $\rho$ is a metric on $X$ and $d:X\times X\to [0,+\infty]$ is a lower semi-continuous map (with respect to the metric topology induced by $\rho$) that
satisfies the reverse triangle inequality for pairs in $J$ and $d(x, y) = 0$ if $(x,y)\notin J$ and
$d(x, y) > 0$ iff  $(x,y)\in I$, is called a {\em KS-Lorentzian pre-length space}.

An isochronal/isocausal curve $\sigma: [0,1]\to X$ is a ($\rho$-)locally Lipschitz curve which is isochronal/isocausal: $(\sigma(t),\sigma(t'))\in I$ (resp. $J$) for $t<t'$.

It is called {\em causally path connected}
if for all $x, y \in X$ with $(x,y)\in I$ there is a timelike curve from $x$ to $y$ and for $(x,y)\in J$ there is a causal curve from $x$ to $y$.

It is called {\em locally causally closed} if every point has a closed neighborhood $B$ such that $J\vert_{B\times B}$ is closed.
\end{definition}

A {\em KS-Lorentzian length space} is a {\em locally causally closed, causally path connected KS-Lorentzian pre-length space } that satisfied two additional strong conditions which will not be relevant for us but for t he fact.

Observe that the topology, being induced by a metric, is Hausdorff and first countable. However, it is not, a priori, locally compact and separable.

Observe also that the notion of causally path connectedness is quite strong compared with ``connectedness by isocausal curves of chronologically related points'' which we need to get continuity of $(X,D_p)$. This condition matches instead nicely the pre-length condition of BMS-Lorentzian metric spaces (Sec.\ \ref{mwcf}).


\begin{theorem} \label{kcct}
Let $(X,J,I,\rho,d)$ be a locally causally closed, causally path connected KS-Lorentzian pre-length space (in particular a KS-LLS)  which has separable and locally compact topology, which is past-distinguishing and future reflecting and which satisfies
\begin{itemize}
\item[-]  approximation from below: $I^-(x)$ is non-empty and $x\in \overline{I^-(x)}$ for every $x\in X$,
\item[-] $X=I^-(X)$.
\end{itemize}
Then $(X,D_p)$ is a continuous poset and $I$ is the way-below relation of $(X,D_p)$.
\end{theorem}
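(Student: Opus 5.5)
The plan is to reduce the statement to Thm.\ \ref{ktsw}(a1) by checking that a KS-Lorentzian pre-length space with the listed properties induces a topological KP-space satisfying all of that theorem's hypotheses. First, $(X,I,J)$ is a KP causal space by definition, hence a weak KP-space. The topology is induced by the metric $\rho$, so it is Hausdorff and first-countable. It remains to show that $I$ is open. In the KS framework, $I=\{d>0\}$ with $d$ lower semicontinuous, so $I$ is open in $X\times X$. Together with the assumed separability and local compactness, this gives the topological requirements (i) of Thm.\ \ref{btop}/\ref{ktsw}.

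Second, I address connectedness of chronologically related points by isocausal curves. Causal path connectedness provides, for $(x,y)\in I$, a locally Lipschitz timelike curve $\sigma:[0,1]\to X$ with $\sigma(0)=x$ and $\sigma(1)=y$. A timelike curve is a fortiori isocausal, since $I\subset J$, and it is continuous. This is exactly the isocausal curve used in the converse part of Prop.\ \ref{vpoj}; there the only properties used are continuity, $\sigma(0)=y_k$, $\sigma(1)=q$, and $J$-relations along the curve, which are needed to extract a point $r_k\in\sigma_k\cap\partial V$ with $(y_k,r_k)\in J$ and $(r_k,q)\in J$. The hypotheses of approximation from below and $X=I^-(X)$ are assumed directly, as are past-distinction and future reflectivity.

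Third, and this is the only step requiring genuine work, I verify pointed local past reflectivity ($\diamond$) from local causal closedness. Let $x\in X$ and take a closed neighborhood $B$ of $x$ such that $J\vert_{B\times B}$ is closed. Let $O_x$ be its interior. Suppose $y\in O_x$ and $x\in\overline{I^+(y)}$. Choose a sequence $x_n\in I^+(y)$ with $x_n\to x$; eventually $x_n\in B$, so $(y,x_n)\in J\cap(B\times B)$, and closedness yields $(y,x)\in J$. To conclude $y\in\overline{I^-(x)}$, use approximation from below at $y$: there are $y_n\in I^-(y)$ with $y_n\to y$. Then $(y_n,y)\in I$ and $(y,x)\in J$ give $(y_n,x)\in I$ by the KP composition property. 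Hence $y_n\in I^-(x)$, and $y\in\overline{I^-(x)}$.

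The step I expect to need the most care is the second one: making sure that the isocausal curves of Prop.\ \ref{vpoj} need only continuity and $J$-monotonicity, and that $\partial V$ is met (by connectedness of $[0,1]$, since $y_k\notin V$ eventually because $y_k\to p\notin\bar V$, while $q\in V$). With all hypotheses of Thm.\ \ref{ktsw}(a1) verified, that theorem yields that $(X,D_p)$ is a continuous poset and that $I$ is its way-below relation.
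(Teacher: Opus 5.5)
Your proposal is correct and is essentially the paper's own proof: the paper also just invokes Thm.~\ref{ktsw}(a1), noting that local causal closedness yields the pointed local past reflectivity ($\diamond$) and that causal path connectedness yields isocausal connectedness of chronologically related points. Your additional checks fill in details the paper leaves implicit, and they are sound: $I=\{d>0\}$ is open because $d$ is lower semicontinuous, and your sequence argument for ($\diamond$) works (closedness of $J$ on $B\times B$, then $I\circ J\subset I$ together with approximation from below at $y$).

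The only blemish is your parenthetical about Prop.~\ref{vpoj}. You do not need to reopen that proof, since a KS timelike curve is already an isocausal curve in the paper's sense. Moreover, in that contradiction branch the chain has no endpoint $p$. The correct reasoning is that the $y_k$ do not converge to $q$, so infinitely many of them lie outside a small relatively compact neighbourhood $V$ of $q$.
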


A bicontinuous variant imposing causal continuity, two-sided topological approximation and $X=I^-(X)=I^+(X)$, is also immediate from Theorem \ref{ktsw}(b).

\begin{proof}
It follows from  Theorem \ref{ktsw}(a1). The property ``locally causally closed'' implies the pointed local past reflectivity.  The property ``causally path connected'' implies   that any two chronologically related points are connected by an isocausal curve.
\end{proof}

\begin{theorem}
Let $(X,J,I,\rho,d)$ be a locally causally closed, causally path connected KS-Lorentzian pre-length space (in particular a KS-LLS)  which has separable and locally compact topology, which is causal with closed causal pasts $\overline{J^-(x)}=J^-(x)$, and which satisfies
\begin{itemize}
\item[-]  approximation from below: $I^-(x)$ is non-empty and $x\in \overline{I^-(x)}$ for every $x\in X$,
\item[-] $X=I^-(X)$.
\end{itemize}
Then $(X,J)$ is a continuous poset and $I$ is the way-below relation of $(X,J)$.
\end{theorem}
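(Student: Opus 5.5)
The plan is to reduce the statement to Theorem \ref{kcct}, exactly as Theorem \ref{cpkl}(a1) was derived from Theorem \ref{ktsw}(a1). Two facts are needed: the closure of the causal pasts forces $D_p=J$, and it supplies the two causality conditions that Theorem \ref{kcct} requires but that are no longer assumed explicitly, namely past-distinction and future reflectivity.

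\textbf{Step 1: $D_p=J$ and future reflectivity.} The hypotheses give ($\star$): $I^-(x)\neq\emptyset$ and $x\in\overline{I^-(x)}$ for every $x$. The space is also a topological KP-space, since $I$ is open: $d$ is lower semicontinuous and $I=\{d>0\}$. Hence Prop.\ \ref{ppcb} applies. With $J^-(x)$ closed it yields
\[
D_p=\{(x,y): x\in \overline{J^-(y)}\}=\{(x,y): x\in J^-(y)\}=J,
\]
together with future reflectivity, $I^-(x)\subset I^-(y)\Rightarrow I^+(x)\supset I^+(y)$.

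\textbf{Step 2: past-distinction.} The antisymmetry of $D_p$ is the antisymmetry of $J$, which holds because $(X,J,I)$ is assumed to be a KP causal space; this is the meaning of \emph{causal} in the statement. So past-distinction, $I^-(x)=I^-(y)\Rightarrow x=y$, holds.

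\textbf{Step 3: conclusion.} All hypotheses of Theorem \ref{kcct} are now satisfied: local causal closure, causal path connectedness, a separable and locally compact topology, past-distinction, future reflectivity, approximation from below, and $X=I^-(X)$. Theorem \ref{kcct} therefore makes $(X,D_p)$ a continuous poset whose way-below relation is $I$. Since $D_p=J$, the same holds for $(X,J)$.

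The only point needing care is checking that a KS-Lorentzian pre-length space is a topological KP-space, so that Prop.\ \ref{ppcb} can be invoked. This requires the metric topology to be Hausdorff, which is automatic, and $I$ to be open, which follows from the lower semicontinuity of $d$. Beyond this the argument is pure bookkeeping; I expect no substantive obstacle, as all the analytic work is already contained in Theorem \ref{kcct} and Prop.\ \ref{vpoj}.
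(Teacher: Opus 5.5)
Your proposal is correct and follows essentially the paper's route. The paper invokes Theorem~\ref{cpkl}(a1), whose proof consists of your Steps~1--2: Prop.~\ref{ppcb} gives $D_p=J$ and future reflectivity, and the antisymmetry of $J$ gives past-distinction. It then applies Theorem~\ref{ktsw}(a1) together with the KS-specific translations (local causal closure and causal path connectedness), which you instead import through Theorem~\ref{kcct}; the two arguments differ only in the order in which these specializations are composed.
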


A bicontinuous variant imposing causal continuity, two-sided topological approximation and $X=I^-(X)=I^+(X)$, is also immediate from Theorem \ref{cpkl}(b).

\begin{proof}
It follows from  Theorem \ref{cpkl}(a1). The property ``locally causally closed'' implies the pointed local past reflectivity.  The property ``causally path connected'' implies   that any two chronologically related points are connected by an isocausal curve.
\end{proof}

\subsection{Equivalence of past indecomposable sets and round-ideals}
The following observation will be used only when discussing completions.
A  subset $S$ is a {\em past set}\footnote{This is the terminology of \cite{geroch72} which is different from that of \cite{hawking73,minguzzi18b} which call past a set such that $I^-(S)\subset S$.} if $I^-(S)=S$.  A past set is {\em indecomposable} if it cannot be written as the union of two proper  past sets. For regular spacetimes Geroch, Kronheimer and Penrose proved that the indecomposable past sets are chronological pasts of timelike curves \cite{geroch72}. A proper past set is one that has the form $I^-(x)$, $x\in X$.

We recall that an ideal of a poset $(X,\le)$ is a directed lower set. A {\em round-ideal} is an ideal $A$ such that for every $x\in A$ we can find $y\in A$ with $x\in \twoheaddownarrow y$. In particular, it is Scott open as the union of sets of the form $\twoheaddownarrow y$. A (round-)ideal is {\em primitive} if it is of the form $\twoheaddownarrow x$ for some $x\in X$.

\begin{theorem} \label{ipeqv}
Let $X$ be a past-distinguishing future-reflecting topological KP-space which satisfies all the assumptions listed in Thm.\  \ref{btop}. The family of the following sets coincides (the notion of ideal and the other order-theoretic notions refer to the poset $(X,D_p)$)
\begin{itemize}
\item[(i)] Indecomposable past sets,
\item[(ii)] $I^-(C)$ where $C$ is a countable timelike chain,
\item[(iii)] $I^-(S)$ where $S$ is a directed set,
\item[(iv)] Scott-open ideals,
\item[(v)] round-ideals.
\end{itemize}
The property for sets as in (i) of being proper is equivalent  to that of  being primitive for sets in (iv) or (v).
\end{theorem}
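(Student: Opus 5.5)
The proof will run through a cycle of inclusions among the five families. Under the hypotheses of Thm. \ref{btop} plus future reflectivity, Thm. \ref{ktsw}(a1) applies: $(X,D_p)$ is a continuous poset and $\ll\,=I$. Hence $\twoheaddownarrow x=I^-(x)$, round-ideals are ideals $A$ with $A\subset I^-(A)$, and by Thm. \ref{nncf} the Scott-open sets are unions of sets $I^+(u)$.

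The easy inclusions come first. (ii)$\Rightarrow$(iii) holds because a chain is directed. For (iii)$\Rightarrow$(v), Prop. \ref{jjil} shows that $I^-(S)$ is directed. It is lower for $D_p$, since $(w,z)\in D_p$ with $z\in I^-(s)$ gives $w\in \overline{J^-(z)}$, and the open set $I^+(w)$... In fact the cleanest route is: $w\in I^-$ is unnecessary, because $D_p\circ I\subset I$ directly gives $(w,s)\in I$. Roundness follows from interpolation (Prop. \ref{kbpo}): if $x\in I^-(s)$, pick $y$ with $x\in I^-(y)$, $y\in I^-(s)\subset I^-(S)$.

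For (v)$\Rightarrow$(iv), a round-ideal is a union of $\twoheaddownarrow y$, each Scott-open by Thm. \ref{nncf}. For (iv)$\Rightarrow$(iii), a Scott-open ideal $A$ is directed, and each $a\in A$ satisfies $a=\sup\twoheaddownarrow a$, so Scott-openness gives some $a'\in A\cap I^-(a)$. This yields $A\subset I^-(A)$, and $I^-(A)\subset A$ holds by lowerness. Hence $A=I^-(A)$ with $A$ directed.

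Next, (iii)$\Rightarrow$(ii) is exactly the first part of Prop. \ref{vpoj}.

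For (ii)$\Rightarrow$(i), $I^-(C)$ is a past set by interpolation. To show it is indecomposable, suppose $I^-(C)=P_1\cup P_2$ with $P_i$ past sets. If neither $P_i$ contains all of $C$, take $c_1\notin P_1$ and $c_2\notin P_2$ in $C$. Since $C$ is a chain, there is a later $c\in C$ with $c_1,c_2\in I^-(c)$ (timelike chain). Then $c\in P_1$ or $c\in P_2$, say $P_1$, and then $c_1\in I^-(c)\subset P_1$, a contradiction. So $C\subset P_i$ for some $i$, and therefore $I^-(C)\subset P_i$.

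For (i)$\Rightarrow$(iii), I expect the main obstacle: showing an indecomposable past set $P$ is directed in $D_p$. The GKP argument adapts. Given $x,y\in P$, suppose no $z\in P$ has $x,y\in I^-(z)$. Set $P_1=P\cap I^-(\text{points of }P\text{ in }I^+(x))$, concretely $P_1=I^-(P\cap I^+(x))$ and $P_2=P\setminus\ldots$; more simply use the standard decomposition $P=I^-(P\cap I^+(x))\cup I^-(P\setminus I^+(x))$ only after checking it covers $P$. Every $p\in P$ has $p\in I^-(q)$ for some $q\in P$ because $P=I^-(P)$. The two pieces are past sets, and directedness failure makes them proper or unequal to $P$, which contradicts indecomposability. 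I would first try the GKP form: $P_x=I^-(P\cap I^+(x))$ is a past set contained in $P$. For any $q\in P$, writing $P=P_x\cup I^-(P\setminus I^+(x))$ with $x\in P$, the first piece is nonempty by interpolation. Indecomposability then forces either $P=P_x$, or $P=I^-(P\setminus I^+(x))$, which is impossible since $x\in P$ would require $x\in I^-(q)$ with $q\notin I^+(x)$. Hence $P=P_x$ for every $x\in P$. Applying this to $y$ inside $P_x$, an element $q\in P\cap I^+(x)$ with $y\in I^-(q)$ exists, and then $q$ bounds both $x$ and $y$. The only care needed is that the term ``proper'' in the definition of decomposition means distinct from $P$.

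Finally, primitivity: $I^-(x)=\twoheaddownarrow x$, so proper past sets (i) correspond exactly to primitive (round-)ideals.
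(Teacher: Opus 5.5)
The gap is in your treatment of item (iv). In (v)$\Rightarrow$(iv) you invoke Thm.~\ref{nncf} to say that each $\twoheaddownarrow y$ is Scott-open, but that theorem is about $\twoheaduparrow x$. Scott-open sets are upper sets, whereas $\twoheaddownarrow y=I^-(y)$ and every ideal are lower sets. Read literally, a ``Scott-open ideal'' must be both upper and lower; for example, the round ideal $I^-(p)$ of Minkowski space is not Scott-open. The paper's own proof makes the same slip. What it actually uses is the reading ``an ideal $A$ with $A=\bigcup_{a\in A}\twoheaddownarrow a$'', for which (iv)$\Rightarrow$(iii) is just $A=\bigcup_{a\in A}I^-(a)=I^-(A)$.

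Your (iv)$\Rightarrow$(iii) instead uses the genuine Scott inaccessibility condition, and it draws a conclusion in the wrong direction. Applying inaccessibility to $D=\twoheaddownarrow a$ with $\sup D=a\in A$ gives $a'\in A$ with $a'\ll a$, i.e.\ $a\in I^+(A)$. But $A\subset I^-(A)$ requires some $a''\in A$ with $a\ll a''$. So your second cycle (iii)$\Rightarrow$(v)$\Rightarrow$(iv)$\Rightarrow$(iii) does not close, and in particular (v)$\Rightarrow$(iii) is never actually proved.

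The repair is short. A round ideal $A$ is directed. Roundness gives $A\subset I^-(A)$, and lowerness gives $I^-(A)\subset A$ (since $\twoheaddownarrow a\subset\,\downarrow a$). Hence $A=I^-(A)$ is of type (iii). Under the paper's reading of (iv), (iv) and (v) are then literally the same condition.

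The remaining steps are sound. Your (i)$\Rightarrow$(iii) via the decomposition $P=I^-(P\cap I^+(x))\cup I^-(P\setminus I^+(x))$ is the classical GKP argument, and it is cleaner than the paper's families of maximal chains. Two minor points:

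In (iii)$\Rightarrow$(v), the property you need is $(w,z)\in D_p$, $(z,s)\in I\Rightarrow(w,s)\in I$. In the paper's convention this is $I\circ D_p\subset I$, which is exactly future reflectivity (Prop.~\ref{oovt}), not an automatic fact. It is available here by hypothesis, or from $\ll\,=I$ and Thm.~\ref{kctt}(ii).

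In (ii)$\Rightarrow$(i), the ``later $c\in C$'' presupposes that $C$ has no largest element; the paper tacitly assumes the same. You can remove this by running your argument inside $I^-(C)$, which is $I$-directed by the proof of Prop.~\ref{jjil}. That version even gives (iii)$\Rightarrow$(i) directly.
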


\begin{proof}
(ii) $\Rightarrow$ (i). $I^-(C)$ is clearly past. Suppose that there are past sets $P_1,P_2$ such that $I^-(C)=P_1\cup P_2$ and let $y_k$ be the increasing sequence of $C$, $(y_k,y_{k+1})\in I$, then there is an infinite number of elements in $P_1$ or $P_1$, say $P_1$. Then $I^-(P_1)\supset C$ and so $P_1=I^-(P_1)\supset I^-(C)$, thus $P_1$ is not proper.

(i) $\Rightarrow$ (iii). If $P$ is a past set $P=I^-(P)$ thus we need only to show that $P$ is directed. Suppose it is not, then there are $x,y\in P$, $x\ne y$, with no $z\in P$ such that $(x,z)\in D_p$ and $(y,z)\in D_p$. Let $C_x$ be a maximal ($D_p$-)chain contained in $P$, then $I^-(C_x)\subset P$. Every point in $P$ is contained in some maximal ($D_p$-)chain contained in $P$. No ($D_p$-)chain can contain in its chronological past both $x$ and $y$.
So let $V=\cup_{C\in \mathcal{F}} I^-(C)$ where $\mathcal{F}$ is the family of ($D_p$-)chains that do not contain $x$ in their chronological past, and let  $U=\cup_{C\in \mathcal{G}} I^-(C)$ where $\mathcal{G}$ is the family of ($D_p$-)chains that do not contain $y$ in their chronological past, then $P= U\cup V$ where both sets are proper ad past. The contradiction proves that  $P$ is directed.

(iii) $\Rightarrow$ (ii). This is Prop.\ \ref{vpoj}.

(ii) $\Rightarrow$ (v). Since $C$ is directed $I^-(C)$ is directed. Also if $y\in I^-(C)$ and $x\ll y$ then $x\in I^-(C)$ because $\ll\,=I$. This show that $I^-(C)$ is an ideal. Moreover, if  $y\in I^-(C)$, there is $c\in C$ such that $y\in I^-(c)$ and since $\ll\,=I$ and the poset is continuous, it is interpolating, so there is $z$, $y\ll z\ll c$, which implies $z\in I^-(C)$ and proves that $I^-(C)$ is round.

(v) $\Rightarrow$ (iv).
If $S$ is a round ideal, for every $x\in S$ there is $y\in S$ such that $x\in \twoheaddownarrow y$, thus $S=\cup_{s\in S} \twoheaddownarrow s$. But this is the characteristic expression of a Scott-open ideal, Thm.\ \ref{nncf}.

(iv) $\Rightarrow$ (iii). If $S$ is a Scott-open ideal, as it is Scott-open  $S=\cup_{s\in S}\twoheaddownarrow s = \cup_{s\in S} I^-(s)=I^-(S)$ (Thm.\ \ref{nncf}), and since $S$ is an ideal, it is directed.

The last statement is clear from the identity $\twoheaddownarrow x=I^-(x)$ and the fact that if a set belongs to one family it belongs to all families.
\end{proof}

\section{Completions and spacetime boundaries} \label{cmpl}

In the regular theory in a globally hyperbolic spacetime $(M,g)$ every upper-bounded future-directed timelike curve has an endpoint.
When the curve is inextendible, the missing suprema are naturally interpreted as ``points at infinity'' or ``singular points'' that ought to be attached to $M$. Constructing such ideal points from the causal structure alone, without any auxiliary conformal embedding, is the strategy of the causal boundary of Geroch, Kronheimer and Penrose \cite{geroch72}, later refined by several authors \cite{budic74,harris98,marolf03,flores06b,flores11}.

Our study suggests a different strategy to approach spacetime completions.

In Section \ref{qcft} we have seen that several distinguished causal relations on $(M,g)$ turn $M$ into a continuous, or bicontinuous, poset.
The result has been generalized to topological KP-spaces in Section \ref{rjgh}, a sufficiently general framework that comprises the various definitions of Lorentzian length spaces that have been proposed in the literature.

We recalled that in a globally hyperbolic poset every upper-bounded directed set  has a supremum, Thm.\ \ref{oftr}. Consequently, directed set are analogous to timelike curves and suprema are analogous to endpoints. The first idea could be to use ideals to build a completion. In fact, a few completions schemes have been developed in domain theory using variations of this strategy. We shall explore the most important ones below. The interpretation of spacetime as continuous posets will allow us to apply them verbatim.

\subsection{Completion schemes}

Let us consider the  most pertinent completion schemes for the continuous poset approach to spacetimes. An approach, not related to continuous posets, but valid under low regularity was proposed in \cite{burgos23,ake26}. It would be interesting to clarify if it has connections with those developed in this section.

\subsubsection{Ideal completion}
The most basic completion of a poset $(X,\le)$ is obtained by adjoining, formally, the missing suprema of all directed subsets.

\begin{definition}[Ideal completion, \cite{abramsky94}, Def.\ 2.2.21] \label{idealcompl}
Let $(X,\le)$ be a poset. The set $\hat X$ of all ideals of $X$ (recall: nonempty directed lower sets), ordered by inclusion, is a directed complete poset (dcpo), called the {\em ideal completion} of $X$. The map $j:X\to \hat X$, $j(x)=\downarrow x$, is an order embedding, and every ideal is the directed supremum, in $\hat X$, of the principal ideals $\downarrow x$, $x\in X$, it contains.
\end{definition}

The ideal completion is characterized by a universal property: it is the free dcpo generated by $X$, in the sense that every monotone map from $X$ to a dcpo extends uniquely to a Scott-continuous map from $\hat X$ \cite[Prop.\ 2.2.24]{abramsky94}.
This is the construction used, in essentially this form, by Martin and Panangaden to reconstruct a globally hyperbolic spacetime from a countable dense subset endowed with the causal relation \cite[Sec.\ 3]{martin06}.

\subsubsection{Directed completion}

A related but logically independent construction, characterized instead by a universal property with respect to suprema-{\em preserving} (rather than merely monotone) maps, is developed by Gigli \cite[Sec.\ 2]{gigli25} and applied to spacetimes in \cite{gigli26}.

Working in the synthetic setting of Lorentzian pre-length spaces, Gigli, Ohanyan, Picerni, Xu and Zanardini \cite{gigli26} introduce a {\em directed completion}, characterized by a universal property with respect to suprema-{\em preserving} maps (Gigli's ``monotone convergence property,'' \cite[Def.\ 2.4]{gigli25}) rather than merely monotone ones, and constructed by Markowsky's transfinite dcpo-completion \cite{markowsky76} rather than as a plain ideal completion; consequently their construction is close in spirit to, but not literally an instance of, the ideal completion of Def.\ \ref{idealcompl}. It coincides with the plain construction exactly when no genuine supremum of $X$ is ever mistaken for an ideal point, a compatibility condition they isolate as {\em (SC)}. They prove that, under (SC) together with approximability and past-distinction, the directed completion coincides with the future causal completion of Geroch, Kronheimer and Penrose, and that (SC) holds automatically for smooth globally hyperbolic spacetimes (via forward completeness of the metric, itself implied by compactness of causal diamonds). They illustrate the result with an explicit description of the boundary of the maximally extended Schwarzschild spacetime. Proposition \ref{scgigli} below shows that (SC) also follows from some hypotheses established to get the continuous poset property, by a route independent of theirs.

\begin{proposition}[Deriving the supremum-compatibility condition of \cite{gigli26}] \label{scgigli}
Let $X$ be a past-distinguishing topological KP-space satisfying lower topological approximation (i.e.\ ($\star$)), pointed local past-reflectivity (i.e.\ ($\diamond$)), and the hypotheses (i), (iv), (v) of Thm.\ \ref{btop} (Hausdorff, first-countable, separable, locally compact; $X=I^-(X)$; isocausal connectedness of chronologically related points) --- i.e.\ all the hypotheses of Prop.\ \ref{vpoj}, in both its stated directions. Then $(X,D_p)$ satisfies the supremum-compatibility condition {\em (SC)} of \cite[Sec.\ 3.2]{gigli26}: for every directed $S\subset X$ admitting a supremum, $I^-(S)=I^-(\sup S)$; equivalently, no terminal indecomposable past set of $X$ admits a $D_p$-supremum in $X$.
\end{proposition}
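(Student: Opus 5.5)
The plan is to obtain (SC) as a direct consequence of Prop.\ \ref{vpoj}, whose hypotheses coincide with those assumed here. Let $S\subset X$ be directed with a supremum $q:=\sup S$ in $(X,D_p)$. By the first part of Prop.\ \ref{vpoj} (which needs only past-distinction, ($\star$), first-countability and separability) there is a countable timelike chain $C\subset I^-(S)$ with $I^-(S)=I^-(C)$. The converse direction of Prop.\ \ref{vpoj} (local compactness, $X=I^-(X)$, isocausal connectedness of chronologically related points) then shows that, since $\sup S$ exists, $C$ has a future endpoint $p$. Since ($\diamond$) holds, the direct statement of the same proposition gives $p=\sup S=q$, $C\subset I^-(p)$, and Eq.\ (\ref{nnnn}), i.e.\ $I^-(S)=I^-(C)=I^-(\sup S)$. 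This is (SC).

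For the reformulation, recall from Thm.\ \ref{ipeqv} and Prop.\ \ref{vpoj} that every indecomposable past set has the form $P=I^-(S)$ with $S$ directed, indeed $P=I^-(C)$ with $C$ a countable timelike chain. Call $P$ terminal when it is not proper, i.e.\ $P\neq I^-(x)$ for every $x\in X$. By Prop.\ \ref{mmtb}, $P$ has a $D_p$-supremum iff $S$ (equivalently $C$) does, and the two suprema coincide. If (SC) holds and $P=I^-(S)$ admits a supremum $q$, then $P=I^-(q)$ is proper. Hence no terminal indecomposable past set has a supremum. Conversely, suppose no terminal indecomposable past set has a supremum, and let $S$ be directed with $\sup S=q$. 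Then $P=I^-(S)$ is indecomposable and has supremum $q$, so it is proper, $P=I^-(x)$. Since $\sup I^-(x)=x$ by the continuity argument of Thm.\ \ref{ktsw} (which uses only approximation from below and the last clause of Prop.\ \ref{vpoj}), $x=q$, and (SC) follows.

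The proof is essentially bookkeeping. The only delicate point, which I will check carefully, is that the converse part of Prop.\ \ref{vpoj} invoked here does not secretly rely on future reflectivity, which is not assumed. Its proof goes by contradiction through the limit point $r$ of the isocausal curves $\sigma_k$ on $\partial V$. It uses only $D_p\circ I\subset I$, openness of $I$, and the upper-bound characterization of Prop.\ \ref{mmtb}; none of these needs reflectivity. Similarly, the identification $\sup I^-(x)=x$ used in the equivalence avoids Thm.\ \ref{btop} and depends only on Props.\ \ref{pvpo}, \ref{mmtb}, and \ref{vpoj}. Once this is verified, no further obstacle remains.
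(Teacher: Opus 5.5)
Your derivation of (SC) itself is the paper's argument. You take a countable timelike chain $C$ with $I^-(S)=I^-(C)$ from the first part of Prop.~\ref{vpoj}, obtain its future endpoint from the converse part, and get Eq.~(\ref{nnnn}) via ($\diamond$).

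You diverge only on the ``equivalently'' clause. The paper simply cites \cite[Lemma 3.13]{gigli26}. You instead prove it in-house, passing between indecomposable past sets and sets $I^-(S)$ with $S$ directed (Thm.~\ref{ipeqv}), and transporting suprema with Prop.~\ref{mmtb}. Your route is self-contained and shows directly that both formulations hold under the present hypotheses. The citation instead gives an equivalence that holds in Gigli et al.'s own setting, independently of the topological hypotheses used here.

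You should make one caveat explicit. Thm.~\ref{ipeqv} is stated under future reflectivity, which is \emph{not} assumed in this proposition, so citing it as is conflicts with your own stated concern. You need to observe that the directions you actually use, (i)$\Rightarrow$(iii)$\Rightarrow$(ii)$\Rightarrow$(i), are proved there without that hypothesis. Future reflectivity, through $\ll\,=I$ and continuity, enters only in the implications involving (iv) and (v).

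Finally, the identity $\sup I^-(x)=x$ needs no chain argument borrowed from Thm.~\ref{ktsw}. It is Prop.~\ref{mmtb} applied to $S=\{x\}$.
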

No form of future reflectivity is required here: that hypothesis is used elsewhere in this section (Thm.\ \ref{btop}) only to upgrade $I\subset\ll$ to the full identity $\ll\,=I$, a logically independent question from (SC). Note that future reflectivity becomes necessary for the continuous poset property.


\begin{proof}
Let $S\subset X$ be directed with $x:=\sup S$. By the first part of Prop.\ \ref{vpoj}, there is a countable timelike chain $E$ with $I^-(S)=I^-(E)$. By the converse direction of Prop.\ \ref{vpoj}, since $\sup S=x$ exists, $E$ has future endpoint $x$, and $I^-(S)=I^-(E)=I^-(x)$. The last equivalence follows  from \cite[Lemma 3.13]{gigli26}.
\end{proof}

It is worth stressing that this gives a route to (SC) genuinely independent of the one taken in \cite{gigli26}. There, (SC) is deduced from {\em forward completeness}, a Dedekind-type sequential-completeness condition on the metric $d$ (every $\ll$-increasing sequence bounded above by $\le$ converges in $(X,d)$), which is in turn guaranteed by global hyperbolicity through the compactness of causal diamonds. Here, instead, (SC) follows from the local compactness of $X$ together with the isocausal connectedness of chronologically related points and the pointed local past-reflectivity condition $\diamond$ of Prop.\ \ref{vpoj} --- a package of natural hypotheses that presupposes no form of global completeness or global hyperbolicity. For a smooth spacetime $(M,g)$ local compactness is a manifold property, isocausal connectedness is definitional for $I$, and $\diamond$ follows from strong causality and closure of the causal relation in globally hyperbolic neighborhoods.

\begin{corollary} \label{kcvt}
Let $(X,J,I,\rho,d)$ be a locally causally closed, causally path connected KS-Lorentzian pre-length space (in particular a KS-LLS)  which has separable and locally compact topology, which is past-distinguishing  and which satisfies
\begin{itemize}
\item[-]  two-sided approximation: $I^-(x)$ is non-empty and $x\in \overline{I^-(x)}\cap \overline{I^+(x)}$ for every $x\in X$,
\item[-] $X=I^-(X)$.
\end{itemize}
Then the directed completion \cite{gigli25,gigli26} of the poset $(X,D_p)$ (or of $(X,J)$ if the causal pasts of points are closed)  coincides with the GKP future completion.
\end{corollary}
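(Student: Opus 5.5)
The plan is to reduce the corollary to Prop.\ \ref{scgigli} combined with the coincidence theorem of \cite{gigli26}. That theorem states that, under (SC) together with approximability (two-sided topological approximation) and past-distinction, the directed completion coincides with the GKP future completion.

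The first step is to check that a KS-Lorentzian pre-length space with the listed properties satisfies the hypotheses of Prop.\ \ref{scgigli}, exactly as in the proof of Thm.\ \ref{kcct}:
\begin{itemize}
\item the topology, being induced by the metric $\rho$, is Hausdorff and first-countable; separability and local compactness are assumed;
\item local causal closure gives pointed local past reflectivity ($\diamond$), since $J\vert_{B\times B}$ is closed on a closed neighborhood $B$, and its interior serves as $O_x$;
\item causal path connectedness gives isocausal connectedness of chronologically related points, since timelike curves are in particular continuous isocausal curves;
\item ($\star$) is the lower half of the assumed two-sided approximation, and $X=I^-(X)$ is assumed;
\item $I$ is open: I would take this from lower semicontinuity of $d$, because $I=\{d>0\}$ is the preimage of an open set under a lower semicontinuous map. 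This makes $(X,J,I,\mathscr{T})$ a topological KP-space.
\end{itemize}
Note that no future reflectivity is needed, consistent with the remark after Prop.\ \ref{scgigli}.

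Prop.\ \ref{scgigli} then yields (SC) for $(X,D_p)$: $I^-(S)=I^-(\sup S)$ for every directed $S$ with a supremum. The remaining hypotheses of the theorem of \cite{gigli26} are past-distinction, which is assumed (it is exactly antisymmetry of $D_p$, and their $\preceq_-$ is our $D_p$), and two-sided approximation, which is assumed. Their theorem therefore applies and gives the coincidence of the directed completion of $(X,D_p)$ with the GKP future completion.

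For the parenthetical claim about $(X,J)$: if the causal pasts $J^-(x)$ are closed, then Prop.\ \ref{ppcb} gives $D_p=J$. The two posets then coincide, and so do their directed completions.

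The main obstacle I expect is matching frameworks. Their Lorentzian pre-length spaces and the exact form of their hypotheses may differ slightly from ours. In particular, they use forward approximation only to derive (SC), so one must check that their coincidence proof uses forward approximation only through (SC). I would handle this by citing their main theorem in the form ``(SC) + approximability + past-distinction $\Rightarrow$ coincidence'', with (SC) supplied by Prop.\ \ref{scgigli}, and noting that two-sided approximation is in any case assumed here.
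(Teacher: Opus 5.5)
Your proposal is correct and follows the paper's own route. The listed hypotheses yield those of Prop.~\ref{scgigli}, hence (SC), and \cite[Thm.\ 3.15]{gigli26} (past-distinction, two-sided approximation, (SC)) then gives the coincidence; your explicit checks (openness of $I$ from lower semicontinuity of $d$, ($\diamond$) from local causal closure combined with ($\star$), isocausal connectedness from causal path connectedness, and $D_p=J$ via Prop.~\ref{ppcb}) spell out what the paper leaves implicit, and the paper likewise verifies nothing about the Gigli framework beyond citing their theorem.

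One small correction to your last paragraph. In \cite{gigli26} forward approximation is a separate hypothesis of the coincidence theorem, while (SC) is derived there from forward completeness or global hyperbolicity; this is why the corollary keeps two-sided approximation. It does not affect your argument, since you assume two-sided approximation anyway.
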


We guess that, using our result, two-sided approximation could be further weakened to lower approximation but we did not investigate it.

\begin{proof}
The assumption imply the (SC) property. Then apply \cite[Thm.\ 3.15]{gigli26}.
\end{proof}

\subsubsection{Round-ideal completion}

For our purposes the plain ideal completion is too coarse: applied to a continuous poset it typically destroys continuity, because a generic ideal need not be ``approximated from within'' in the sense of the way-below relation. The refinement that repairs this, and which is  the one relevant for our  approach that regards spacetimes as continuous posets, is due to Lawson.

We recall that a {\em round-ideal} is an ideal $A$ such that for every $x\in A$ we can find $y\in A$ with $x\ll y$.

\begin{definition}[Round-ideal completion, \cite{lawson97}] \label{roundidealdef}
Let $(X,\le)$ be a continuous poset with way-below relation $\ll$.
The set $\mathrm{RI}(X)$ of round ideals of $X$, ordered by inclusion, is the {\em round ideal completion} of $X$.
\end{definition}

By the interpolation property established in Thm.\ \ref{oovg}, every primitive ideal $\twoheaddownarrow x$, $x\in X$, is automatically round, so the embedding $j:X\to \mathrm{RI}(X)$, $j(x)=\twoheaddownarrow x$, is again available.

The modern way to arrive to this completion is via the notion of abstract basis. Let us illustrate this path as it shall provide several details on the properties of the completion.

Let $(X, \prec)$ be a set with a binary relation. For a subset $F \subset X$ and an element $y \in X$, we denote $F \prec y$ whenever $x \prec y$ holds for every $x \in F$. The relation $\prec$ is said to be {\em fully transitive} if it is transitive and obeys the {\em  interpolation property}:
\begin{equation} \label{mkcv}
  \forall \text{ finite } F \subset X, \quad F \prec z \implies \exists\, y \prec z \text{ such that } F \prec y.
\end{equation}
When $F = \emptyset$, this condition is understood as: for every $z \in X$, there is some $y \in X$ with $y \prec z$.

An {\em abstract basis} is defined as a pair $(B, \prec)$, where $B$ is a set and $\prec$ is a fully transitive relation on $B$.

\begin{proposition} \label{zzsd}
If $B$ is a basis of  a continuous poset $(X,\le)$ the pair $(B,\ll)$ is an abstract basis. In particular, $(X,\ll)$ is an abstract basis.
\end{proposition}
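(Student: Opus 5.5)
Two things have to be checked for the restricted relation $\ll|_{B}$: it is transitive, and it satisfies the finite interpolation property (\ref{mkcv}) inside $B$. Transitivity is immediate. By Thm.\ \ref{kctt}(i),(ii) we have $\ll\,\subset\,\le$ and $\le\circ\ll\,\subset\,\ll$. Hence $x\ll y\ll z$ gives $x\le y\ll z$, and so $x\ll z$. Restricting to $B$ preserves this.

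For interpolation, I would first treat the empty case $F=\emptyset$. Given $z\in B$, the definition of basis says $B\,\cap\twoheaddownarrow z$ is directed, and directed sets are non-empty. So there is $y\in B$ with $y\ll z$.

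For a non-empty finite $F\subset B$ with $F\ll z$, $z\in B$, I would run the argument of Thm.\ \ref{tncw} with the directed set $D:=B\,\cap\twoheaddownarrow z$. By the basis property $\sup D=z$, so $F\ll z\le\sup D$. Thm.\ \ref{tncw} then gives $d\in D$ with $F\ll d$. This $d$ lies in $B$ and satisfies $d\ll z$, which is exactly the $y$ required in (\ref{mkcv}). The step to watch is that Thm.\ \ref{tncw} applies to an arbitrary directed $D$ with $y\le\sup D$, not only to $D=\twoheaddownarrow z$ as in Thm.\ \ref{nxre}. This is what places the interpolant inside $B$ rather than merely inside $X$.

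The hypothesis of Thm.\ \ref{tncw} is continuity of $(X,\le)$, which holds because a poset with a basis is continuous. No strong-interpolation clause is needed. The final assertion follows by taking $B=X$, which is a basis precisely because $X$ is continuous.
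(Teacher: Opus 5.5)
Your proof is correct and essentially the paper's. Both rest on Thm.\ \ref{tncw}: the paper first interpolates in $X$ via Thm.\ \ref{nxre} to get $F\ll w\ll z$, then uses $w\ll z=\sup(B\cap\twoheaddownarrow z)$ to pick $y\in B\cap\twoheaddownarrow z$ with $w\le y$, whereas you apply Thm.\ \ref{tncw} directly to the directed set $B\cap\twoheaddownarrow z$ and so get the interpolant in $B$ in one step; you also check transitivity and the case $F=\emptyset$ explicitly, which the paper leaves implicit.
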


\begin{proof}
Let $F$ be a finite subset of $B$ and $F\ll z$, by   Theorem \ref{nxre} there is $w\in X$ such that $F\ll w$, $w\ll z$. But $\sup (B\cap \twoheaddownarrow z)=z$, thus by definition of $\ll$, there is $y\in B\,\cap \twoheaddownarrow z$, $w \le y$, thus $y\in B$, $F\ll y\ll z$.
\end{proof}

\begin{definition}
Suppose that $(B, \prec)$ is an abstract basis. A non-empty subset $I$ of $B$ is a \emph{round ideal} if
\begin{itemize}
\item[(i)] $y \in I$ and $x \prec y$ imply that $x \in I$,
\item[(ii)] $x, y \in I$ implies there exists $z \in I$ with $x \prec z$ and $y \prec z$.
\end{itemize}
The \emph{round ideal completion} is the partially ordered set $({RI}(B), \subset)$ consisting of all round ideals ordered by set inclusion.
\end{definition}

Observe that when the partial order is given  by the inclusion, the notion of supremum of family of subsets $\mathcal{U}=\{U_\alpha, \alpha \in  A\}$ corresponds to the union of the elements of the family $\sup \mathcal{U}=\bigcup_\alpha U_\alpha$ for it is clear that it provides an upper bound and that every upper bound contains it. As a consequence, the existence of the supremum is not an issue.

For a poset $(X,\le)$, not necessarily continuous, an abstract basis is provided by the pair $(X,\le)$ itself for which choice the round ideals coincide with the standard ideals and the round ideal completion becomes coincident with the ideal completion. Continuity of the poset allows one to make the choice $(X,\ll)$ for abstract basis, and it is this choice that will be generically referred as round ideal completion of   $(X,\le)$ (in this case $j(y)={\twoheaddownarrow}y$ in the following result). The round ideal in this case is a subset $I$ of $X$ such that (i) $y \in I$ and $x \ll y$ imply that $x \in I$, and (ii) $x, y \in I$ implies there exists $z \in I$ with $x,y \ll z$.

The following is proved in \cite[Sec.\ 2.2]{abramsky94} \cite[Example 3.5]{lawson97} \cite{keimel09}

\begin{theorem} \label{roundidealprop}
Let $(B, \prec)$ be an abstract basis. Then the following hold:
\begin{enumerate}
  \item[\textup{(i)}] the round ideal completion $({RI}(B), \subset)$ is a continuous domain;
  \item[\textup{(ii)}] $I \ll J$ in ${RI}(B) \Leftrightarrow \exists\, y \in J$ such that $I \prec y$;
  \item[\textup{(iii)}] the mapping $j \colon B \to {RI}(B)$ defined by $j(y) = {\Downarrow}y := \{x : x \prec y\}$ has the property that $x \prec y$ implies $j(x) \ll j(y)$;
  \item[\textup{(iv)}] the set $j(B)$ is a basis for ${RI}(B)$.
\end{enumerate}
Additionally, if $(B,\prec)=(X,\ll)$ where $(X,\le)$ is a continuous poset, the properties are improved as follows
\begin{enumerate}
  \item[\textup{(i')}] the round ideal completion $({RI}(X), \subset)$ is a continuous domain;
  \item[\textup{(ii')}] $I \ll J$ in ${RI}(X) \Leftrightarrow \exists\, y \in J$ such that $I \ll y$ $\Leftrightarrow  \exists\, y \in J$ such that $I \subset \, \downarrow y$;
  \item[\textup{(iii')}] the mapping $j \colon X \to {RI}(X)$ defined by $j(y) = \,\twoheaddownarrow y$ is injective, strictly order preserving,\footnote{Defining $x<y$ if $x\le y$ and $x\ne y$, this means $x< y \Rightarrow j(x)\subsetneq j(y)$.} and has the property that $x \ll y$ if and only if $j(x) \ll j(y)$;
  \item[\textup{(iv')}] the set $j(X)$ is a basis for ${RI}(X)$;
  \item[\textup{(v')}] each round ideal of $(X,\le)$ is an ideal;
  \item[\textup{(vi')}] when $X$ and  ${RI}(X)$ are endowed with the respective Scott topologies the map $j$ is a topological embedding with Scott-dense image, a sobrification and a $D$-completion;
    \item[\textup{(vii')}]   the map $j$  preserves existing suprema: if $D$ is a directed subset that has supremum then $j(D)$ is also directed, has also supremum and
        \begin{equation} \label{ccnd}
        j(\sup D)=\sup j(D).
  \end{equation}
\end{enumerate}
\end{theorem}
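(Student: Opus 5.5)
The general part (i)--(iv) is standard abstract-basis theory, and I would follow \cite[Sec.\ 2.2]{abramsky94}. First, the union of a directed family of round ideals is again a round ideal: it is lower for $\prec$, and for two elements one moves into a common member of the family and interpolates there. Hence $RI(B)$ is a dcpo, with directed suprema given by unions.

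Next I would prove (ii). Suppose $I\prec y$ for some $y\in J$, and let $\mathcal{D}$ be a directed family of round ideals with $J\subset\bigcup\mathcal D$. Then $y\in D$ for some $D\in\mathcal D$, and lowerness of $D$ gives $I\subset{\Downarrow}y\subset D$. For the converse, use that $J=\bigcup_{y\in J}{\Downarrow}y$. This holds by roundness of $J$, and the family is directed because of (ii) in the definition of round ideal. The way-below property then gives $I\subset{\Downarrow}y$ for some $y\in J$. Interpolating, pick $y'\in J$ with $y\prec y'$; then $I\prec y'$ by transitivity.

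Items (iii) and (iv) follow from (ii). Each ${\Downarrow}y$ is a round ideal by full transitivity. For $J\in RI(B)$ the set $\{{\Downarrow}y:y\in J\}$ is directed, lies way-below $J$ by (ii), and has union $J$. This gives continuity, i.e.\ (i), together with (iv).

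For the improved statements with $(B,\prec)=(X,\ll)$, Prop.\ \ref{zzsd} makes $(X,\ll)$ an abstract basis, so (i') and (iv') are immediate.

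For (ii') I would add one equivalence. If $I\subset{\downarrow}y$ with $y\in J$, interpolate $y\ll y'\in J$, so $I\ll y'$. Conversely, $I\ll y$ trivially gives $I\subset{\downarrow} y$.

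For (v'), a round ideal is directed and lower for $\ll$. To get $\le$-lowerness, take $x\le a\in A$, pick $b\in A$ with $a\ll b$, and apply Thm.\ \ref{kctt}(ii) to get $x\ll b$, so $x\in A$.

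For (iii'), injectivity and order reflection come from Prop.\ \ref{cnzz}, since $\twoheaddownarrow x\subset\twoheaddownarrow y$ implies $x\le y$. Strictness then follows by antisymmetry. For the equivalence, if $j(x)\ll j(y)$ then by (ii') $\twoheaddownarrow x\subset{\downarrow}z$ for some $z\ll y$. Taking suprema gives $x\le z\ll y$, so $x\ll y$.

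For (vii'), let $D$ be directed with supremum. Then $j(D)$ is directed by monotonicity and its supremum is $\bigcup_{d\in D}\twoheaddownarrow d$. The proof of Thm.\ \ref{tncv} shows that this union is exactly $\twoheaddownarrow\sup D$: if $x\ll\sup D$ then $x\ll d$ for some $d$, and the reverse inclusion follows from monotonicity. Hence $j(\sup D)=\sup j(D)$.

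The main obstacle is (vi'), which I would largely import from \cite{lawson97,keimel09}. I would still check the concrete parts directly. Scott-open sets of $RI(X)$ have basis $\twoheaduparrow_{RI}{\Downarrow}y$ by Thm.\ \ref{nncf}, and their preimages under $j$ are the sets $\twoheaduparrow y$ by (ii')--(iii'), so $j$ is continuous and open onto its image. Density holds because every basic open set $\twoheaduparrow_{RI}{\Downarrow}y$ contains the point $j(y')$ for any $y'$ with $y\ll y'$. The sobrification and $D$-completion claims I would cite rather than reprove.
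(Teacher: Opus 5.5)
Your proposal is correct, and it does more than the paper does. The paper cites \cite[Sec.\ 2.2]{abramsky94}, \cite{lawson97}, \cite{keimel09} for (i)--(vi') and writes out only (vii'). It proves (vii') exactly as you do: monotonicity gives one inclusion of $\twoheaddownarrow \sup D=\bigcup_{d\in D}\twoheaddownarrow d$, and Thm.\ \ref{tncv} gives the other.

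You instead rebuild (i)--(v') and the embedding/density part of (vi') from tools already proved in the paper:
Prop.\ \ref{zzsd} supplies full transitivity;
the union description of directed suprema, together with Prop.\ \ref{wwgf}, supplies continuity;
Prop.\ \ref{cnzz} and $x=\sup\twoheaddownarrow x$ handle (iii');
Thm.\ \ref{kctt}(ii) together with roundness handles (ii') and (v').
This makes the section self-contained and shows exactly where continuity of $(X,\le)$ is used. The citation route buys only brevity, plus the sobrification and $D$-completion claims, which you also cite.

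Two steps in (vi') need tightening.

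First, Thm.\ \ref{nncf} gives as a base all sets $\{K\in{RI}(X): A\ll K\}$ with $A\in{RI}(X)$. Restricting to $A=j(y)$ requires rerunning its proof with the basis $j(X)$ from (iv'). Alternatively, check continuity of $j$ on all such sets directly via (ii').

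Second, your density sentence assumes some $y'$ with $y\ll y'$ exists, and it need not. In $[0,1]$, for example, nothing is way-above $1$. You must show such an element exists whenever the basic set is non-empty. Suppose $j(y)\ll A$. Then (ii') gives $z\in A$ with $\twoheaddownarrow y\subset\downarrow z$, so $y\le z$. Roundness of $A$ gives $z'\in A$ with $z\ll z'$, so $y\ll z'$, and $j(z')$ lies in the basic set.

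A minor point: in the converse of (ii) your final interpolation is unnecessary, since $I\subset{\Downarrow}y$ already says $I\prec y$.
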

We shall not enter into the notions of sobrification and $D$-completion.

Perhaps it is worth noting how to prove Eq.\ (\ref{ccnd}) as it is especially simple

\begin{proof}[Proof of point (vii')]
We want to prove $j(\sup D)=\cup_{d\in D} j(d)$. $\supset$: If $a\in \,\twoheaddownarrow d$ for some $d\in D$, then $a\ll d \le \sup D$, so $a\ll \sup D$ giving $a\in \, \twoheaddownarrow \sup D$. $\subset$: If $a\in \, \twoheaddownarrow \sup D$ by Thm.\  \ref{tncv} there is $d\in D$ such that $a\in   \, \twoheaddownarrow d$.
\end{proof}

Some comments, particularly on the analogies with the Geroch, Kronheimer an Penrose completion \cite{geroch72}, are in order.
\begin{itemize}
\item[$(\alpha$)] In the GKP construction for abstract KP-causal spaces the authors pass to a property of {\em past semi-fullness} which is precisely the interpolation property of Eq.\ (\ref{mkcv}) for $I$. However, while in GKP this property is imposed, as $I$ and $J$ are unrelated, here it follows from the very definition of way-below relation and the continuity of the poset (Prop.\ \ref{zzsd}). This makes manifest the advantages of the domain theory approach to completions.
\item[$(\beta$)] Property (ii') determines the way-below relation in ${RI}(B)$, the partial order being just the inclusion. Nicely, point (ii') tells us that this is precisely the criteria adopted by GKP in their future completion \cite{geroch72} \cite[p.\ 623]{penrose79}.
\item[$(\gamma$)] The completion property (i') seems  more satisfactory than the property called future completeness by GKP in \cite[Thm.\ 3.7]{penrose72}.
\item[$(\delta$)] Property (iii') tells us that the way-below relation in the completed space, once restricted to the original space gives back the original way-below relation.
\item[$(\epsilon$)] The property  (iv') tells us that the completed space is still continuous as it admits a basis.
\end{itemize}
In fact, we shall see in a moment that these are not only analogies as the two completions coincide.

Let $(M,g)$ be past-distinguishing and future-reflecting, so that, by Thm.\ \ref{kssw}, $(M,D_p)$ is a continuous poset with way-below relation $I$. Theorem \ref{roundidealprop} on the round ideal completion applies to it. The next result identifies the round ideal completion of $(M,D_p)$ with the classical construction of Geroch, Kronheimer and Penrose. We prove it in the more general framework of topological KP-spaces. It is really a simple consequence of Thm.\  \ref{ipeqv}.

\begin{theorem} \label{ipmatch}
Let $X$ be a past-distinguishing future-reflecting topological KP-space which satisfies all the assumptions listed in Thm.\  \ref{btop} (all satisfied for $X=M$ where $(M,g)$ is a past-distinguishing future-reflecting regular spacetime). A subset of $X$ is a round-ideal of  $(X,D_p)$ if and only if it is of the form $I^-(C)$ where $C$ is a timelike countable chain ($I^-(\gamma)$ for some, possibly future-inextendible, timelike curve $\gamma$ in the regular case), if and only if it is an indecomposable past set (${IP}$).
 Consequently ${RI}(X,D_p)$ is order-isomorphic to ${IP}(X)$, and the points of ${RI}(X,D_p)\setminus j(M)$ are precisely the terminal indecomposable past sets (TIPs) of $X$, that is, the future causal boundary of $X$.
\end{theorem}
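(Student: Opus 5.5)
The plan is to read the statement off Thm.\ \ref{ipeqv}. That theorem, under exactly the present hypotheses, already identifies the family of round-ideals of $(X,D_p)$ with the families $I^-(C)$, $C$ a countable timelike chain, and with the indecomposable past sets. So the first two equivalences are just that theorem, specialised to the clauses (v) $\Leftrightarrow$ (ii) $\Leftrightarrow$ (i).

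For the parenthetical regular case, the steps are as follows. First, Thm.\ \ref{kssw} gives that a past-distinguishing future-reflecting $(M,g)$ yields a continuous poset $(M,D_p)$ with $\ll\,=I$. Second, the manifold topology satisfies (i)--(v) of Thm.\ \ref{btop}. Third, a countable timelike chain can be joined by a timelike curve $\gamma$ with $I^-(\gamma)=I^-(C)$, as in the proof of Prop.\ \ref{vpoi}. Conversely, any timelike curve contains a countable timelike chain $C$ cofinal in it, and then $I^-(C)=I^-(\gamma)$.

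Next I would establish the order isomorphism. Both ${RI}(X,D_p)$ and ${IP}(X)$ are families of subsets of $X$ ordered by inclusion, and by the previous step they are the same family. Hence the identity map is an order isomorphism, and there is nothing further to check. The embedding $j(x)=\twoheaddownarrow x=I^-(x)$ sends $X$ onto the proper IPs, i.e.\ onto the primitive round ideals (last statement of Thm.\ \ref{ipeqv}).

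It then remains to show that the complement ${RI}\setminus j(X)$ is exactly the set of IPs that are not of the form $I^-(x)$, which are by definition the TIPs. The one point to verify is that a TIP cannot coincide with some $I^-(x)$. This is tautological given the definition of a TIP as a non-proper IP. If one instead defines TIPs as $I^-(C)$ with $C$ lacking a future endpoint, then Prop.\ \ref{vpoj} is needed. Its last statement shows that $\sup C$ exists iff $C$ has a future endpoint, and then $I^-(C)=I^-(\sup C)$. Conversely, if $I^-(C)=I^-(x)$ then $x$ is the supremum of $C$ by Prop.\ \ref{mmtb} together with past-distinction, and so $C$ has a future endpoint.

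I expect this reconciliation of definitions of ``terminal'' to be the only mildly delicate point; the rest is bookkeeping on Thm.\ \ref{ipeqv}.
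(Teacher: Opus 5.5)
Your proposal follows the paper's route: the paper's proof is just Thm.~\ref{ipeqv} plus the conversion between countable timelike chains and timelike curves in the regular case, and your identity-map order isomorphism and identification of the TIPs with the non-primitive round ideals only spell out bookkeeping the paper leaves implicit.

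One caution, inherited from the statement's parenthetical rather than from your own reasoning: for a regular $(M,g)$, hypothesis (iii) of Thm.~\ref{btop} is not automatic. If future distinction fails at $p$, then future reflectivity ($\bar J=D_f$) and a limit-curve argument give $r\in J^+(p)\setminus\{p\}$ arbitrarily close to $p$ with $I^+(r)=I^+(p)$, and $(\diamond)$ at $p$ would then force $I^-(r)=I^-(p)$, contradicting past-distinction. So in the regular case it is safer to feed Thm.~\ref{ipeqv} with Thm.~\ref{kssw} and Props.~\ref{jjit}, \ref{vpoi}, which do not use $(\diamond)$.
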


Of course, the advantage of the round ideal completion approach to the spacetime completion/boundary is that one gets for free all the properties listed in Thm.\ \ref{roundidealprop}.

\begin{proof}
Every countable timelike chain can be converted, in the regular case $(M,g)$, to a timelike curve passing through the points (conversely countable points can be extracted from the curve in such  a way that $I^-(C)=I^-(\gamma)$).
\end{proof}

\begin{remark}[Two-sided completions]
Dually, ${RI}(M,D_f)$ realizes the indecomposable future sets ${IF}(M)$ and its added points are the terminal indecomposable future sets (TIFs), i.e.\ the past causal boundary. The full (two-sided) GKP causal boundary, however, is {\em not} simply the union, or a naive gluing along $M$, of these two one-sided completions: a boundary point of the complete construction is represented by a pair $(P,F)\in {IP}(M)\times {IF}(M)$ subject to a compatibility condition, a point of view
carefully investigated by several authors \cite{szabados88,marolf03,flores06b,flores11}
One should not expect ${RI}(M,D_p)$ and ${RI}(M,D_f)$ to combine correctly unless some compatibility is built into the completion from the start. To our knowledge, a bicompletion for bicontinuous posets playing, with respect to the interval/biScott topology, the role that the round ideal completion plays for the Scott topology of a continuous poset has not been isolated in the domain-theory literature.
\end{remark}

\section{Conclusions}

We have carried out the three-fold program announced in the Introduction.

First, we gave a complete, self-contained treatment of the fragment of continuous poset theory needed for causality theory, filling in proofs that \cite{martin06} could only summarize.

Second, we settled the question of which causal relation on a regular spacetime should play the role of the order $\le$. The answer is unambiguous: continuity of $(M,\le)$ with way-below relation $I$ forces $\le\,=D_p$ (Thm.\ \ref{kssw}), a choice available precisely under past-distinction and future-reflectivity; the same analysis identifies bicontinuous posets with causally continuous spacetimes (Thm.\ \ref{bxxd}), and globally hyperbolic posets with globally hyperbolic spacetimes, this time with $\le\,=J$.
The Lawson, interval (biScott), manifold and Alexandrov topologies are all shown to coincide for bicontinuous posets (Thm.\ \ref{dzxd} and Thm.\ \ref{kxqp}).

 Third, the topological KP-space framework of Sec.\ \ref{rjgh} allows one to transport the whole correspondence to low-regularity Lorentzian geometry, uniformly for the two main synthetic proposals in the literature -- the Lorentzian length spaces of Kunzinger and S\"amann and the BMS-Lorentzian metric/length spaces -- under hypotheses (topological lower approximation, a local reflectivity condition, isocausal connectedness of chronologically related points) that are either automatic or already built into the respective definitions.

 We showed that this order-theoretic viewpoint has several advantages: it gives access, essentially for free, to the completion machinery of domain theory. The round-ideal completion of Lawson and Keimel--Lawson, applied to $(M,D_p)$, reproduces exactly the future causal boundary of Geroch, Kronheimer and Penrose (Thm.\ \ref{ipmatch}), with the terminal indecomposable past sets recovered as the added ideal points and, as a bonus, with the chronological relation itself surviving as the restriction of the completion's way-below relation to the original spacetime (Thm.\ \ref{roundidealprop}). The very recent, independently motivated directed completion of Gigli et al.\ \cite{gigli26} arrives to an analogous equivalence.  In their setting we show that while they had to assume a compatibility condition (SC), such assumption  follows from hypotheses that we were already using to establish continuity, by a route that needs neither future-reflectivity nor any form of global or forward completeness (Prop.\ \ref{scgigli}).

 Overall, these results suggest that the continuous-poset approach offers a valuable perspective for elucidating the fundamental structure of spacetime. The peculiar combination ``future distinction and past reflectivity'' in co-con\-tin\-u\-ous posets and the special role of past reflectivity in black hole evaporation, more precisely suggests that this fundamental structure could be that of a co-con\-tin\-u\-ous poset.

A number of questions remain open. A natural one is whether the one-sided completions of this paper can be upgraded to a genuine {\em bicompletion} for bicontinuous posets.
We are not aware of any domain-theoretic construction that, for the interval (biScott) topology, plays the role that the round-ideal completion plays for the Scott topology. We suspect that isolating such a construction would be of independent interest in domain theory, quite apart from its application to causality.

Since the entire development is naturally phrased in terms of the frame of Scott-open sets, it would be interesting to carry it out one level of abstraction further, directly on locales rather than on points, in the spirit of the point-free causal boundary recently proposed by van der Schaaf \cite{vanderschaaf24}; relating that construction to the round-ideal completion of Prop.\ \ref{roundidealprop} is left for future work.

Finally, it would be interesting to incorporate the metrical aspects of the Lorentzian distance—not just causality—into the framework of continuous posets. This can be achieved in a neat and satisfactory manner, and an account will be presented in a forthcoming work.


\begin{thebibliography}{10}

\bibitem{abramsky94}
S.~Abramsky and A.~Jung.
\newblock Domain theory.
\newblock In S.~Abramsky, D.~M. Gabbay, and T.~S.~E. Maibaum, editors, {\em
  Handbook of Logic in Computer Science}, volume~3, pages 1--168. Oxford
  University Press, Oxford, 1994.

\bibitem{beran25}
T.~Beran, J.~Harvey, L.~Napper, and F.~Rott.
\newblock A {T}oponogov globalisation result for {L}orentzian length spaces.
\newblock {\em Mathematische Annalen}, 392:3447--3478, 2025.

\bibitem{budic74}
R.~Budic and R.~K. Sachs.
\newblock Causal boundaries for general relativistic spacetimes.
\newblock {\em J. Math. Phys.}, 15:1302--1309, 1974.

\bibitem{burgos23}
S.~Burgos, J.L. Flores, and J.~Herrera.
\newblock The $c$-completion of {L}orentzian metric spaces.
\newblock {\em Class. Quantum Grav.}, 40:205013, 2023.

\bibitem{burtscher25}
A.~Burtscher and L.~{Garc{\'\i}a--Heveling}.
\newblock Time functions on {L}orentzian length spaces.
\newblock {\em Ann. {H}enri {P}oincar{\'e}}, 26:1533--1572, 2025.

\bibitem{minguzzi25b}
A.~Bykov and E.~Minguzzi.
\newblock Global hyperbolicity and manifold topology from the {L}orentzian
  distance.
\newblock {\em Lett. Math. Phys.}, 116:62, 2026.
\newblock arXiv:2503.04382.

\bibitem{minguzzi24b}
A.~Bykov, E.~Minguzzi, and S.~Suhr.
\newblock Lorentzian metric spaces and {GH}-convergence: the unbounded case.
\newblock {\em Lett. Math. Phys.}, 115:63, 2025.
\newblock arXiv:2412.04311.

\bibitem{ebrahimi14}
N.~Ebrahimi.
\newblock Causally simple spacetimes and domain theory.
\newblock {\em {\,C}ankaya University Journal of Science and Engineering},
  11:1--6, 2014.

\bibitem{ebrahimi15}
N.~Ebrahimi.
\newblock Domain theory in general relativity.
\newblock {\em J. Dyn. Sys. Geom. Theor.}, 13:1--41, 2015.

\bibitem{finster21}
F.~Finster, A.~Much, and K.~Papadopoulos.
\newblock {\em On Global Hyperbolicity of Spacetimes: Some Recent Advances and
  Open Problems}, pages 281--295.
\newblock Springer, Cham, Switzerland, 2021.
\newblock Mathematical Analysis in Interdisciplinary Research, I. N. Parasidis,
  E. Providas and T. M. Rassias eds.

\bibitem{flores06b}
J.~L. Flores.
\newblock The causal boundary of spacetimes revisited.
\newblock {\em Commun. Math. Phys.}, 276:611--643, 2007.

\bibitem{flores11}
J.~L. Flores, J.~Herrera, and M.~S{\'a}nchez.
\newblock On the final definition of the causal boundary and its relation with
  the conformal boundary.
\newblock {\em Adv. Theor. Math. Phys.}, 15:991--1057, 2011.

\bibitem{heveling21}
L.~Garc{\'\i}a-Heveling.
\newblock Causality theory of spacetimes with continuous {L}orentzian metrics
  revisited.
\newblock {\em Class. Quantum Grav.}, 38:145028, 2021.

\bibitem{geroch72}
R.~Geroch, E.~H. Kronheimer, and R.~Penrose.
\newblock Ideal points in spacetime.
\newblock {\em Proc. {R}oy. {S}oc. {L}ond. {A}}, 237:545--567, 1972.

\bibitem{gierz03}
G.~Gierz, K.~H. Hofmann, K.~Keimel, J.~D. Lawson, M.~W. Mislove, and D.~S.
  Scott.
\newblock {\em Continuous lattices and domains}.
\newblock Cambridge {U}niversity {P}ress, 2003.

\bibitem{gigli25}
N.~Gigli.
\newblock Hyperbolic {B}anach spaces {I}.
\newblock {arXiv}:2503.10467, 2025.

\bibitem{gigli26}
N.~Gigli, A.~Ohanyan, M.~Picerni, Z.-F. Xu, and M.~Zanardini.
\newblock Ideal points, directed completion and the case of the maximally
  extended {S}chwarzschild spacetime.
\newblock arXiv:2608.04718, 2026.

\bibitem{grant18}
J.~D.~E. Grant, M.~Kunzinger, and C.~S{\"a}mann.
\newblock Inextendibility of spacetimes and {L}orentzian length spaces.
\newblock {\em Ann. Glob. Anal. Geom.}, 55:133--147, 2019.

\bibitem{harris98}
S.~G. Harris.
\newblock Universality of the future chronological boundary.
\newblock {\em J. Math. Phys.}, 39:5427--5445, 1998.

\bibitem{ake26}
L.~Ak{\'e} Hau, S.~Burgos, and D.~A. Solis.
\newblock The causal structure of the $c$-completion of warped spacetimes.
\newblock {\em Gen. Relativ. Gravit.}, 58:13, 2026.

\bibitem{hawking73}
S.~W. Hawking and G.~F.~R. Ellis.
\newblock {\em The Large Scale Structure of Space-Time}.
\newblock Cambridge {U}niversity {P}ress, Cambridge, 1973.

\bibitem{hawking74}
S.~W. Hawking and R.~K. Sachs.
\newblock Causally continuous spacetimes.
\newblock {\em Commun. Math. Phys.}, 35:287--296, 1974.

\bibitem{keimel09b}
K.~Keimel.
\newblock Bicontinuous domains and some old problems in {D}omain theory.
\newblock {\em Electronic Notes in Theoretical Computer Science}, 257:35--54,
  2009.

\bibitem{keimel09}
K.~Keimel and J.~D. Lawson.
\newblock D-completions and the d-topology.
\newblock {\em Annals of Pure and Applied Logic}, 159(3):292--306, 2009.

\bibitem{kodama79}
H.~Kodama.
\newblock Inevitability of a naked singularity associated with the black hole
  evaporation.
\newblock {\em Prog. Theor. Phys.}, 62:1434--1435, 1979.

\bibitem{kronheimer67}
E.~H. Kronheimer and R.~Penrose.
\newblock On the structure of causal spaces.
\newblock {\em Proc. {C}amb. {P}hil. {S}oc.}, 63:482--501, 1967.

\bibitem{kunzinger18}
M.~Kunzinger and C.~S{\"a}mann.
\newblock Lorentzian length spaces.
\newblock {\em Ann. Global Anal. Geom.}, 54:399--447, 2018.

\bibitem{lawson91}
J.~D. Lawson.
\newblock {\em Order and strongly sober compactifications}, volume Topology and
  Category Theory in Computer Science, pages 171--206.
\newblock Clarendon Press, Oxford, 1991.

\bibitem{lawson97}
J.~D. Lawson.
\newblock The round ideal completion via sobrification.
\newblock {\em Topology Proceedings}, 22 Summer:261--274, 1997.

\bibitem{lesourd19}
M.~Lesourd.
\newblock Causal structure of evaporating black holes.
\newblock {\em Class. Quantum Grav.}, 36:025007, 2019.

\bibitem{malament77}
D.~B. Malament.
\newblock Causal theories of time and the conventionality of simultaneity.
\newblock {\em No\^us}, 11:293--300, 1977.

\bibitem{malament12}
D.~B. Malament.
\newblock {\em Topics in the foundations of general relativity and {N}ewtonian
  gravitation theory}.
\newblock Chicago Lectures in Physics. University of Chicago Press, Chicago,
  IL, 2012.

\bibitem{markowsky76}
G.~Markowsky.
\newblock Chain-complete posets and directed sets with applications.
\newblock {\em Algebra {U}niv.}, 6:53--68, 1976.

\bibitem{marolf03}
D.~Marolf and S.~F. Ross.
\newblock A new recipe for causal completions.
\newblock {\em Class. Quantum Grav.}, 20:4085--4117, 2003.

\bibitem{martin06}
K.~Martin and P.~Panangaden.
\newblock A domain of spacetime intervals in general relativity.
\newblock {\em Commun. Math. Phys.}, 267:563--586, 2006.

\bibitem{mazibuko23}
L.~Mazibuko, D.~Baboolal, and R.~Goswami.
\newblock Causal structure of spacetime and {S}cott topology.
\newblock {\em Afrika Matematika}, 34:94, 2023.

\bibitem{minguzzi07b}
E.~Minguzzi.
\newblock The causal ladder and the strength of {$K$}-causality. {I}.
\newblock {\em Class. Quantum Grav.}, 25:015009, 2008.
\newblock {arXiv}:0708.2070.

\bibitem{minguzzi07e}
E.~Minguzzi.
\newblock Weak distinction and the optimal definition of causal continuity.
\newblock {\em Class. Quantum Grav.}, 25:075015, 2008.
\newblock {arXiv}:0712.0338.

\bibitem{minguzzi08e}
E.~Minguzzi.
\newblock Characterization of some causality conditions through the continuity
  of the {L}orentzian distance.
\newblock {\em J. Geom. Phys.}, 59:827--833, 2009.
\newblock {arXiv}:0810.1879.

\bibitem{minguzzi13d}
E.~Minguzzi.
\newblock Convex neighborhoods for {L}ipschitz connections and sprays.
\newblock {\em Monatsh. Math.}, 177:569--625, 2015.
\newblock {arXiv}:1308.6675.

\bibitem{minguzzi18b}
E.~Minguzzi.
\newblock Lorentzian causality theory.
\newblock {\em Living Rev. Relativ.}, 22:3, 2019.

\bibitem{minguzzi20}
E.~Minguzzi.
\newblock A gravitational collapse singularity theorem consistent with black
  hole evaporation.
\newblock {\em Lett. Math. Phys.}, 110:2383--2396, 2020.

\bibitem{minguzzi26c}
E.~Minguzzi.
\newblock Global hyperbolicity meets order completeness.
\newblock {arXiv:}2608.03476, 2026.

\bibitem{minguzzi26}
E.~Minguzzi.
\newblock Results on lorentzian metric spaces.
\newblock {\em Gen. Relativ. Gravit.}, 58:9, 2026.
\newblock {arXiv:}2510.24423.

\bibitem{minguzzi06c}
E.~Minguzzi and M.~S\'anchez.
\newblock {\em The causal hierarchy of spacetimes}, volume H. Baum, D.
  Alekseevsky (eds.), Recent developments in pseudo-Riemannian geometry of {\em
  {ESI} Lect. Math. {P}hys.}, pages 299--358.
\newblock Eur. Math. Soc. Publ. House, Zurich, 2008.
\newblock arXiv:gr-qc/0609119.

\bibitem{minguzzi22}
E.~Minguzzi and S.~Suhr.
\newblock Lorentzian metric spaces and their {G}romov-{H}ausdorff convergence.
\newblock {\em Lett. Math. Phys.}, 114:73, 2024.
\newblock {arXiv:}2209.14384.

\bibitem{nachbin65}
L.~Nachbin.
\newblock {\em Topology and order}.
\newblock D.\ {V}an {N}ostrand {C}ompany, {I}nc., Princeton, 1965.

\bibitem{penrose72}
R.~Penrose.
\newblock {\em Techniques of Differential Topology in Relativity}.
\newblock Cbms-{N}sf Regional Conference Series in Applied Mathematics. {SIAM},
  Philadelphia, 1972.

\bibitem{penrose79}
R.~Penrose.
\newblock {\em Singularities and time-asymmetry}, volume S. W. Hawking and W.
  Israel (ed.) General relativity: {A}n {E}instein centenary survey, pages
  581--638.
\newblock Cambridge {U}niversity {P}ress, Cambridge, 1979.

\bibitem{sharifzadeh19}
M.~Sharifzadeh and M.~Bahrami~Seif Abad.
\newblock Globally hyperbolic spacetimes as posets.
\newblock {\em Math Phys Anal Geom}, 22:25, 2019.

\bibitem{sorkin96}
R.~D. Sorkin and E.~Woolgar.
\newblock A causal order for spacetimes with {$C^0$} {L}orentzian metrics:
  proof of compactness of the space of causal curves.
\newblock {\em Class. Quantum Grav.}, 13:1971--1993, 1996.

\bibitem{surya19}
S.~Surya.
\newblock The causal set approach to quantum gravity.
\newblock {\em Living Reviews in Relativity}, 22:5, 2019.

\bibitem{szabados88}
L.~B. Szabados.
\newblock Causal boundary for strongly causal spacetimes.
\newblock {\em Class. Quantum Grav.}, 5:121--134, 1988.

\bibitem{vanderschaaf24}
N.~van~der Schaaf.
\newblock {\em Towards Point-Free Spacetimes}.
\newblock PhD thesis, University of Edinburgh, 2024.
\newblock arXiv:2406.15406.

\bibitem{wald84b}
R.~M. Wald.
\newblock {\em Black holes, singularities and predictability}, volume Quantum
  Theory of Gravity. Essays in Honor of the 60th Birthday of {B}ryce {S.}
  {D}e{W}itt, S. M. Christensen (ed.), pages 160--168.
\newblock CRC Press, Boca Raton, FL, 1984.

\bibitem{willard70}
S.~Willard.
\newblock {\em General topology}.
\newblock {Addison-Wesley} {P}ublishing {C}ompany, Reading, 1970.

\bibitem{zhao10}
D.~Zhao and T.~Fan.
\newblock Dcpo-completion of posets.
\newblock {\em Theoretical {C}omputer {S}cience}, 411:2167--2173, 2010.

\end{thebibliography}

\end{document}